\documentclass[11pt]{article}
\usepackage[utf8]{inputenc}
\usepackage{thm-restate}
\usepackage{authblk}
\usepackage{verbatim}
\usepackage[top=1in, bottom=1in, left=1.2in, right=1.2in]{geometry}
\usepackage{enumerate}
\usepackage{hyperref}
\usepackage[para]{threeparttable}
\usepackage{titling}
\hypersetup{
    colorlinks,
    linkcolor={blue},
    citecolor={blue},
    urlcolor={blue}
}

\usepackage{amsmath, amsthm, amssymb}
\usepackage{booktabs}
\usepackage{graphicx}
\usepackage{algorithm}%
\usepackage[kw]{pseudo}
\usepackage{float}
\usepackage{capt-of}
\usepackage{isabelle}
\usepackage{needspace}
\usepackage{longtable}
\isabellestyle{it}
\newcommand{\eps}{\varepsilon}
\newcommand{\bigo}{\mathcal O}
\newcommand{\size}[1]{\left\lvert#1\right\rvert}
\newcommand{\floor}[1]{\left\lfloor#1\right\rfloor}
\newcommand{\ceil}[1]{\left\lceil#1\right\rceil}

\newtheorem{theorem}{Theorem}[section]
\newtheorem{lemma}[theorem]{Lemma}
\theoremstyle{definition}

\DeclareMathOperator{\prob}{\mathcal P}
\DeclareMathOperator{\ld}{ld}
\DeclareMathOperator{\expect}{\mathbb E}
\DeclareMathOperator{\var}{\mathbb V}
\DeclareMathOperator{\poly}{poly}
\newcommand{\isatheory}[1]{\texttt{\detokenize{#1}}}
\newcounter{consts}
\newcommand{\makeconst}[1]{\refstepcounter{consts}C_{\theconsts}\label{#1}}
\newcommand{\const}[1]{C_{\ref{#1}}}
\bibliographystyle{plainurl}
\title{An embarrassingly parallel optimal-space cardinality estimation algorithm}

\author{Emin Karayel\thanks{E-mail: \href{mailto:me@eminkarayel.de}{me@eminkarayel.de}}}
\affil{Department of Informatics, Technische Universität München, Germany}
\predate{}
\postdate{}
\date{}

\begin{document}
\maketitle

\begin{abstract}
In 2020 B\l{}asiok (ACM Trans.\ Algorithms 16(2) 3:1-3:28) constructed an optimal space streaming algorithm for the cardinality estimation problem with the space complexity of $\bigo(\eps^{-2} \ln(\delta^{-1}) + \ln n)$ where $\eps$, $\delta$ and $n$ denote the relative accuracy, failure probability and universe size, respectively.
However, his solution requires the stream to be processed sequentially.
On the other hand, there are algorithms that admit a merge operation; they can be used in a distributed setting, allowing parallel processing of sections of the stream, and are highly relevant for large-scale distributed applications.
The best-known such algorithm, unfortunately, has a space complexity exceeding $\Omega(\ln(\delta^{-1}) (\eps^{-2} \ln \ln n + \ln n))$.
This work presents a new algorithm that improves on the solution by B\l{}asiok, preserving its space complexity, but with the benefit that it admits such a merge operation, thus providing an optimal solution for the problem for both sequential and parallel applications.
Orthogonally, the new algorithm also improves algorithmically on B\l{}asiok's solution (even in the sequential setting) by reducing its implementation complexity and requiring fewer distinct pseudo-random objects.
\end{abstract}
\section{Introduction}
In 1985 Flajolet and Martin~\cite{flajolet1985} introduced a space-efficient streaming algorithm for the estimation of the count of distinct elements in a stream $a_1,...,a_m$ whose elements are from a finite universe $U$.
Their algorithm does not modify the stream, observes each stream element exactly once and its internal state requires space logarithmic in $n=\size{U}$.
However, their solution relies on the model assumption that a given hash function can be treated like a random function selected uniformly from the family of all functions with a fixed domain and range.
Despite the ad-hoc assumption, their work spurred a large number of publications\footnote{Pettie and Wang~\cite[Table 1]{seth2021} summarized a comprehensive list.}, improving the space efficiency and runtime of the algorithm.
In 1999 Alon et al.~\cite{alon1999} identified a solution that avoids the ad-hoc model assumption.
They use $2$-independent families of hash functions, which can be seeded by a logarithmic number of random bits in $\size{U}$ while retaining a restricted set of randomness properties.
Their refined solution was the first rigorous Monte-Carlo algorithm for the problem.
Building on their work, Bar-Yossef et al.\ in 2002~\cite{baryossef2002}, then Kane et al. in 2010~\cite{kane2010} and lastly, B\l{}asiok in 2020~\cite{blasiok2020}\footnote{An earlier version of B\l{}asiok's work was presented in the ACM-SIAM Symposium on Discrete Algorithms in 2018.~\cite{blasiok2018}} developed successively better algorithms achieving a space complexity of $\bigo(\eps^{-2} \ln(\delta^{-1}) + \ln n)$, which is known to be optimal~\cite[Theorem 4.4]{jayram2013}.
\begin{table}[h]%
\begin{threeparttable}[b]
\begin{tabular}{l l l}
\toprule
Year, Author & Space Complexity & Merge \\
\midrule
1981, Flajolet and Martin & $\bigo(\eps^{-2} \ln n)$ for constant $\delta$ \tnote{a} & Yes \\
1999, Alon et al. & $\bigo(\ln \ln n)$ for $\delta = 2 (\eps+1)^{-1}$ & Yes \\
2002, Bar-Yossef et al.\tnote{b} & $\bigo(\ln(\delta^{-1})(\eps^{-2} \ln \ln n + \poly(\ln(\eps^{-1}), \ln \ln n) \ln n))$ \tnote{c} & Yes \\
2010, Kane et al. & $\bigo(\ln(\delta^{-1})(\eps^{-2} + \ln n))$ & No \\
2020, B\l{}asiok & $\bigo(\ln(\delta^{-1})\eps^{-2} + \ln n)$ & No \\
This work & $\bigo(\ln(\delta^{-1})\eps^{-2} + \ln n)$ & Yes \\
\bottomrule
\end{tabular}
\begin{tablenotes}
\item [a] Random oracle model.
\item [b] Algorithm 2 from the publication.
\item [c] The notation $\poly(a,b)$ stands for a term polynomial in $a$ and $b$. 
\end{tablenotes}
\end{threeparttable}
\caption{Important cardinality estimation algorithms.}
\label{tab:algorithms}
\end{table}%
These algorithms return an approximation $Y$ of the number of distinct elements $\size{A}$ (for $A := \{a_1,\ldots,a_m\}$) with relative error $\eps$ and success probability $1 - \delta$, i.e.:
\[
    \prob(\size{Y-\size{A}} \leq \eps \size{A}) \geq 1-\delta
\]
where the probability is only over the internal random coin flips of the algorithm but holds for all inputs.

Unmentioned in the source material is the fact that it is possible to run the older algorithms by Alon et al.\ and Bar-Yossef et al.\ in a parallel mode of operation.
This is due to the fact that the algorithms make the random coin flips only in a first initialization step, proceeding deterministically afterwards and that the processing step for the stream elements is commutative.
For example, if two runs for sequences $a$ and $b$ of the algorithm had been started with the same coin flips, then it is possible to introduce a new operation that merges the final states of the two runs and computes the state that the algorithm would have reached if it had processed the concatenation of the sequences $a$ and $b$ sequentially.
Note that the elements of the sequences are not required to be disjoint.
This enables processing a large stream using multiple processes in parallel.
The processes have to communicate at the beginning and at the end to compute an estimate.
The communication at the beginning is to share random bits, and the communication at the end is to merge the states.
Because there is no need for communication in between, the speed-up is optimal with respect to the number of processes, such algorithms are also called embarrassingly parallel~\cite[Part 1]{foster1995}.
This mode of operation has been called the distributed streams model by Gibbons and Tirthaputra~\cite{gibbons2001}.
Besides the distributed streams model, such a merge operation allows even more varied use cases, for example, during query processing in a Map-Reduce pipeline~\cite{dean2010}\ or as decomposable/distributive aggregate functions within OLAP cubes~\cite{han2012}. Figure~\ref{fig:dist_stream_example} illustrates two possible modes of operation (among many) enabled by a merge function.

\begin{figure}[ht!]
\centering

\includegraphics{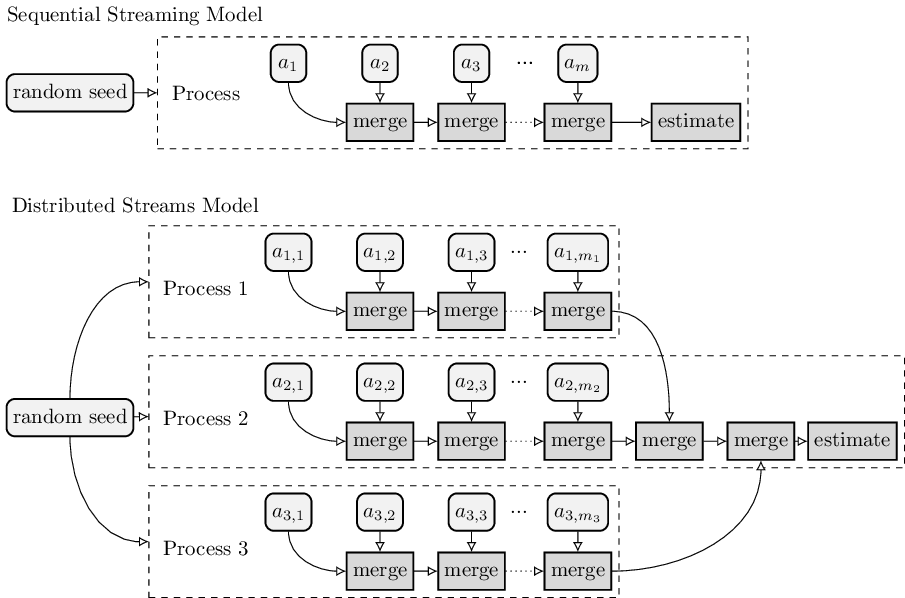}

\caption{Example use cases for cardinality estimation algorithms that support merge.}
\label{fig:dist_stream_example}
\end{figure}

However, an extension with such a merge operation is not possible for the improved algorithms by Kane et al.\ and B\l{}asiok.
This is because part of their correctness proof relies inherently on the assumption of sequential execution, in particular, that the sequence of states is monotonically increasing, which is only valid in the sequential case.
This work introduces a new distributed cardinality estimation algorithm which supports a merge operation with the same per-process space usage as the optimal sequential algorithm by B\l{}asiok: $\bigo(\eps^{-2} \ln(\delta^{-1}) + \ln n)$.
Thus the algorithm in this work has the best possible space complexity in \emph{both} the sequential and distributed streaming model.\footnote{That the complexity is also optimal for the distributed setting is established in Section~\ref{sec:counter_example}.} (Table~\ref{tab:algorithms} provides a summary of the algorithms mentioned here.)%

The main idea was to modify the algorithm by B\l{}asiok into a history-independent algorithm. This means that the algorithm will, given the same coin-flips, reach the same state independent of the order in which the stream elements arrive, or more precisely, independent of the execution tree as long as its nodes contain the same set of elements.
This also means that the success event, i.e., whether an estimate computed from the state has the required accuracy, only depends on the set of distinct stream elements encountered (during the execution tree) and the initial random coin flips.
As a consequence and in contrast to previous work, the correctness proof does not rely on bounds on the probability of certain events over the entire course of the algorithm, but can be established independent of past events.

B\l{}asiok uses a pseudo-random construction based on hash families, expander walks, an extractor based on Parvaresh-Vardy codes~\cite{guruswami2009} and a new sub-sampling strategy~\cite{blasiok2020}[Lem.~39]. I was able to build a simpler stack that only relies on hash families and a new two-stage expander graph construction, for which I believe there may be further applications. To summarize --- the solution presented in this work has two key improvements:
\begin{itemize}
\item Supports the sequential \emph{and} distributed streaming model with optimal space.
\item Requires fewer pseudo-random constructs, i.e., only hash families and expander walks.
\end{itemize}
In the next Section I will briefly discuss the history of the algorithm by B\l{}asiok, because it is best understood as a succession of improvements starting from Alg. 2 by Bar-Yossef et al. In this context it will be possible to introduce the improvements in the new algorithm in more detail. After that, I present new results on expander walks (Section~\ref{sec:chernoff}) needed in the new pseudo-random construction and a self-contained presentation of the new algorithm and its correctness proof (Sections~\ref{sec:alg} and \ref{sec:ext_arb}). 
Concluding with a discussion of its optimality in the distributed setting (Section~\ref{sec:counter_example}), its runtime complexity (Section~\ref{sec:runtime}) and a discussion of open research questions (Section~\ref{sec:conclusion}).

The results obtained in this work have also been formally verified~\cite{Distributed_Distinct_Elements-AFP} using the proof assistant Isabelle~\cite{nipkow2002isabelle}. 
Isabelle has been used to verify many~\cite{afp} advanced results from mathematics (e.g. the prime number theorem~\cite{Prime_Number_Theorem-AFP}) and computer science (e.g. the Cook-Levin theorem~\cite{Cook_Levin-AFP}). 
For readers mainly interested in the actual results, the formalization can be ignored as the theorems and lemmas all contain traditional mathematical proofs.
Nevertheless, Table~\ref{tab:formalization} references the corresponding formalized fact for every lemma and theorem in this work.

\section{Background\label{sec:background}}
The algorithm in this work is a refinement of the solution by B\l{}asiok, Kane et al., and Alg.\ 2 by Bar-Yossef et al. This section introduces them briefly and describes the key improvements in this work, whenever the relevant concepts are introduced.
Bar-Yossef's algorithm relies on the fact that when, $r$ balls are thrown randomly and independently into $b$ bins, the expected number of bins hit by at least one ball will be 
\begin{equation}
\label{eq:hit_balls}
    b \left(1-\left(1-b^{-1}\right)^r\right) \textrm{.}
\end{equation}
If $r$ --- the number of balls --- is close to $b$ it is possible to invert Eq.~\ref{eq:hit_balls} to obtain an estimate for $r$ by counting the number of hit bins. Bar-Yossef et al.\ were able to show that its possible to choose the bins for each ball $k$-wise independently, where $k$ is in $\bigo(\ln(\eps^{-1}))$ instead of completely independently, because the expectation and variance of the number of hit bins coverges exponentially fast to the corresponding values for the idealized case of independently choosing a random bin for each ball.
With that in mind we can imagine an algorithm choosing a hash function $g$ randomly from a $k$-wise independent hash family from the universe $U=[n]$ to $[b]$ that maps each stream element using $g$ into the $b$ bins and tracks whether a bin was `hit' by a stream element.
To be able to deal with a situation where the number of distinct stream elements is much larger than $b$ they introduce a sub-sampling strategy. This works by choosing a second pairwise-independent hash function $f$ with a geometric distribution, i.e., the universe elements are assigned a level, where each universe element has a level $\geq 0$. Only half of them have a level $\geq 1$ and only a quarter of them have a level $\geq 2$, etc.
They then choose to restrict the analysis to the universe elements with a given minimum level --- the sub-sampling threshold --- such that the cardinality of the stream elements in that part of the universe is close to the number of bins. To achieve that they not only store for a bin, whether a stream element was hashed into it, but also the maximal level of the stream elements mapping into it. This is the reason for the $\ln \ln n$ factor in the term $\eps^{-2} \ln \ln n$ in the space complexity of their algorithm.
They also run in parallel a second rough estimation algorithm. At the end they use the rough estimation algorithm to get a ball park for the number of stream elements and determine a sub-sampling threshold $s$, for which the number of stream elements is expected to be approximately the number of bins. They then count the number of bins hit by at least one stream element of that level or higher. By inverting Eq.~\ref{eq:hit_balls} it is then possible to estimate the number of stream elements with the given minimum level $s$. Scaling that number by $2^s$ gives an approximation of the total number of distinct stream elements.
As mentioned in the introduction, the algorithm by Bar-Yossef can be extended with a merge operation, allowing it to be run in the distributed streams model. This essentially works by taking the maximum of the level stored in each bin and relies on the fact that $\max$ is a commutative, associative operation.

Kane et al.\ in 2010 found a solution to avoid the $\ln \ln n$ factor in the term $\eps^{-2} \ln \ln n$ of the space complexity, i.e., they were able to store a constant number of bits on average per bin instead of $\ln \ln n$ bits. To achieve this, instead of estimating the sub-sampling threshold at the end, they obtain a rough estimate for the cardinality of the set during the course of the algorithm. Whenever the rough estimate indicates a large enough sub-sampling threshold, the information in bins with smaller levels is not going to be needed. (Note that the estimate determined by the rough-estimation algorithm is monotone.) Besides dropping the data in bins with a maximal level below the current sub-sampling threshold, which I will refer to as the cut-level in the following, they only store the difference between the level of the element in each bin and the cut-level.
It is then possible to show that the expected number of bits necessary to store the compressed table values is on average $\bigo(1)$. To limit the space usage unconditionally the algorithm keeps track of the space usage for the table and, if it exceeds a constant times the table size, the algorithm will reach an error state deleting all information. To succeed they estimate the probability that the rough estimate is correct at all points during the course of the algorithm. Similarly they show that the space usage will be at most a constant times the bin count with high probability assuming the latter is true. To achieve that they rely on the fact that there are at most $\bigo(\ln n)$ points, where the rough estimate increases, which enables a union bound to verify that the error state will not be reached at or at any point before the estimation step.

However the bound on the number of changes in the rough estimatator is only true, when the algorithm is executed sequentially and their analysis does not extend to the distributed streams model.

This is a point where the solution presented here distinguishes itself: The algorithm in this work does not use a rough estimation algorithm to determine the cut-level. Instead, a cut-level is initialized to $0$ at the beginning and is increased if and only if the space usage would be too high otherwise. The algorithm never enters a failure state, preserving as much information as possible in the available memory. Because of the monotonicity of the values in the bins it is possible to show that the state of the algorithm is history independent. In the estimation step a sub-sampling threshold is determined using the values in the bins directly. This is distinct from the previously know methods, where two distinct data structures are being maintained in parallel. During the analysis it is necessary to take into account that the threshold is not independent of the values in the bins, which requires a slightly modified proof (see Lemma~\ref{le:e_2}). The proof that the cut-level will not be above the sub-sampling threshold works by verifying that the cut-level (resp. sub-sampling threshold) will with high probability be below (resp. above) a certain threshold that is chosen in the proof depending on the cardinality of the set (see Subsection~\ref{sec:overall}).

Another crucial idea introduced by Kane et al. is the use of a two-stage hash function, when mapping the universe elements from $[n]$ to $[b]$. The first hash function is selected from a pairwise hash-family mapping from $[n]$ to $[\tilde b]$ and the second is a $k$-wise independent family from $[\tilde b]$ to $[b]$. The value $\tilde b$ is chosen such that w.h.p. there are no collisions during the application of the first hash function (for the universe elements above the sub-sampling threshold), in that case the two-stage hash function behaves like a single-stage $k$-wise independent hash function from $[n]$ to $[b]$. This is achievable with a choice of $\tilde b \in \bigo(b^2)$ thus requiring fewer random bits than a single stage function from a $k$-wise independent family would.

The algorithm by Kane et al.\ discussed before this paragraph has a space complexity of $\bigo(\eps^{-2} + \ln n)$ for a fixed failure probability ($<\frac{1}{2}$). It is well known that the success probability of such an algorithm can be improved by running $l \in \bigo( \ln(\delta^{-1}) )$ independent copies of the algorithm and taking the median of the estimates of each independent run.~\cite{alon1999}[Thm. 2.1] In summary, this solution, as pointed out by Kane et al., leads to a space complexity of $\bigo( \ln(\delta^{-1}) (\eps^{-2} + \ln n))$. B\l{}asiok observed that this can be further improved: His main technique is to choose seed values of the hash functions using a random walk of length $l$ in an expander graph instead of independently. This reduces the space complexity for the seed values of the hash functions. Similarly he introduces a delta compression scheme for the states of the rough-estimation algorithms [which also have to be duplicated $l$ times]. In a straightforward manner his solution works only for the case where $\eps < (\ln n)^{-1/4}$. In the general case, he needs a more complex pseudo-random construction building on expander walks, Parvaresh-Vardy codes and a sub-sampling step. The main obstacle is the fact that deviation bounds for unbounded functions sampled by a random walk do not exist, even with doubly-exponential tail bounds.

In this work, because there is no distinct rough estimation data structure, the compression of its state is not an issue. However there is still the space usage for the cut-levels: Because maintaining a cut-level for each copy would require too much space, it is necessary to share the cut-level between at least $\ln \ln n$ copies at a time. 
To achieve that I use a two-stage expander construction. This means that each vertex of the first stage expander encodes a walk in a second expander. (Here it is essential that the second expander is regular.) The length of the walk of the second expander is $\bigo(\ln \ln n)$ matching the number of bits required to store a cut-level, while the length of the walk in the first (outer) expander is $\bigo( \ln (\delta^{-1}) (\ln \ln n)^{-1})$. Note that the product is again just $\bigo(\ln(\delta^{-1}))$. The key difference is that the copies in the inner expander have to share the same cut-level, while the outer walk does not, i.e.\ there are $\bigo( \ln (\delta^{-1}) (\ln \ln n)^{-1})$ separate cut-levels. See also Figure~\ref{fig:outer_inner_state}. To work this out the spectral gaps have to be chosen correctly and I introduce a new deviation bound for expander walks (in Section~\ref{sec:chernoff}.) This relies on a result mentioned in Impagliazzo and Kabanets from 2010~\cite{impagliazzo2010}, which shows a Chernoff bound for expander walks in terms of Kullback-Leibler divergence. Before we can detail that out let us first briefly introduce notation.

\section{Notation and Preliminaries}
This section summarizes (mostly standard) notation and concepts used in this work:
General constants are indicated as $C_1, C_2, \cdots$ etc. Their values are fixed throughout this work and are summarized in Table~\ref{tab:consts}.
For $n \in \mathbb N$, let us define $[n] := \{0, 1, \ldots, n-1\}$. The notation $[P]$ for a predicate $P$ denotes the Iverson bracket, i.e., its value is $1$ if the predicate is true and $0$ otherwise.
The notation $\ld x$ (resp. $\ln x$) stands for the logarithm to base $2$ (resp.\ $e$) of $x \in \mathbb R_{> 0}$. The notations $\floor{x}$ and $\ceil{x}$ represent the floor and ceiling functions: $\mathbb R \rightarrow \mathbb Z$.
For a probability space $\Omega$, the notation $\prob_{\omega \sim \Omega}(F(\omega))$ is the probability of the event: $\{\omega | F(\omega)\}$. And $\expect_{\omega \sim \Omega}(f(\omega))$ is the expectation of $f$ if $\omega$ is sampled from the distribution $\Omega$, i.e., $\expect_{\omega \sim \Omega}(f(\omega)) := \int_\Omega f(\omega) \, d \omega$. Similarly, $\var f = \expect (f - \expect f)^2$.
For a finite non-empty set $S$, $U(S)$ is the uniform probability space over $S$, i.e., $\prob(\{x\}) = \size{S}^{-1}$ for all $x \in S$. (Usually, we will abbreviate $U(S)$ with $S$ when it is obvious from the context.) All probability spaces mentioned in this work will be discrete, i.e., measurability will be trivial.

All graphs in this work are finite and are allowed to contain parallel edges and self-loops. For an ordering of the vertices of such a graph, it is possible to associate an adjacency matrix $A = (a_{ij})$, where $a_{ij}$ is the count of the edges between the $i$-th to the $j$-th vertex. We will say it is undirected $d$-regular if the adjacency matrix is symmetric and all its row (or equivalently) column sums are $d$. Such an undirected $d$-regular graph is called a $\lambda$-expander if the second largest absolute eigenvalue of its adjacency matrix is at most $d \lambda$.

Given an expander graph $G$, we denote by $\mathrm{Walk}(G,l)$, the set of walks of length $l$. For a walk $w \in \mathrm{Walk}(G,l)$ we write $w_i$ for the $i$-th vertex and $w_{i,i+1}$ for the edge between the $i$-th and $(i+1)$-th vertex. Because of the presence of parallel edges, two distinct walks may have the same vertex sequence. As a probability space $U(\mathrm{Walk(G,l)})$ corresponds to choosing a random starting vertex and performing an $(l-1)$-step random walk.

\section{Chernoff-type estimates for Expander Walks\label{sec:chernoff}}
The following theorem has been shown implicitly by Impagliazzo and Kabanets~\cite[Th.~10]{impagliazzo2010}:%
\begin{theorem}[Impagliazzo and Kabanets]\label{th:expander_chernoff}
Let $G = (V,E)$ be a $\lambda$-expander graph and $f$ a boolean function on its vertices, i.e.: $f : V \rightarrow \{0,1\}$ s.t. 
$\mu = \expect_{v \sim U(V)} f(v)$, $6 \lambda \leq \mu$ and $2 \lambda < \eps < 1$ then:
\[
    \prob_{w \sim \mathrm{Walk}(G,l)} \left( \textstyle \sum_{i \in [l]} f(w_i) \geq (\mu + \eps) l \right) \leq \exp( - l D( \mu + \eps || \mu + 2 \lambda ) )
\]
\end{theorem}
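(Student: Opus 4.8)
The plan is to reduce the statement to the exponential-moment method and then feed it the expander-walk moment estimate that is the technical core of Impagliazzo and Kabanets~\cite{impagliazzo2010}. Fix $\theta \geq 0$. Since $\exp(\theta \sum_{i \in [l]} f(w_i))$ is non-negative, Markov's inequality gives
\[
    \prob_{w \sim \mathrm{Walk}(G,l)}\!\left( \textstyle\sum_{i \in [l]} f(w_i) \geq (\mu+\eps) l \right) \leq \exp\!\left( -\theta (\mu+\eps) l \right) \cdot \expect_{w \sim \mathrm{Walk}(G,l)}\!\left[ \textstyle\prod_{i \in [l]} \exp(\theta f(w_i)) \right] \text{,}
\]
so the whole task reduces to bounding the exponential moment $M(\theta) := \expect_{w}\bigl[\prod_{i\in[l]} \exp(\theta f(w_i))\bigr]$ uniformly in $\theta \geq 0$ and then choosing $\theta$ optimally.

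For $M(\theta)$ I would use the spectral picture behind Theorem~10 of Impagliazzo and Kabanets. Write $P$ for the symmetric transition operator of $G$, split $P = P_0 + (P - P_0)$ with $P_0$ the projection onto constant functions and $\lVert P - P_0 \rVert \leq \lambda$, and let $D_\theta$ denote multiplication by $v \mapsto \exp(\theta f(v))$. Unfolding the walk expresses $M(\theta)$ as a bilinear form in the $l$-fold alternating product $D_\theta P D_\theta P \cdots D_\theta$, which is bounded by a boundary factor times $\lVert D_\theta^{1/2} P D_\theta^{1/2} \rVert^{\,l-1}$; and since $f$ takes only the values $0$ and $1$, this operator norm is in turn governed by $\expect_{v\sim U(V)}[\exp(\theta f(v))] = 1 - \mu + \mu e^{\theta}$ plus a correction of size $\lambda$ coming from the oscillation of $\exp(\theta f)$. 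Collecting terms, this yields
\[
    M(\theta) \leq \bigl( (1 - \mu - 2\lambda) + (\mu + 2\lambda)\, e^{\theta} \bigr)^{l} \text{,}
\]
i.e.\ $\sum_{i\in[l]} f(w_i)$ has exponential moments no larger than a sum of $l$ independent $\mathrm{Bernoulli}(\mu+2\lambda)$ random variables. The quantitative hypotheses enter exactly here: $2\lambda < \eps$ (together with the harmless assumption $\mu + \eps \leq 1$, without which the left-hand side of the theorem is $0$) guarantees $0 < \mu + 2\lambda < 1$, so the right-hand side genuinely is a Bernoulli exponential moment, while $6\lambda \leq \mu$ is what permits the Impagliazzo--Kabanets estimate to be invoked with the single clean reference parameter $\mu + 2\lambda$ rather than a regime-dependent $\mu + \bigo(\lambda)$.

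It then remains to optimize. Combining the two displays, for every $\theta \geq 0$ the probability in question is at most $\exp\bigl( l\,\bigl[\ln((1-\mu-2\lambda)+(\mu+2\lambda)e^{\theta}) - \theta(\mu+\eps)\bigr]\bigr)$. Since $\eps > 2\lambda$, the target fraction $\mu+\eps$ strictly exceeds the Bernoulli mean $\mu+2\lambda$, so the minimizer $\theta^{\star} = \ln\frac{(\mu+\eps)(1-\mu-2\lambda)}{(\mu+2\lambda)(1-\mu-\eps)}$ is strictly positive and admissible; substituting it, the bracket collapses --- by the classical identity that the Legendre transform of the logarithmic moment generating function of a Bernoulli variable is the binary Kullback--Leibler divergence --- to $-D(\mu+\eps \,\|\, \mu+2\lambda)$, which is precisely the asserted bound.

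The routine parts are Markov's inequality and the Bernoulli log-moment-generating-function computation. The step that needs care --- and where I expect most of the effort to go --- is extracting the exponential-moment inequality from Impagliazzo and Kabanets with exactly the reference parameter $\mu + 2\lambda$; this is what forces the hypotheses $6\lambda \leq \mu$ and $2\lambda < \eps$ into the statement, and it may additionally require passing to the complement set if their estimate is phrased for the lower tail. If a self-contained treatment is preferred, the same moment bound can be reproved directly from the decomposition $P = P_0 + (P-P_0)$ and the spectral bound $\lVert P - P_0 \rVert \leq \lambda$.
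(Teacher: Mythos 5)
First, a framing point: the paper never proves this theorem. It is stated for motivation only, is explicitly excluded from the formalization, and is never used downstream; the text merely records that it follows in Impagliazzo and Kabanets' work by applying their generalized Chernoff bound (their Thm.~1, proved by a random-subset argument, not by exponential moments) to the hitting property of Alon et al., $\prob\left(\bigwedge_{i \in I} w_i \in W\right) \leq (\mu + 2\lambda)^{\size{I}}$, which is valid under $6\lambda \leq \mu$ and is the sole source of both that hypothesis and the reference parameter $\mu + 2\lambda$. Your final step --- that the Legendre transform of the Bernoulli log-moment-generating function collapses the Chernoff optimization to $-D(\mu+\eps \,\|\, \mu+2\lambda)$ --- is correct and standard. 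But the entire content of the theorem sits in the display $M(\theta) \leq \bigl((1-\mu-2\lambda) + (\mu+2\lambda)e^{\theta}\bigr)^l$, and the spectral argument you sketch for it does not deliver it. Decomposing $P = P_0 + (P-P_0)$ gives $\lVert D_\theta^{1/2} P_0 D_\theta^{1/2}\rVert = 1-\mu+\mu e^{\theta}$ and $\lVert D_\theta^{1/2}(P-P_0)D_\theta^{1/2}\rVert \leq \lambda e^{\theta}$, i.e.\ a bound of the form $1-\mu+(\mu+\lambda)e^{\theta}$, which is \emph{not} a Bernoulli moment generating function and is not dominated by $1-(\mu+2\lambda)+(\mu+2\lambda)e^{\theta}$ unless $e^{\theta} \geq 2$; so ``collecting terms'' does not produce your display, and the Legendre transform of what you actually get does not collapse to a clean divergence. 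A further warning sign is that the hypothesis $6\lambda \leq \mu$ plays no role anywhere in your computation, whereas in the actual derivation it is exactly the condition under which Alon et al.'s hitting bound holds with parameter $\mu+2\lambda$.

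The strategy is salvageable, and the repair is cleaner than the operator-norm route: for $\theta \geq 0$ expand
\begin{equation*}
\prod_{i \in [l]} e^{\theta f(w_i)} \;=\; \prod_{i \in [l]} \bigl(1 + (e^{\theta}-1) f(w_i)\bigr) \;=\; \sum_{I \subseteq [l]} (e^{\theta}-1)^{\size{I}} \prod_{i \in I} f(w_i) \textrm{,}
\end{equation*}
take expectations, and apply the hitting property termwise (all coefficients are non-negative) to get $M(\theta) \leq \sum_{I} (e^{\theta}-1)^{\size{I}} (\mu+2\lambda)^{\size{I}} = \bigl(1+(\mu+2\lambda)(e^{\theta}-1)\bigr)^l$, which is exactly your claimed bound. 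With that substitution your proof goes through and constitutes a legitimate alternative to Impagliazzo--Kabanets' random-subset proof of the reduction from subset-intersection bounds to the KL tail bound. But as written, the proposal asserts the key moment bound on the strength of a spectral heuristic that does not yield it, and it fails to identify the hitting property as the indispensable input.
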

Especially, the restriction $\mu \geq 6 \lambda$ in the above result causes technical issues since usually one only has an upper bound for $\mu$. The result follows in Impagliazzo and Kabanets work as a corollary from the application of their main theorem~\cite{impagliazzo2010}[Thm.\ 1] to the hitting property established by Alon et al.~\cite[Th. 4.2]{alon1995} in 1995. It is easy to improve Theorem~\ref{th:expander_chernoff} by using an improved hitting property:
\begin{theorem}[Hitting Property for Expander Walks]
\label{th:expander_walk_hitting}
Let $G = (V,E)$ be a $\lambda$-expander graph and $W \subseteq V$, $I \subseteq [l]$ and let $\mu := \frac{\size{W}}{\size{V}}$ then:
\[
    \prob_{w \sim \mathrm{Walk}(G,l)} \left( \textstyle \bigwedge_{i \in I} w_i \in W \right) \leq (\mu (1-\lambda)  + \lambda)^{\size{I}} \leq (\mu+\lambda)^{\size{I}}
\]
\end{theorem}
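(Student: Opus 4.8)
The plan is to reduce the statement to a standard linear-algebraic estimate involving the projection onto the indicator of $W$ and the walk operator on the expander. First I would fix notation: let $d$ be the degree of $G$, let $A$ be its (symmetric) adjacency matrix, and let $M = \frac{1}{d}A$ be the normalized transition matrix, which is self-adjoint with respect to the uniform inner product, has the all-ones vector as its top eigenvector with eigenvalue $1$, and has all other eigenvalues bounded in absolute value by $\lambda$. Let $P$ be the diagonal $0/1$ projection matrix picking out the coordinates in $W$, so that $\|P\|_{op} = 1$ and $P$ restricted to the orthogonal complement of the uniform vector still has operator norm $1$. The key quantity is that for the uniform starting distribution $\pi = \frac{1}{\size V}\mathbf{1}$, the probability that a random walk $w$ satisfies $w_i \in W$ for all $i \in I$ equals $\mathbf{1}^{\top} \, T_k P \, T_{k-1} P \cdots P \, T_1 \, \pi$ where the $T_j$ are products of $M$ corresponding to the gaps between consecutive indices of $I$ (and $M^0 = \mathrm{Id}$ for adjacent indices), and there are exactly $\size I$ copies of $P$. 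Because $M$ is a stochastic contraction, $\mathbf{1}^{\top} M = \mathbf{1}^{\top}$, so the gap factors before the first $P$ and after the last $P$ can be absorbed harmlessly; the probability is thus bounded by $\| (P M^{\ge 1} P)^{\,\cdot} \cdots \|$-type telescoping expression applied to $\pi$, which I will bound factor by factor.

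The heart of the argument is the one-step bound: for any vector $v$ in the probability simplex (nonnegative, summing to $1$ in the $\ell^1$ sense, i.e. $\mathbf{1}^\top v = 1$), one has $\mathbf{1}^\top P M^{m} P v \le (\mu(1-\lambda) + \lambda)\,\mathbf{1}^\top P v$ for every $m \ge 0$ — and in fact it suffices to prove $\mathbf{1}^\top P M^m u \le (\mu(1-\lambda)+\lambda)\, \mathbf{1}^\top u$ for every nonnegative $u$ with $\mathbf 1^\top u = 1$ supported appropriately, which then iterates. To see this, decompose $u = (\mathbf 1^\top u)\pi + u^\perp$ into its component along the uniform vector and its orthogonal complement $u^\perp$. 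Then $M^m u = (\mathbf 1^\top u)\pi + M^m u^\perp$ with $\|M^m u^\perp\|_2 \le \lambda^m \|u^\perp\|_2 \le \lambda \|u^\perp\|_2$ (for $m \ge 1$; the $m=0$ case gives the trivial contribution). Applying $P$ and pairing with $\mathbf 1$: the first term contributes $(\mathbf 1^\top u)\,\mathbf 1^\top P \pi = (\mathbf 1^\top u)\mu$, while the second contributes at most $\|P\mathbf 1\|_2 \|M^m u^\perp\|_2 \le \sqrt{\size W}\cdot \lambda \|u^\perp\|_2$ by Cauchy–Schwarz. Combining with $\|u^\perp\|_2^2 \le \|u\|_2^2 - (\mathbf 1^\top u)^2/\size V$ and the fact that for the vectors arising in the iteration $u$ is itself of the form $P(\cdots)$, one bounds $\|u^\perp\|_2$ in terms of $\mathbf 1^\top u$ and $\mu$; the arithmetic is arranged so that the total is $\le (\mu(1-\lambda)+\lambda)\,\mathbf 1^\top u$. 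Iterating this bound across all $\size I$ occurrences of $P$ yields the first inequality, and $\mu(1-\lambda)+\lambda = \mu + \lambda(1-\mu) \le \mu + \lambda$ gives the second.

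The main obstacle I anticipate is making the one-step estimate tight enough to produce the clean constant $\mu(1-\lambda) + \lambda$ rather than a lossier bound like $\mu + \lambda$ at each step (which would still give $(\mu+\lambda)^{\size I}$ but not the sharper left-hand form); this requires carefully tracking the $\ell^2$ norm of the "mass" vector through the projections and exploiting that after applying $P$ the vector is supported on $W$, so its uniform component is already at most $\mu$ times its total mass — this is what lets the $(1-\lambda)$ factor multiply $\mu$ rather than $1$. A secondary technical point is handling the case where consecutive indices in $I$ coincide or are adjacent (gap $m = 0$), where $M^0 = \mathrm{Id}$ and one must check $\mathbf 1^\top P\, P v = \mathbf 1^\top P v$ trivially satisfies the recursion since $\mu(1-\lambda)+\lambda \ge \mu \ge$ the ratio one actually needs; but in fact when $m=0$ the factor should just be $1$, so one must verify the product telescopes correctly and the bound $(\mu(1-\lambda)+\lambda)^{\size I}$ still holds — which it does because each genuine projection step can only decrease mass and the claimed per-step factor is what is needed. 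I would also double-check the edge cases $I = \emptyset$ (bound is $1$, trivially true) and $W = V$ (then $\mu = 1$, bound is $1$, again immediate). Throughout, the cleanest route is probably to phrase everything as a statement about the operator $PMP$ (or its variants with higher powers of $M$) acting on the cone of nonnegative vectors equipped with the $\ell^1$ functional $\mathbf 1^\top(\cdot)$, and prove by induction on $\size I$.
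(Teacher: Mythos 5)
Your overall strategy---writing the probability as $\mathbf 1^\top A^{k_0} P A^{k_1} P \cdots P A^{k_{\size{I}}}\pi$, absorbing the leading and trailing gap powers via stationarity, and contracting once per occurrence of $P$---is exactly the paper's route (the paper defers the $I=[l]$ case to Vadhan's Theorem 4.17 and only adds the observation that $A^{k}$ is again the walk matrix of a $\lambda$-expander, so the gaps cost nothing). However, the one-step estimate you propose to iterate is formulated in the wrong norm and is false as written. The inequality $\mathbf 1^\top P M^m P v \le (\mu(1-\lambda)+\lambda)\,\mathbf 1^\top P v$ does not hold for arbitrary nonnegative $v$: take $Pv = e_w$ for a single $w\in W$; the left side is then the probability that an $m$-step walk from $w$ lands in $W$, and your own Cauchy--Schwarz step bounds the fluctuation term only by $\sqrt{\size{W}}\,\lambda\,\lVert u^\perp\rVert_2 \approx \lambda\sqrt{\size{W}}$, which exceeds the needed $\lambda(1-\mu)$ by a factor of order $\sqrt{\size{W}}$. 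There is no bound of the form $\lVert u^\perp\rVert_2 = O(\mathbf 1^\top u/\sqrt{\size{W}})$ for the vectors arising in the iteration, so the induction on the $\ell^1$ functional does not close. The fix (and what Vadhan actually does) is to run the entire iteration in $\ell^2$: prove the operator-norm bound $\lVert PM^mP\rVert_2 \le \mu(1-\lambda)+\lambda$ for $m\ge 1$ and chain it, converting to $\ell^1$ only once, at the two ends, via $\lVert P(\cdots)P\pi\rVert_1 \le \sqrt{\size{W}}\cdot\lVert PM^{k_1}P\rVert_2\cdots\lVert PM^{k_{\size{I}-1}}P\rVert_2\cdot\lVert P\pi\rVert_2$; the two boundary factors $\sqrt{\size{W}}$ and $\lVert P\pi\rVert_2=\sqrt{\size{W}}/\size{V}$ multiply to exactly $\mu \le \mu(1-\lambda)+\lambda$, which supplies the remaining one of the $\size{I}$ factors.

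A second, smaller issue concerns the constant. The orthogonal-decomposition-plus-Cauchy--Schwarz computation you sketch yields at best $\lVert PMP\rVert_2 \le \mu + \lambda\sqrt{1-\mu}$ (the maximum of $\sqrt{\mu}\,a + \lambda\sqrt{1-a^2}$ over $0\le a\le\sqrt{\mu}$ is attained at $a=\sqrt{\mu}$), which suffices for the weaker bound $(\mu+\lambda)^{\size{I}}$ but not for the sharper $(\mu(1-\lambda)+\lambda)^{\size{I}}$ claimed in the theorem. To get the latter you need the decomposition $M=(1-\lambda)J+\lambda E$ with $J$ the uniform averaging matrix and $\lVert E\rVert_2\le 1$ (valid for any $\lambda$-expander, and for $M^m$ with $\lambda^m\le\lambda$ in place of $\lambda$), whence $\lVert PM^mP\rVert_2\le(1-\lambda^m)\lVert PJP\rVert_2+\lambda^m\le \mu(1-\lambda)+\lambda$ using $\lVert PJP\rVert_2=\mu$ and monotonicity of $t\mapsto(1-t)\mu+t$. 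Your treatment of the edge cases ($I=\emptyset$, adjacent indices) and the reduction of the gaps to powers of $M$ are fine and match the paper.
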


\begin{proof}
The above theorem for the case where $I = [l]$ is shown by Vadhan~\cite[Theorem 4.17]{vadhan2012}. It is however possible to extend the proof to the case where $I \subset [l]$.
To understand that, it is important to note that the proof establishes that the wanted probability is the $l_1$ norm of $r := P (A P)^{l-1} u$ where $A$ is the transition matrix of the graph, $P$ is a diagonal matrix whose diagonal entries are in $\{0,1\}$ depending on whether the vertex is in the set $W$ and $u$ is the vector, where each component is $|V|^{-1}$. Note that $u$ represents the stationary distribution of the random walk. If $I$ is a strict subset of $[l]$ then the above term for $r$ needs to be corrected, by removing multiplications by $P$ for the corresponding steps, i.e.:
\[
    r' := A^{k_0} P A^{k_1} P A^{k_2} \ldots A^{k_{\size{I}-1}} P A^{k_{\size{I}}} u   
\]
where $k_i$ is distance between the $i$-th index in $I$ and $i+1$-th index in $I$.\footnote{The $0$-th index in $I$ is defined to be $0$ and the $\size{I}+1$-th index in $I$ is defined to be $l$.} Because $A u = u$ and because the application of $A$ does not increase the $l_1$ norm, it is possible to ignore the first and last term, i.e., it is enough to bound the $l_1$ norm of
\[
    x = P A^{k_1} P A^{k_2} \ldots A^{k_{\size{I}-1}} P u \textrm{.}
\]
This can be regarded as an $\size{I}$-step random walk, where the transition matrix is $A^{k_i}$ for step $i$. (Note that $k_i > 0$ for $1 \leq i \leq \size{I}-1$). The proof of the mentioned theorem~\cite[Thm.\ 4.17]{vadhan2012} still works in this setting if we take into account that $A^k$ is itself the adjacency matrix of a $\lambda$-expander on the same set of vertices. (Indeed it is even a $\lambda^k$-expander.)
\end{proof}%
With the previous result, it is possible to obtain a new, improved version of Theorem~\ref{th:expander_chernoff}:%
\begin{theorem}[Improved version of Theorem~\ref{th:expander_chernoff}]\label{th:expander_chernoff_improved}
Let $G = (V,E)$ be a $\lambda$-expander graph and $f$ a boolean function on its vertices, i.e.: $f : V \rightarrow \{0,1\}$ s.t. 
$\mu = \expect_{v \sim U(V)} f(v)$ and $\mu + \lambda \leq \gamma \leq 1$ then:
\[
    \prob_{w \sim \mathrm{Walk}(G,l)} \left( \textstyle \sum_{i \in [l]} f(w_i) \geq \gamma l \right) \leq \exp \left( - l D\left( \gamma || \mu + \lambda \right) \right)
\]
\end{theorem}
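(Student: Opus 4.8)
The plan is to follow the moment-generating-function template of Impagliazzo and Kabanets~\cite[proof of Thm.~1]{impagliazzo2010}, but to feed it the sharper hitting estimate of Theorem~\ref{th:expander_walk_hitting} in place of the Alon et al.\ bound used there; this substitution is exactly what replaces $\mu + 2\lambda$ by $\mu + \lambda$ and drops the side condition $\mu \geq 6\lambda$ present in Theorem~\ref{th:expander_chernoff}. I would set $W := \{v \in V : f(v) = 1\}$ and $p := \mu + \lambda$, observe that the hypothesis $p \leq \gamma \leq 1$ already forces $p \leq 1$, and dispose of the boundary case $\gamma = 1$ directly from Theorem~\ref{th:expander_walk_hitting} with $I = [l]$ (which gives $\prob_w(\sum_{i\in[l]} f(w_i) = l) \leq p^l = \exp(-l D(1 || p))$), so that one may assume $\gamma < 1$ henceforth.

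The first real step is to bound the exponential moment $\expect_{w}\big[t^{\sum_{i\in[l]} f(w_i)}\big]$ for a parameter $t \geq 1$ to be optimized later. Since $f$ is boolean, $t^{f(w_i)} = 1 + (t-1) f(w_i)$, so expanding the product over $i \in [l]$ yields $t^{\sum_{i} f(w_i)} = \sum_{I \subseteq [l]} (t-1)^{\size{I}} \prod_{i \in I} f(w_i) = \sum_{I \subseteq [l]} (t-1)^{\size{I}} \big[\,\forall i \in I : w_i \in W\,\big]$. Taking expectations, using that every coefficient $(t-1)^{\size{I}}$ is non-negative (here $t \geq 1$ is essential, and the $I = \varnothing$ term contributes the harmless $1 = p^0$), and applying Theorem~\ref{th:expander_walk_hitting} term by term gives
\[
\expect_{w}\Big[t^{\sum_{i\in[l]} f(w_i)}\Big] \;\leq\; \sum_{I \subseteq [l]} \big((t-1)p\big)^{\size{I}} \;=\; \sum_{j=0}^{l} \binom{l}{j} \big((t-1)p\big)^{j} \;=\; \big(1 - p + t p\big)^{l}\textrm{.}
\]

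Next I would apply Markov's inequality to the non-negative random variable $t^{\sum_{i} f(w_i)}$, obtaining $\prob_w\big(\sum_{i} f(w_i) \geq \gamma l\big) \leq t^{-\gamma l}\,(1 - p + t p)^{l} = \big((1 - p + t p)\, t^{-\gamma}\big)^{l}$, and then minimize $(1 - p + t p)\, t^{-\gamma}$ over $t \geq 1$. Differentiating $\ln(1 - p + t p) - \gamma \ln t$ locates the minimizer at $t^{\star} = \frac{\gamma(1-p)}{p(1-\gamma)}$, which satisfies $t^{\star} \geq 1$ precisely because $\gamma \geq p$ — the one place the hypothesis $\mu + \lambda \leq \gamma$ enters. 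A short computation gives $1 - p + t^{\star} p = \frac{1-p}{1-\gamma}$ and hence $(1 - p + t^{\star} p)\,(t^{\star})^{-\gamma} = \big(\tfrac{p}{\gamma}\big)^{\gamma}\big(\tfrac{1-p}{1-\gamma}\big)^{1-\gamma} = \exp(-D(\gamma || p))$, and raising to the $l$-th power yields the claimed bound.

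I expect no serious obstacle here once Theorem~\ref{th:expander_walk_hitting} is available: the remaining work is the elementary but slightly fiddly single-variable optimization of $t$ together with the bookkeeping of degenerate cases ($\gamma = 1$, and $p \in \{0,1\}$, where $D(\gamma || p)$ is $0$ or $\gamma = 1$), and checking that the term-by-term use of the hitting bound is legitimate, which it is for $t \geq 1$. The only point that warrants care is that $\mathrm{Walk}(G,l)$ has the marginal structure required for Theorem~\ref{th:expander_walk_hitting} to apply verbatim, but that was already settled in its proof.
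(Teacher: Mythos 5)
Your proposal is correct and follows essentially the same route as the paper: the paper's proof is a one-liner combining Theorem~\ref{th:expander_walk_hitting} with the generalized Chernoff bound of Impagliazzo and Kabanets (their Thm.~1), and your argument is exactly that combination, except that you re-derive the generalized Chernoff bound inline via the moment-generating-function expansion $t^{\sum_i f(w_i)} = \sum_{I \subseteq [l]} (t-1)^{\size{I}} \prod_{i \in I} f(w_i)$ rather than citing it. The term-by-term application of the hitting bound (valid since $t \geq 1$), the Markov step, and the optimization at $t^{\star} = \gamma(1-p)/(p(1-\gamma))$ are all sound, and your handling of the boundary cases is adequate.
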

\begin{proof}
This follows from Theorem~\ref{th:expander_walk_hitting} and the generalized Chernoff bound \cite{impagliazzo2010}[Thm.\ 1].
\end{proof}
Impagliazzo and Kabanets approximate the divergence $D( \gamma || \mu + \lambda )$ by $2(\gamma - (\mu+\lambda))^2$. In this work, we are interested in the case where $\mu + \lambda \rightarrow 0$, where such an approximation is too weak, so we cannot follow that approach. (Note that $D(\gamma || \mu + \lambda)$ can be arbitrarily large, while $(\gamma - (\mu + \lambda))^2$ is at most $1$.) Instead, we derive a bound of the following form:
\begin{lemma}
\label{le:expander_chernoff}
Let $G = (V,E)$ be a $\lambda$-expander graph and $f$ a boolean function on its vertices, i.e.: $f : V \rightarrow \{0,1\}$ s.t. 
$\mu = \expect_{v \sim U(V)} f(v)$ and $\mu + \lambda \leq \gamma < 1$ then:
\[
    \prob_{w \sim \mathrm{Walk}(G,l)} \left( \textstyle \sum_{i \in [l]} f(w_i) \geq \gamma l \right) \leq \exp \left( - l (\gamma \ln ((\mu + \lambda)^{-1}) - 2e^{-1}) \right)
\]
\end{lemma}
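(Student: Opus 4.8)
The plan is to start from Theorem~\ref{th:expander_chernoff_improved}, which already gives the bound $\exp(-l D(\gamma \| \mu+\lambda))$, and then lower-bound the Kullback–Leibler divergence $D(\gamma \| \mu+\lambda)$ by the quantity $\gamma \ln((\mu+\lambda)^{-1}) - 2e^{-1}$. Writing $p := \gamma$ and $q := \mu+\lambda$, with $0 < q \leq p < 1$, recall
\[
    D(p\|q) = p \ln\frac{p}{q} + (1-p)\ln\frac{1-p}{1-q}.
\]
The first term is $p\ln p - p \ln q = p\ln((\mu+\lambda)^{-1}) + p \ln p$, which already contributes the main term $\gamma\ln((\mu+\lambda)^{-1})$; so it remains to show that the leftover, $p\ln p + (1-p)\ln\frac{1-p}{1-q}$, is bounded below by $-2e^{-1}$.

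For the leftover I would bound the two pieces separately. For $p\ln p$ on $(0,1)$, elementary calculus (derivative $\ln p + 1$, vanishing at $p=e^{-1}$) gives $p\ln p \geq -e^{-1}$. For the term $(1-p)\ln\frac{1-p}{1-q}$, since $q \leq p$ we have $1-q \geq 1-p$, so the logarithm is $\le 0$; writing $t := 1-p \in (0,1)$ and $s := 1-q \in (0,1]$ with $t \le s$, I need a lower bound for $t\ln(t/s)$. The worst case over $s$ is $s=1$, giving $t\ln t \geq -e^{-1}$ by the same one-variable estimate. Adding the two bounds yields leftover $\geq -2e^{-1}$, hence $D(\gamma\|\mu+\lambda) \geq \gamma\ln((\mu+\lambda)^{-1}) - 2e^{-1}$, and substituting into Theorem~\ref{th:expander_chernoff_improved} finishes the proof. (One should double-check the degenerate endpoint cases: if $q = 1$ then $\mu+\lambda = 1$ forces $\gamma = 1$, excluded; if $p \to 1$ the term $(1-p)\ln\frac{1-p}{1-q} \to 0$, consistent with the bound.)

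I do not expect a genuine obstacle here — the whole argument is two applications of the bound $x\ln x \geq -e^{-1}$ on $(0,1)$ plus the sign observation $q \le p$. The only mildly delicate point is making sure the split of $D(p\|q)$ isolates exactly the term $\gamma\ln((\mu+\lambda)^{-1})$ and that the residual genuinely admits a constant (not $q$-dependent) lower bound; the key insight making this work is that $(1-p)\ln\frac{1-p}{1-q} \le 0$ whenever $q \le p$, so this term cannot hurt us beyond the universal $-e^{-1}$ floor coming from its own $t \ln t$ shape. Everything else is routine calculus, so I would present it compactly.
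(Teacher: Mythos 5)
Your proposal is correct and follows essentially the same route as the paper: reduce to Theorem~\ref{th:expander_chernoff_improved}, expand $D(\gamma\|\mu+\lambda)$, and apply $x\ln x \geq -e^{-1}$ to each residual term while noting the cross term $(1-\gamma)\ln((1-(\mu+\lambda))^{-1})$ is nonnegative. No gaps.
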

\begin{proof}
The result follows from Theorem~\ref{th:expander_chernoff_improved} and the inequality:
$D( \gamma || p ) \geq \gamma \ln(p^{-1}) - 1$ for $0 < \gamma < 1$ and $0 < p < 1$.%

To verify that note:
\begin{eqnarray*}
    D( \gamma || p ) & \geq & \gamma \ln \gamma + \gamma \ln(p^{-1}) + (1-\gamma) \ln (1-\gamma) + (1-\gamma) \ln((1-p)^{-1}) \\
    & \geq & -e^{-1} + \gamma \ln (p^{-1}) - e^{-1} + 0 \geq \gamma \ln (p^{-1}) - 2e^{-1}
\end{eqnarray*}
using $x \ln x \geq -e^{-1}$ for $x > 0$ (and $\ln y \geq 0$ if $y \geq 1$).
\end{proof}
An application for the above inequality, where the classic Chernoff-bound by Gillman~\cite{gillman1998} would not be useful, is establishing a failure probability for the repetition of an algorithm that already has a small failure probability. For example, if an algorithm has a failure probability of $\delta^*$, then it is possible to repeat it $\bigo\left(\frac{\ln ( \delta^{-1})}{\ln ((\delta^*)^{-1})}\right)$-times to achieve a failure probability of $\delta$.  (This is done in Section~\ref{sec:ext_arb}.) Another consequence of this is a deviation bound for unbounded functions with a sub-gaussian tail bound:
\begin{restatable}[Deviation Bound]{lemma}{deviationboundstatement}
\label{le:deviation_bound}
Let $G = (V,E)$ be a $\lambda$-expander graph and $f : V \rightarrow \mathbb R_{\geq 0}$ s.t. $\prob_{v \sim U(V)}( f(v) \geq x) \leq \exp(-x (\ln x)^3)$ for $x \geq 20$ and $\lambda \leq \exp(-l (\ln l)^3 )$ then
\[
    \prob_{w \sim \mathrm{Walk}(G,l)} \left( \textstyle \sum_{i \in [l]} f(w_i) \geq \const{c:dev_bound} l \right) \leq \exp(-l)
\]
where $\const{c:dev_bound} := e^2 + e^3 + (e-1) \leq 30$.
\end{restatable}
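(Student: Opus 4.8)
The plan is to decompose the sum $\sum_{i \in [l]} f(w_i)$ over a dyadic partition of the range of $f$, apply the hitting-property-based bound of Lemma~\ref{le:expander_chernoff} to each dyadic scale, and then sum the resulting tail bounds. Concretely, for $j \geq 0$ let $W_j := \{ v \in V : f(v) \geq 2^j \cdot 20 \}$ (the choice of the base level $20$ matches the hypothesis $\prob(f(v) \geq x) \leq \exp(-x(\ln x)^3)$, which is only assumed for $x \geq 20$), and write $\mu_j := \size{W_j}/\size{V} \leq \exp(-2^j 20 (\ln(2^j 20))^3)$. On the event $\sum_{i} f(w_i) \geq \const{c:dev_bound} l$, a pigeonhole/layer-cake argument forces $\sum_{i \in [l]} [w_i \in W_j] \geq \gamma_j l$ for \emph{some} $j$, where $\gamma_j$ is a threshold decaying geometrically in $j$ (roughly $\gamma_j \asymp 2^{-j}$, so that $\sum_j 2^j \cdot 20 \cdot \gamma_j$ is a convergent geometric-type series bounded by $\const{c:dev_bound}$, after also accounting for the contribution of vertices with $f(v) < 20$, which contributes at most $20 l$ — this is where the summand $e^2 + e^3 + (e-1)$ in $\const{c:dev_bound}$ comes from). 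A union bound over $j$ then reduces the problem to bounding $\prob(\sum_i [w_i \in W_j] \geq \gamma_j l)$ for each $j$.

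For each fixed $j$ I would invoke Lemma~\ref{le:expander_chernoff} with $f$ replaced by the indicator of $W_j$, $\mu$ replaced by $\mu_j$, and $\gamma$ replaced by $\gamma_j$; this requires checking $\mu_j + \lambda \leq \gamma_j < 1$. Since $\lambda \leq \exp(-l(\ln l)^3)$ and $\mu_j$ is super-exponentially small, both are $\leq \gamma_j$ for the relevant range of $j$ (one has to be slightly careful for small $j$ where $\gamma_j$ is of constant order, but there $\mu_j$ is still tiny). The lemma then gives
\[
    \prob_{w \sim \mathrm{Walk}(G,l)}\!\left( \textstyle\sum_{i \in [l]} [w_i \in W_j] \geq \gamma_j l \right) \leq \exp\!\left( -l\left(\gamma_j \ln((\mu_j + \lambda)^{-1}) - 2e^{-1}\right) \right).
\]
The key estimate is a lower bound on $\gamma_j \ln((\mu_j+\lambda)^{-1})$. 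Using $\mu_j + \lambda \leq 2\max(\mu_j,\lambda)$ and the two cases — if $\lambda$ dominates, then $\ln((\mu_j+\lambda)^{-1}) \geq l(\ln l)^3 - \ln 2$; if $\mu_j$ dominates, then $\ln((\mu_j+\lambda)^{-1}) \geq 2^j 20 (\ln(2^j 20))^3 - \ln 2$ — one shows in either case the exponent is at most $-l$ after the $j$-sum: the term $j=0$ already gives something like $\exp(-l)$ up to constants, and the terms $j \geq 1$ decay doubly-exponentially in $j$, so $\sum_{j \geq 0} \exp(-l(\gamma_j \ln((\mu_j+\lambda)^{-1}) - 2e^{-1})) \leq \exp(-l)$.

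The main obstacle I anticipate is bookkeeping the constants so that the geometric series for the threshold sequence $(\gamma_j)$ closes with total weight exactly absorbed into $\const{c:dev_bound} = e^2 + e^3 + (e-1)$, while \emph{simultaneously} keeping each $\gamma_j$ large enough that $\gamma_j \ln((\mu_j+\lambda)^{-1}) - 2e^{-1} \geq 1$ (so each layer contributes a genuine $\exp(-l)$ factor) and the cross-$j$ sum of the error terms $2e^{-1}l$ does not overwhelm the gain — this is delicate because Lemma~\ref{le:expander_chernoff} loses an additive $2e^{-1}$ per application, so one cannot afford more than $O(1)$ effective layers without the divergence lower bound $\gamma_j \ln((\mu_j+\lambda)^{-1})$ growing fast enough to compensate. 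The hypothesis $\prob(f(v)\geq x) \leq \exp(-x(\ln x)^3)$ with the cubic-log factor (rather than just a sub-Gaussian or even $\exp(-x\ln x)$ tail) is precisely what makes $\ln(\mu_j^{-1})$ grow fast enough in $j$ for this balancing to succeed, and matching $\lambda \leq \exp(-l(\ln l)^3)$ is what controls the $j=0$ term. The rest is routine: splitting the uniform-measure tail into dyadic shells, a layer-cake rewriting of $\sum_i f(w_i)$, and summing a rapidly convergent series.
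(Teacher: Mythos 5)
Your overall architecture — a layer-cake decomposition of $f$ into level sets, a per-layer tail bound via Lemma~\ref{le:expander_chernoff}, and a union bound over layers — is exactly the paper's strategy (the paper uses base-$e$ shells $\{f \geq e^k\}$ with thresholds $\gamma_k = e^{-k}k^{-2}$ rather than dyadic shells, and the constant $\const{c:dev_bound} = e^2 + e^3 + (e-1)$ arises from telescoping the resulting series, but these are cosmetic differences). However, there is one genuine gap: you propose to invoke Lemma~\ref{le:expander_chernoff} for \emph{every} layer $j$, and you locate the danger at small $j$; the real failure is at \emph{large} $j$. Once the shell scale exceeds roughly $l$, the threshold $\gamma_j l$ drops below $1$ and, more importantly, $\mu_j$ becomes so small that $\mu_j + \lambda$ is dominated by $\lambda$, so $\ln((\mu_j+\lambda)^{-1})$ saturates at about $l(\ln l)^3$ while $\gamma_j \to 0$. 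The exponent $-l(\gamma_j \ln((\mu_j+\lambda)^{-1}) - 2e^{-1})$ then tends to $+2l/e$, i.e., the bound from Lemma~\ref{le:expander_chernoff} becomes trivial (and for $j$ large enough the hypothesis $\mu_j + \lambda \leq \gamma_j$ fails outright, since $\lambda$ need not shrink with $j$). So your claim that the terms with $j \geq 1$ decay doubly-exponentially, making the union bound close, is false in the regime where $\lambda$ dominates $\mu_j$.

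The missing idea is a case split on the layer index: the Chernoff-type bound is used only for shells below scale $l$ (in the paper, $k < \ln l$, where the per-step gain $\gamma_k \ln(\mu_k^{-1}) \approx k$ comfortably beats the additive $2e^{-1}$ loss and supplies the extra $e^{-k}$ needed for the union bound), while for shells at or above scale $l$ one switches to Markov's inequality applied to $\sum_{i}[f(w_i) \geq e^k]$, using that the walk is stationary so this sum has expectation $l\mu_k \leq l\exp(-e^k k^3)$, which is already far below the target $\exp(-l-k+2)$ when $e^k \geq l$. With that repair (and a check that your dyadic bookkeeping with base level $20$ actually fits under $\const{c:dev_bound} \leq 30$, which you leave open), your argument goes through.
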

Note that the class includes sub-gaussian random variables but is even larger. The complete proof is in Appendix~\ref{apx:proof_dev_bound}\@. The proof essentially works by approximating the function $f$ using the Iverson bracket:
$f(x) \leq \Sigma_k [e^k \leq f(x) \leq e^{k+1}] e^{k+1}$ and establishing bounds on the frequency of each bracket. For large $k$ this is established using the Markov inequality, and for small $k$ the previous lemma is used. The result is a stronger version of a lemma established by B\l{}asiok~\cite{blasiok2020}[Lem.~36], and the proof in this work is heavily inspired by his.\footnote{The main distinction is that he relies on a tail bound from Rao~\cite{rao2017}, while this work relies on Lemma~\ref{le:expander_chernoff}.}
\section{Explicit Pseudo-random Constructions}
This section introduces two families of pseudo-random objects used in this work along with an explicit construction for each. 
\subsection{Strongly explicit expander graphs}
For the application in this work, it is necessary to use \emph{strongly explicit expander graphs}. For such a graph, it is possible to sample a random vertex uniformly and compute the edges incident to a given vertex algorithmically, i.e., it is possible to sample a random walk without having to represent the graph in memory. Moreover, sampling a random walk from a $d$-regular graph $G$ with $n$-vertices is possible using a random sample from $[n d^{l-1}]$, i.e., we can map such a number to a walk algorithmically, such that the resulting distribution corresponds to the distribution from $\mathrm{Walk}(G,l)$ --- this allows the previously mentioned two-stage construction.

A possible construction for strongly explicit expander graphs for every vertex count $n$ and spectral bound $\lambda$ is described by Murtagh et al.~\cite{murtagh2019}[Thm. 20, Apx. B]\footnote{Similar results have also been discussed by Goldreich and Alon: Goldreich~\cite{goldreich2020} discusses the same problem but for edge expansion instead of the spectral bound. Alon~\cite{alon2020} constructs near-optimal expander graphs for every size starting from a minimum vertex counts (depending on the degree and discrepancy from optimality).}. Note that the degree $d$ in their construction only grows polynomially with $\lambda^{-1}$, hence $\ln (d(\lambda)) \in \bigo( \ln (\lambda^{-1}))$. We will use the notation $\mathcal E([n], \lambda, l)$ for the sample space of random walks of length $l$ in the described graph over the vertex set $[n]$. The same construction can also be used on arbitrary finite vertex sets $S$, if it is straightforward to map $[\size{S}]$ to $S$ algorithmically. Thus we use the notation $\mathcal E(S, \lambda, l)$ for such $S$. Importantly $\size{\mathcal E(S, \lambda, l)} = \size{S} d(\lambda)^{l-1}$. Thus a walk in such a graph requires $\bigo( \ld \size{S} + l \ld (\lambda^{-1}))$ bits to represent. 

\subsection{Hash Families}

Let us introduce the notation: $\mathcal H_k([2^d], [2^d])$ for the Carter-Wegman hash-family~\cite{wegman1981} from $[2^d]$ to $[2^d]$. (These consist of polynomials of degree less than $k$ over the finite field $\mathrm{GF}(2^d)$).
It is straightforward to see that a hash-family for a domain $[2^d]$ is also a family for a subset of the domain $[n] \subseteq [2^d]$. Similarly it is possible to reduce the size of the range by composing the hash function with a modulo operation: $[2^d] \rightarrow [2^c]$ for $c \leq d$. Hence the previous definition can be extended to hash families with more general domains and ranges, for which we will use the notation: $\mathcal H_k([n],[2^c])$.
Note that $\ld \left(\size{\mathcal H_k([n],[2^c])}\right) \in \bigo( k (c + \ln n) )$.

For our application, we will need a second family with a geometric distribution (as opposed to uniform) on the range, in particular such that $\prob (f(a) \geq k) = 2^{-k}$. This is being used to assign levels to the stream elements. A straightforward method to achieve that is to compose the functions of the hash family $\mathcal H_k([2^d],[2^d])$ with the function that computes the number of trailing zeros of the binary representation of its input $[2^d] \rightarrow [d]$. We denote such a hash family with $\mathcal G_k([2^d])$ where the range is $[d+1]$. Like above, such a hash family is also one for a domain $[n] \subseteq [2^d]$, and hence we can again extend the notation: $\mathcal G_k([n])$.
Note that: $\prob_{f \sim G_k([n])} (f(a) \geq k) = 2^{-k} \textrm{ for all } k \leq \ceil{\ld n}$ and also $\ld \left(\size{\mathcal G_k([n])}\right) \in \bigo( k \ln n)$.

\section{The Algorithm\label{sec:alg}}
Because of all the distinct possible execution models, it is best to present the algorithm as a purely functional data structure with four operations:
\begin{align*}
\mathrm{init}:&\, () \rightarrow \mathrm{seed} &
\mathrm{single}:&\, [n] \rightarrow \mathrm{seed} \rightarrow \mathrm{sketch} \\
\mathrm{merge}:&\, \mathrm{sketch} \rightarrow \mathrm{sketch} \rightarrow \mathrm{sketch} &
\mathrm{estimate}:&\, \mathrm{sketch} \rightarrow \mathbb R
\end{align*}
The $\mathrm{init}$ step should be called only once globally --- it is the only random operation --- its result forms the seed and must be the same during the entire course of the algorithm. The operation $\mathrm{single}$ returns a sketch for a singleton set corresponding to its first argument. The operation $\mathrm{merge}$ computes a sketch representing the union of its input sketches and the operation $\mathrm{estimate}$ returns an estimate for the number of distinct elements for a given sketch. It is possible to introduce another primitive for adding a single element to a sketch, which is equivalent to a $\mathrm{merge}$ and a $\mathrm{single}$ operation, i.e.: $\mathrm{add}(x,\tau,\omega) := \mathrm{merge}(\tau, \mathrm{single}(x, \omega))$. In terms of run-time performance it makes sense to introduce such an operation, especially with an in-place update, but we will not discuss it here. 

The algorithm will be introduced in two successive steps. The first step is a solution that works for $(\ln n)^{-1} \leq \delta < 1$. The sketch requires only $\bigo(\ln(\delta^{-1}) \eps^{-2} + \ln \ln n)$, but the initial coin flips require $\bigo(\ln n + \ln (\eps^{-1})^2+\ln(\delta^{-1})^3)$ bits. For $\delta \geq (\ln n)^{-1}$ this is already optimal. In the second step (Section~\ref{sec:ext_arb}) a black-box vectorization of the previous algorithm will be needed to achieve the optimal $\bigo(\ln(\delta^{-1}) \eps^{-2} + \ln n)$ space usage for all $0 < \delta < 1$. 

For this entire section let us fix a universe size $n > 0$, a relative accuracy $0 < \eps < 1$, a failure probability $(\ln n)^{-1} \leq \delta < 1$ and define:
\begin{align*}
 l & := \ceil{\const{c:eps} \ln (2 \delta^{-1})} & b & := 2^{\ceil{\ld (\const{c:delta} \eps^{-2})}} \\
 k & := \ceil{\const{c:approx_bin_balls_1} \ln b + \const{c:approx_bin_balls_2}} & \lambda & := \min \left(\frac{1}{16}, \exp(-l (\ln l)^3)\right) \\
 \Psi & := \mathcal G_2([n]) \times \mathcal H_2([n],[\const{c:pre_bins} b^2]) \times \mathcal H_k([\const{c:pre_bins} b^2], [b]) & \Omega & := \mathcal E(\Psi,\lambda, l)
\end{align*}%
\begin{algorithm}[h]%
\begin{pseudo*}[indent-mark]
\kw{function} $\mathrm{init}()$ : $\Omega$ \\+
  \kw{return} $\mathrm{random} \, U(\Omega)$ \\-
\\
\kw{function} $\mathrm{compress}((B,q) : \mathcal S)$ : $\mathcal S$ \\+
  \kw{while} $\sum_{i \in [l],j \in [b]} \floor{\ld (B[i,j]+2)} > \const{c:space_bound} b l$ \\+
    $q \gets q + 1$ \\
    $B[i,j] \gets \max(B[i,j]-1,-1)$ \kw{for} $i \in [l], j \in [b]$ \\-
  \kw{return} $(B,q)$ \\-
\\
\kw{function} $\mathrm{single}(x : U, \omega : \Omega)$ : $\mathcal S$ \\+
  $B[i,j] \gets -1$ \\
  \kw{for} $i \in [l]$ \\+
    $B[i, h(g(x))] = f(x)$ where $(f, g, h) = \omega_i$ \\-
  \kw{return} $\mathrm{compress}(B,0)$ \\-
\\
\kw{function} $\mathrm{merge}((B_a,q_a) : \mathcal S, (B_b,q_b) : \mathcal S)$ : $\mathcal S$ \\+
  $q \gets \max (q_a,q_b)$ \\
  $B[i,j] \gets \max (B_a[i,j]+q_a-q, B_b[i,j]+q_b-q)$ \kw{for} $i \in [l], j \in [b]$ \\
  \kw{return} $\mathrm{compress}(B,q)$ \\-
\\
\kw{function} $\mathrm{estimate}((B,q) : \mathcal S)$ : $\mathbb R$ \\+
  \kw{for} $i \in [l]$ \\+
    $s \gets \max(0, \max \{ B[i,j]+q \mid | j \in [b] \} - \ld b + 9)$ \\
    $p \gets \size{\{ j \in [b] | B[i,j] + q \geq s\}}$ \\
    $Y_i \gets 2^{s} \ln (1-p b^{-1}) (\ln (1-b^{-1}))^{-1}$ \\-
  \kw{return} $\mathrm{median} (Y_0,\ldots, Y_{l-1})$
\end{pseudo*}
\caption{Algorithm for $\delta > (\ln n)^{-1}$} 
\label{alg:main}
\end{algorithm}%
The implementation of the operations is presented in Algorithm~\ref{alg:main}. Note that these are functional programs and pass the state as arguments and results; there is no global (mutable) state. The sketch consists of two parts $(B,q)$. The first part is a two-dimensional table of sizes $b$ and $l$. The second part is a single natural number, the cut-off level. The function $\mathrm{compress}$ is an internal operation and is not part of the public API. It increases the cut-off level and decreases the table values if the space usage is too high.

\subsection{History-Independence}
As mentioned in the introduction, this algorithm is history-independent, meaning that given the initial coin flips, it will reach the same state no matter in which permutation or frequency the stream elements are encountered. More precisely, the final state only depends on the set of encountered distinct elements over the execution tree and the initial coin flips, but not the shape of the tree. This is one of the key improvements compared to the solutions by Kane et al.\ and B\l{}asiok. Informally, this is easy to see because the chosen cut-off level is the smallest possible with respect to the size of the values in the bins, and that property is maintained because the values in the bins are monotonically increasing with respect to the set of elements in the execution tree. Nevertheless, let us prove the property more rigorously:

Let $\omega \in \Omega$ be the initial coin flips. Then there is a function $\tau(\omega, A)$ such that following equations hold:
\begin{eqnarray}
    \mathrm{single}(\omega,x) & = & \tau(\omega,\{x\}) \label{eq:single_hi} \\
    \mathrm{merge}(\tau(\omega,A),\tau(\omega,B)) & = & \tau(\omega,A \cup B) \label{eq:merge_hi}
\end{eqnarray}
The function $\tau$ is defined as follows:
\begin{align*}
    \tau_0((f,g,h),A) & := j \rightarrow \max \{ f(a) \mid a \in A \wedge h(g(a)) = j \} \cup \{-1\} \\
    \tau_1(\psi,A,q) & := j \rightarrow \max \{ \tau_0(\psi,A) - q, -1 \} \\
    \tau_2(\omega,A,q) & := (i,j) \rightarrow  \tau_1(\omega_i,A,q)[j] \\ 
    q(\omega,A) & := \min \left\{ q \geq 0 \, \middle | \, \textstyle \sum_{i \in [l], j \in [b]} \floor{ \ld( \tau_2(\omega,A,q)[i,j] + 2 )} \leq \const{c:space_bound} b l \right\} \\
    \tau_3(\omega,A,q) & := (\tau_2(\omega,A,q), q) \\ 
    \tau(\omega,A) & := \tau_3(\omega,A,q(\omega,A))
\end{align*}%
The function $\tau_0$ describes the values in the bins if there were no compression, i.e., when $q=0$. The function $\tau_1$ describes the same for the given cut-off level $q$. Both are with respect to the selected hash functions $\psi = (f,g,h)$.
The function $\tau_2$ represents the state of all tables based on a seed for the expander. The next function $\tau_3$ represents the entire state, which consists of the tables and the cut-off level. The function $q$ represents the actual cut-off level that the algorithm would choose based on the values in the bins. Finally, the full state is described by the function $\tau$ for a given seed $\omega$ and set of elements $A$.
\begin{lemma} \label{le:histind} Equations~\ref{eq:single_hi} and \ref{eq:merge_hi} hold for all $\omega \in \Omega$ and $\emptyset \neq A \subset [n]$.
\end{lemma}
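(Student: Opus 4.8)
The plan is to reduce both equations to a single statement about $\mathrm{compress}$: for every $\omega \in \Omega$, every non-empty $C \subseteq [n]$, and every cut-off $q$ with $0 \le q \le q(\omega, C)$,
\[
  \mathrm{compress}(\tau_2(\omega, C, q),\, q) \;=\; \tau(\omega, C).
\]
Granting this, Equation~\ref{eq:single_hi} is immediate: unfolding the loop in $\mathrm{single}$ shows that the table it builds satisfies $B[i,j] = f(x)$ when $h(g(x)) = j$ and $B[i,j] = -1$ otherwise, where $(f,g,h) = \omega_i$; since the values of $f$ are non-negative, $\max(f(x),-1) = f(x)$, so $B = \tau_2(\omega, \{x\}, 0)$, and as $0 \le q(\omega,\{x\})$ trivially, the displayed fact with $C = \{x\}$, $q = 0$ gives $\mathrm{single}(\omega, x) = \mathrm{compress}(B,0) = \tau(\omega,\{x\})$. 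For Equation~\ref{eq:merge_hi} I would write $(B_a,q_a) = \tau(\omega,A)$ and $(B_b,q_b) = \tau(\omega,B)$, so $q_a = q(\omega,A)$, $B_a = \tau_2(\omega,A,q_a)$, and analogously for $B$; then show that after the assignment in $\mathrm{merge}$ the cut-off $q := \max(q_a,q_b)$ and the table $\tau_2(\omega, A\cup B, q)$ are reached, and that monotonicity of $q(\omega,\cdot)$ yields $q \le q(\omega, A\cup B)$, so the displayed fact with $C = A\cup B$ concludes.

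Before that I would record two elementary algebraic facts. First, $\tau_0$ sends unions to pointwise maxima, $\tau_0(\psi, A\cup B) = \max(\tau_0(\psi,A), \tau_0(\psi,B))$, and is monotone in its set argument; since $\floor{\ld(\cdot + 2)}$ is non-decreasing on $\{-1,0,1,\dots\}$ and $\tau_2(\omega,C,q)[i,j] = \max(\tau_0(\omega_i,C)[j]-q,-1)$ is non-increasing in $q$, the quantity $\sum_{i \in [l], j \in [b]} \floor{\ld(\tau_2(\omega,C,q)[i,j]+2)}$ is non-increasing in $q$ and termwise monotone in $C$. Hence the set $\{q \ge 0 : \sum_{i,j}\floor{\ld(\tau_2(\omega,C,q)[i,j]+2)} \le \const{c:space_bound} b l\}$ is upward closed and non-empty (all bins equal $-1$ once $q$ is large), so $q(\omega,C)$ is its minimum and $A \subseteq C$ implies $q(\omega,A) \le q(\omega,C)$. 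Second, the $-1$ floor commutes with cut-off shifts: for $q \ge q'$ one has $\max(\tau_1(\psi,A,q')[j] + q' - q,\, -1) = \tau_1(\psi,A,q)[j]$, and in particular $\max(\tau_1(\psi,A,q')[j]-1,-1) = \tau_1(\psi,A,q'+1)[j]$.

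Combining these yields the two sub-claims. For the $\mathrm{merge}$ table: entrywise it is $\max(B_a[i,j]+q_a-q,\, B_b[i,j]+q_b-q)$; because the larger of $q_a,q_b$ equals $q$, this value is $\ge -1$, hence it equals $\max\bigl(\max(B_a[i,j]+q_a-q,-1),\, \max(B_b[i,j]+q_b-q,-1)\bigr) = \max(\tau_1(\omega_i,A,q)[j], \tau_1(\omega_i,B,q)[j]) = \max(\tau_0(\omega_i,A)[j]-q,\, \tau_0(\omega_i,B)[j]-q,\, -1) = \tau_1(\omega_i, A\cup B, q)[j]$, i.e.\ the table is $\tau_2(\omega, A\cup B, q)$. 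For $\mathrm{compress}$: a loop invariant shows that at the top of each iteration the state is $(\tau_2(\omega,C,q'), q')$ with $q \le q' \le q(\omega,C)$; by upward closedness and minimality the guard $\sum_{i,j}\floor{\ld(B[i,j]+2)} > \const{c:space_bound} b l$ holds exactly when $q' < q(\omega,C)$, and one iteration maps $(\tau_2(\omega,C,q'),q')$ to $(\tau_2(\omega,C,q'+1),q'+1)$ by the shift identity $\max(\cdot-1,-1)$ applied entrywise; so the loop halts with $q' = q(\omega,C)$ and returns $(\tau_2(\omega,C,q(\omega,C)), q(\omega,C)) = \tau(\omega,C)$.

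The main obstacle is purely bookkeeping: tracking the $-1$ floor correctly through the cut-off renormalisations in $\mathrm{merge}$ and through the $\mathrm{compress}$ loop, and verifying that the set of admissible cut-offs is upward closed so that $\mathrm{compress}$ halts precisely at $q(\omega,C)$. Once the two algebraic facts above are in place, everything else is a direct unfolding of the definitions of $\mathrm{single}$, $\mathrm{merge}$, $\mathrm{compress}$, and $\tau$.
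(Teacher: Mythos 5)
Your proposal is correct and follows essentially the same route as the paper: your key claim $\mathrm{compress}(\tau_2(\omega,C,q),q)=\tau(\omega,C)$ for $0\le q\le q(\omega,C)$, the identification of the merged table with $\tau_2(\omega,A\cup B,\max(q_a,q_b))$, the unfolding of $\mathrm{single}$, and the monotonicity of $q(\omega,\cdot)$ are exactly the paper's facts (i)--(v), which it likewise derives from the semi-lattice and translation-invariance properties of $\max$. You merely spell out in more detail (loop invariant for $\mathrm{compress}$, upward closedness of the admissible cut-off set) what the paper declares to follow elementarily.
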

\begin{proof}
Let us also introduce the algorithms $\mathrm{merge}_1$ and $\mathrm{single}_1$. These are the algorithms $\mathrm{merge}$ and $\mathrm{single}$ but without the final compression step. By definition, we have $\mathrm{merge}(x,y) = \mathrm{compress}(\mathrm{merge}_1(x,y))$ and, similarly, $\mathrm{single}(\omega,x) = \mathrm{compress}(\mathrm{single}_1(\omega,x))$.

The following properties follow elementarily\footnote{The verification relies on the semi-lattice properties of the $\max$ operator, as well as its translation invariance (i.e. $\max (a+c,b+c) = \max (a,b) + c$).} from the definition of $\tau$, $s$ and the algorithms:
\begin{enumerate}[(i)]
\item \label{e:tau_i} $\tau(\omega, A) = \mathrm{compress}(\tau_3(\omega,A,q))$ for all $0 \leq q \leq q(\omega,A)$
\item \label{e:tau_ii} $\tau_3(\omega, A_1 \cup A_2, \max(q(\omega,A_1),q(\omega,A_2))) = \mathrm{merge}_1(\tau(\omega, A_1), \tau(\omega, A_2))$
\item \label{e:tau_iii} $\tau_3(\{x\}, 0) = \mathrm{single}_1(\omega, x)$
\item \label{e:tau_iv} $q(\omega, A_1) \leq q(\omega, A_2)$ if $A_1 \subseteq A_2$
\item \label{e:tau_v} $q(A) \geq 0$
\end{enumerate}
To verify Eq.\ \ref{eq:single_hi} we can use \ref{e:tau_i}, \ref{e:tau_iii} and \ref{e:tau_v} and to verify Eq.\ \ref{eq:merge_hi} we use \ref{e:tau_i}, \ref{e:tau_ii} taking into account that 
$\max(q(\omega, A_1),q(\omega, A_2)) \leq q(\omega, A_1 \cup A_2)$ because of \ref{e:tau_iv}.
\end{proof}

\subsection{Overall Proof\label{sec:overall}}
Because of the argument in the previous section, $\tau(\omega,A)$ will be the state reached after any execution tree over the set $A$ and the initial coin flips, i.e., $\omega \in \Omega$. Hence for the correctness of the algorithm, we only need to show that:
\begin{theorem}\label{th:overall}
Let $\emptyset \neq A \subseteq [n]$ then
$\prob_{\omega \in U(\Omega)}\left( \mathrm{estimate}(\tau(\omega,A)) - \size{A} \geq \eps \size{A}\right) < \delta$.
\end{theorem}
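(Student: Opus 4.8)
The plan is to reduce the correctness of the median estimator to an analysis of a single copy $i$, and then lift it through the expander walk to the median over all $l$ copies. The key random variable is the per-copy estimate $Y_i$, which is computed from the table row $B[i,\cdot]$ and the global cut-level $q$. The first step is to understand what $Y_i$ actually estimates: the bins hit by elements of level at least $s$ are a ``balls into bins'' configuration where the number of balls is the number of distinct elements $a\in A$ with $f_i(a)\ge s$, where $s$ is the sub-sampling threshold chosen in the \texttt{estimate} step. By the balls-into-bins identity (Eq.~\ref{eq:hit_balls}) and the Bar-Yossef-style concentration under $k$-wise independence (with $k\in\bigo(\ln b)$), inverting Eq.~\ref{eq:hit_balls} recovers the number of such elements up to relative error $O(\eps)$, and scaling by $2^s$ recovers $\size A$ up to relative error $\eps$, \emph{provided} three good events hold for copy $i$: (a) the first-stage hash $g_i$ is injective on the (few) relevant elements so the two-stage hash behaves like a single $k$-wise independent hash into $[b]$; (b) the sub-sampling threshold $s$ lands in a range where the number of surviving balls is $\Theta(b)$; and (c) the cut-level $q$ does not destroy information below $s$, i.e.\ $q\le s$.

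Concretely I would proceed as follows. Fix $A$ and let $t^\ast$ be the ``intended'' sub-sampling level, roughly $t^\ast := \max(0,\ld\size A - \ld b + \text{const})$, chosen so that $\expect[\#\{a\in A: f_i(a)\ge t^\ast\}]\approx b/2$ or so. First, show that with probability $\ge 1-\eps^2/\text{const}$ over a single copy, the number of elements of $A$ surviving to level $t^\ast$ (and to nearby levels) is concentrated around its mean — this uses only pairwise independence of $f_i$ (Chebyshev) and is exactly the classical sub-sampling argument. Second, on that event, show the collision-freeness of $g_i$ on the surviving set holds with probability $\ge 1-\eps^2/\text{const}$, using $\const{c:pre_bins}b^2$ as the intermediate range and pairwise independence of $g_i$ (birthday bound). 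Third, on those events, apply the $k$-wise concentration of the hit-bin count (Bar-Yossef et al.) to conclude $\size{Y_i - \size A}\le \eps\size A$ — but here one must be careful, as the paper flags in its reference to Lemma~\ref{le:e_2}: the threshold $s$ actually selected by \texttt{estimate} depends on the table values and hence on $f_i,g_i,h_i$, so it is not independent of the bin contents. The standard fix is to take a union bound over the $O(\ln\ln n)$ possible values of $s$ in the relevant window, establishing concentration simultaneously for every candidate threshold, which costs only a constant factor in the failure probability while removing the dependence issue. Fourth — and this is the genuinely new ingredient — bound the probability that the shared cut-level $q=q(\omega,A)$ exceeds the window around $t^\ast$. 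Since $q$ is the minimal level making the compressed table fit in $\const{c:space_bound}bl$ space, one shows that with high probability over $\omega$ the table at cut-level $t^\ast$ (or the relevant nearby level) already fits, hence $q\le t^\ast$ or so; this is where the deviation bound for expander walks (Lemma~\ref{le:deviation_bound}) enters, controlling $\sum_{i,j}\floor{\ld(B[i,j]+2)}$ summed over the $l$ copies sampled by the walk, because the per-copy table-size is an unbounded-but-subgaussian function of the vertex.

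Having established that each copy $i$ individually has a bad estimate with probability at most some absolute constant $\mu_0<\tfrac12$ (over $\psi\sim U(\Psi)$, i.e.\ over a single vertex of the expander $\Omega=\mathcal E(\Psi,\lambda,l)$), the final step lifts this to the median. Let $B_i(\psi)$ be the indicator that copy $i$'s estimate is off by more than $\eps\size A$ and $f:=[\;\size{Y_i-\size A}>\eps\size A\;]$; then $\expect_{\psi\sim U(\Psi)} f = \mu_0$, and the median of $Y_0,\dots,Y_{l-1}$ fails only if $\sum_{i\in[l]} f(w_i)\ge l/2$ for the walk $w\sim\mathrm{Walk}(\Omega\text{'s graph},l)$. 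Choosing the constants so that $\mu_0+\lambda\le \gamma$ for some $\gamma<\tfrac12$ and $l=\ceil{\const{c:eps}\ln(2\delta^{-1})}$, Theorem~\ref{th:expander_chernoff_improved} (or Lemma~\ref{le:expander_chernoff}) gives $\prob_{w}(\sum f(w_i)\ge\gamma l)\le\exp(-l D(\gamma\|\mu_0+\lambda))<\delta$, which is exactly the claim. The subtlety to handle with care is that the cut-level $q$ is \emph{global} — it is one number shared across all $l$ copies — so the events ``copy $i$ is good'' are not simply functions of the individual vertices $w_i$ in the product structure; the clean way around this is to first condition on the high-probability event $\{q(\omega,A)\le t^\ast+O(1)\}$ (bounded via Lemma~\ref{le:deviation_bound} as above, with failure probability $\le\delta/2$ say), and on that event the goodness of copy $i$ becomes a genuine function $f(w_i)$ of the single vertex, so the expander-walk Chernoff bound applies. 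I expect this decoupling of the shared cut-level from the per-copy analysis — making the good event a function of one vertex at a time so that Theorem~\ref{th:expander_chernoff_improved} is applicable — to be the main obstacle, together with the careful bookkeeping of the $O(\ln\ln n)$ union bound over candidate thresholds $s$ noted in Lemma~\ref{le:e_2}.
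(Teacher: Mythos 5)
Your proposal follows essentially the same route as the paper: the $\delta/2+\delta/2$ split between ``the shared cut-level exceeds a threshold $q_\mathrm{max}$'' (handled via Lemma~\ref{le:deviation_bound} applied to the per-copy space usage, which has the required tail) and ``some copy's estimate is bad'', the per-copy good events (threshold in the right window, concentration of the sub-sampled set, collision-freeness of $g$, balls-into-bins concentration), and the expander-walk Chernoff bound for the median. Two of your steps, however, would not survive being written out literally.

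First, the decoupling of the shared cut-level. You propose to \emph{condition} on $\{q(\omega,A)\le t^\ast+O(1)\}$ and then apply Theorem~\ref{th:expander_chernoff_improved}; but that theorem is a statement about the unconditioned distribution $\mathrm{Walk}(G,l)$, and conditioning on a global event of the walk destroys it. Moreover, even on that event the goodness of copy $i$ under the \emph{actual} $q$ still depends on $q$, hence on all vertices. The paper instead uses a union bound: it bounds $\prob_{\omega}\bigl(\exists q\le q_\mathrm{max}.\ \size{\mathrm{estimate}(\tau_2(\omega,q))-\size{A}}\ge\eps\size{A}\bigr)+\prob_\omega\bigl(q(\omega)>q_\mathrm{max}\bigr)$, where the first event, requiring goodness \emph{simultaneously for every admissible cut-level}, is a genuine per-vertex function. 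The observation that makes this affordable (Lemma~\ref{le:median_single}) is that the estimate is invariant in $q$ for $q\le s(\psi)$ --- the quantities $t_c,s_c,p_c,Y_c$ at cut-level $q$ coincide with those at $q=0$ --- combined with $s(\psi)\ge q_\mathrm{max}$ on $E_1$. You state the requirement ``$q\le s$'' but not the invariance that converts ``bad for the actual $q$'' into a single-vertex event; without it the Chernoff step does not go through.

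Second, the data-dependent threshold $s$. A uniform union bound over ``the $O(\ln\ln n)$ possible values of $s$'' is not the right count and would not yield a constant per-copy failure probability in general: the candidate levels range over $0<t\le t_\mathrm{max}$ with $t_\mathrm{max}=\Theta(\ld\size{A})$, i.e.\ potentially $\Theta(\ln n)$ values. The paper's Lemma~\ref{le:e_2} resolves this with a geometrically weighted union bound (allowed deviation $\sqrt{2^{7+t_\mathrm{max}-t}\var r(\cdot,t)}$ at level $t$, so the Chebyshev failure probabilities sum to $2^{-6}$); alternatively one can first restrict to the $O(1)$-width window for $t(f)$ guaranteed by $E_1$, but one of these devices must be made explicit.
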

Proof: Postponed. This will be shown in two steps: First, we want to establish that the cut-off threshold $q$ will be equal to or smaller than $q_\mathrm{max} := \max(0, \ceil{\ld \size{A}} - \ld b)$ with high probability. And if the latter is true, then the estimate will be within the desired accuracy with high probability. For the second part, we verify that the estimation step will succeed with high probability for all $0 \leq q \leq q_\mathrm{max}$. (This will be because the sub-sampling threshold $s$ in the estimation step will be $\geq q_\mathrm{max}$ with high probability.)

For the remainder of this section, let $\emptyset \neq A \subset [n]$ be fixed and we will usually omit the dependency on $A$. For example, we will write $\tau(\omega)$ instead of $\tau(\omega,A)$.

Formally we can express the decomposition discussed above using the following chain:
\begin{equation}
\label{eq:overall_chain}
\begin{aligned}
& \prob_{\omega \in \Omega}\left( \size{\mathrm{estimate}(\tau(\omega)) - \size{A}} \geq \eps \size{A} \right) \leq \\ 
& \prob_{\omega \in \Omega}\left( \exists q \leq q_\mathrm{max}. \size{\mathrm{estimate}(\tau_2(\omega,q)) - \size{A}} \geq \eps \size{A} \vee q(\omega) > q_\mathrm{max} \right) \leq \\
& \prob_{\omega \in \Omega}\left( \exists q \leq q_\mathrm{max}. \size{\mathrm{estimate}(\tau_2(\omega,q)) - \size{A}} \geq \eps \size{A}\right) + \prob_{\omega \in \Omega}\left( q(\omega) > q_\mathrm{max} \right) \leq \frac{\delta}{2} + \frac{\delta}{2}
\end{aligned}
\end{equation}

The first inequality is the converse of the informal argument from above.\footnote{Algebraically it is more succinct to bound the failure event from above, instead of bounding the success event from below, which means that some informal arguments will be accompanied by their algebraic converse. For example, an argument that event $A$ implies $B$ might be accompanied by $P(\neg B) \leq \prob(\neg A)$.} The second inequality is just the sub-additivity of probabilities. And the third inequality consists of the two goals we have, i.e., the overall proof can be split into two parts:

\begin{itemize}
\item $\prob_{\omega \in \Omega}\left( q(\omega) > q_\mathrm{max} \right) \leq \frac{\delta}{2}$
\item $\prob_{\omega \in \Omega}\left( \exists q \leq q_\mathrm{max}. \size{\mathrm{estimate}(\tau_2(\omega,q)) - \size{A}} \geq \eps \size{A}\right) \leq \frac{\delta}{2}$
\end{itemize}
The first will be shown in the following subsection, and the next in the subsequent one.
Subsection~\ref{sec:space_usage} discusses the space usage of the algorithm.
\subsection{Cut-off Level\label{sec:cut_level}}
This subsection proves that the cut-off level will be smaller than or equal to $q_\mathrm{max}$. This is the part where the tail estimate for sub-gaussian random variables over expander walks (Lemma~\ref{le:deviation_bound}) is applied:
\begin{lemma} \label{le:cut_level}
    $\prob_{\omega \in \Omega}\left( q(\omega) > q_\mathrm{max} \right) \leq \frac{\delta}{2}$
\end{lemma}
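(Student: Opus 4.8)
\textbf{Proof proposal for Lemma~\ref{le:cut_level}.}

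The plan is to reduce the event $q(\omega) > q_\mathrm{max}$ to a statement about the space usage of the (uncompressed) table at cut-off level $q_\mathrm{max}$, and then bound the tail of that space usage using the deviation bound for expander walks (Lemma~\ref{le:deviation_bound}). First I would observe, using property~(\ref{e:tau_iv}) and the minimality defining $q(\omega)$, that $q(\omega) > q_\mathrm{max}$ is equivalent to $\sum_{i \in [l], j \in [b]} \floor{\ld(\tau_2(\omega,q_\mathrm{max})[i,j] + 2)} > \const{c:space_bound} b l$, i.e.\ the table still fails the space test even after cutting off $q_\mathrm{max}$ levels. So it suffices to bound the probability of this event. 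The left-hand side decomposes as a sum over the $l$ copies: writing $Z_i(\omega) := \sum_{j \in [b]} \floor{\ld(\tau_1(\omega_i, A, q_\mathrm{max})[j] + 2)}$, the event becomes $\sum_{i \in [l]} Z_i(\omega) > \const{c:space_bound} b l$, and this is exactly the form $\sum_{i \in [l]} f(w_i) \geq c\, l$ to which Lemma~\ref{le:deviation_bound} applies, with $f(\psi) := Z(\psi)/b$ the per-copy space usage (normalised by $b$) of a single table seeded by $\psi \in \Psi$, and $w$ the expander walk $\omega \sim \mathcal E(\Psi,\lambda,l) = \Omega$.

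The crux is therefore to verify the two hypotheses of Lemma~\ref{le:deviation_bound} for this $f$. The spectral condition $\lambda \leq \exp(-l(\ln l)^3)$ holds by the very definition of $\lambda$ in the parameter block, so that is immediate. The substantial part is the sub-gaussian-type tail bound $\prob_{\psi \sim U(\Psi)}(f(\psi) \geq x) \leq \exp(-x(\ln x)^3)$ for $x \geq 20$ (after which Lemma~\ref{le:deviation_bound} gives failure probability $\leq \exp(-l) \leq \delta/2$, the last inequality by $l = \ceil{\const{c:eps}\ln(2\delta^{-1})} \geq \ln(2\delta^{-1})$). To get this I would argue as follows. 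For a single seed $\psi = (f',g,h)$, only universe elements of $A$ with level $\geq q_\mathrm{max}$ can leave a value $\geq 0$ in the cut-off table, and by the choice $q_\mathrm{max} = \max(0, \ceil{\ld\size{A}} - \ld b)$ the expected number of such elements is $\Theta(b)$; the $2$-independence of $g = \mathcal G_2([n])$ gives that with high probability this count is $O(b)$. Each surviving element contributes to one bin, and the stored value in that bin is the max of the (shifted) levels landing there, so $\floor{\ld(\text{value}+2)}$ is at most logarithmic in that bin's surviving-element levels. Summing $\floor{\ld(\cdot+2)}$ over bins is dominated (up to constants) by the number of surviving elements plus a $\sum \ld(\text{level gap})$ term, and a large deviation of $f(\psi)$ forces either an atypically large number of surviving elements (controlled by $g$ being $2$-independent, giving a polynomial tail, which I would need to sharpen — see below) or an atypically tall level among them (controlled by the geometric distribution of $f'$, giving a genuinely fast tail).

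I expect the main obstacle to be squeezing a \emph{$\exp(-x(\ln x)^3)$} tail — rather than merely an inverse-polynomial tail — out of the $2$-independent sub-sampling hash $g$. Pairwise independence alone only yields Chebyshev-type $O(x^{-2})$ tails on the count of surviving elements, which is far too weak. The resolution must exploit that a large space usage does not require a large \emph{count} so much as a large \emph{per-bin contribution}: since each bin stores only a max of levels and $\floor{\ld(v+2)}$ grows only logarithmically in $v$, to make $\sum_j \floor{\ld(\cdot)}$ as large as $x b$ one needs roughly $\exp(\Theta(x))$-many bins to each hold an element of level $\Theta(x)$ above the cut, and the geometric tail $\prob(f'(a) \geq q_\mathrm{max} + t) = 2^{-t}$ makes that doubly unlikely; the bookkeeping that converts this heuristic into the stated $\exp(-x(\ln x)^3)$ bound (choosing the right split between ``many bins'' and ``tall level'' and invoking $k$-wise independence of $h$ where needed) is the delicate calculation, and I would isolate it as a separate claim before feeding its conclusion into Lemma~\ref{le:deviation_bound}.
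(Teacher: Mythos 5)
Your reduction of the event $q(\omega) > q_\mathrm{max}$ to the failure of the space test at cut-off $q_\mathrm{max}$, and the plan to apply Lemma~\ref{le:deviation_bound} to the normalised per-copy space usage $X(\psi)/b$ with $X(\psi) := \sum_{j\in[b]}\floor{\ld(\tau_1(\psi,q_\mathrm{max})[j]+2)}$, is exactly the paper's architecture, and your verification of the spectral hypothesis is correct. The gap is that you leave the one substantive step --- the tail bound $\prob_{\psi\sim\Psi}\left(X(\psi)/b - O(1) \geq x\right) \leq \exp(-x(\ln x)^3)$ --- as an isolated unproven claim, and the route you sketch toward it is off target. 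In particular, no independence property of $g$ or $h$ is needed here, and there is no need to control the \emph{count} of surviving elements at all; your worry about extracting a fast tail from the $2$-independent sub-sampling hash is a non-issue.

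The actual calculation is short. For any $c\in\mathbb N$ one has the pointwise bound $\floor{\ld(v+2)} \leq (c+2) + \max(v-2^c,0)$ for $v \geq -1$, so $X(\psi) \leq (c+2)b + \sum_{j\in[b]}\max(\tau_0(\psi)[j]-q_\mathrm{max}-2^c,0)$. Since each bin stores a maximum over the elements hashed into it, the sum over bins of these nonnegative excesses is at most the sum over \emph{all} $a\in A$ of $\max(f(a)-q_\mathrm{max}-2^c,0)$ --- this step discards $g$ and $h$ entirely. Markov's inequality together with the geometric tail $\expect\max(f(a)-q_\mathrm{max}-2^c,0)\leq 2^{-q_\mathrm{max}}2^{-2^c}$ and $\size{A}\,2^{-q_\mathrm{max}}\leq b$ then gives $\prob(X(\psi)\geq (c+3)b)\leq 2^{-2^c}$, a doubly-exponential tail in $c$ that dominates $\exp(-x(\ln x)^3)$ for $x\geq 20$. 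Your heuristic that a space usage of $xb$ requires ``$\exp(\Theta(x))$-many bins each holding an element of level $\Theta(x)$ above the cut'' also mislocates the mechanism: there are only $b$ bins and the code length is logarithmic, so an average code length of $x$ forces bin values of order $2^x$, i.e.\ level excesses of $2^{\Theta(x)}$, and a single first-moment bound over the at most $b\,2^{q_\mathrm{max}}$ relevant elements already yields a $2^{-2^{\Theta(x)}}$ tail; no case split between ``many bins'' and ``tall levels'' and no appeal to $k$-wise independence of $h$ is required. As it stands the proposal is a correct outline whose core estimate is missing.
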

\begin{proof}
Let us make a few preliminary observations:
\begin{equation}
    \label{eq:cut_level_1}
    \floor{\ld(x+2)} \leq \ld(x+2) \leq (c+2)+ \max(x-2^c,0) \textrm{ for } (-1) \leq x \in \mathbb R \textrm{ and } c \in \mathbb N \textrm{.}
\end{equation}
This can be verified using case distinction over $x \geq 2^c+2$.
\begin{equation}
    \label{eq:cut_level_2}
    \expect_{f \sim \mathcal G_2([n])} \max (f(a) - q_\mathrm{max} - 2^c, 0) \leq 2^{-q_\mathrm{max}} 2^{-2^c} \textrm{ for all } a \in [n] \textrm{ and } c \in \mathbb N
\end{equation}
Note that this relies on the fact $f$ is geometrically distributed.
\begin{equation}
    \label{eq:cut_level_3}
\size{A} b^{-1} 2^{-q_\mathrm{max}} \leq 1
\end{equation}
This follows from the definition of $q_\mathrm{max}$ via case distinction.

To establish the result, we should take into account that $q(\omega)$ is the smallest cut-off level $q$ fulfilling the inequality:
$\sum_{i \in [l], j \in [b]} \floor{\ld( \tau_2(\omega,q)[i,j] + 2 )} \leq \const{c:space_bound} b l$.
In particular, if the inequality is true for $q_\mathrm{max}$, then we can conclude that $q(\omega)$ is at most $q_\mathrm{max}$, i.e.:
\begin{equation}\label{eq:cut_level_main}
\prob_{\omega \in \Omega}\left( q(\omega) > q_\mathrm{max}\right) = \prob_{\omega \in \Omega} \left( \sum_{i \in [l], j \in [b]} \floor{\ld( \tau_2(\omega,q_\mathrm{max})[i,j] + 2 )} > \const{c:space_bound} b l \right)
\end{equation}

Let us introduce the random variable $X$ over the seed space $\Psi$. It describes the space usage of a single column of the table $B$:
\[
    X(\psi) := \sum_{j \in [b]} \floor{ \ld( \tau_1(\psi,q_\mathrm{max})[j] + 2)}
\]
Which can be approximated using Eq.~\ref{eq:cut_level_1} as follows:
\begin{equation*}
    X(\psi) \leq \sum_{j \in [b]} c+2 + \max ( \tau_1(\psi,q_\mathrm{max})[j] - 2^c, 0) = \sum_{j \in [b]} c+2 + \max (\tau_0(\psi)[j] - q_\mathrm{max} - 2^c, 0 )
\end{equation*}
for all $0 \leq c \in \mathbb N$. Hence:
\begin{align*}
    & \prob_{\psi \sim \Psi} \left( X(\psi) \geq (c+3) b \right) \leq \prob_{\psi \sim \Psi} \left( \textstyle \sum_{j \in [b]} \max (\tau_0(\psi)[j] - q_\mathrm{max} - 2^c, 0 ) \geq b \right) \leq \\
    & \prob_{(f,g,h) \sim \Psi} \left( \textstyle \sum_{j \in [b]} \max \{ f(a) - q_\mathrm{max} - 2^c \mid a \in A \wedge h(g(a)) = j \} \cup \{0\} \geq b \right) \leq \\
    & \prob_{(f,g,h) \sim \Psi} \left( \textstyle \sum_{a \in A} \max (f(a) - q_\mathrm{max} - 2^c, 0) \geq b \right) \leq \\
    & b^{-1} \sum_{a \in A} \expect_{(f,g,h) \sim \Psi} \max (f(a) - q_\mathrm{max} - 2^c, 0) \leq  
    b^{-1} \size{A} 2^{-q_\mathrm{max}} 2^{-2^c} \leq 2^{-2^c}
\end{align*}
where the third and second-last inequality follow from Eq.~\ref{eq:cut_level_2} and \ref{eq:cut_level_3}.
It is straightforward to conclude from the latter that for \emph{all} $20 \leq x \in \mathbb R$:
\begin{align*}
    \prob_{\psi \sim \Psi} \left( \frac{X(\psi)}{b} - 3 \geq x \right) & \leq \prob_{\psi \sim \Psi} \left( X(\psi) \geq b(\lfloor x \rfloor+3) \right) \leq \exp ( -2^{\lfloor x \rfloor} \ln 2) \leq e^{-x (\ln x)^3}
\end{align*}
Hence, it is possible to apply Lemma~\ref{le:deviation_bound} on the random variables $b^{-1} X(\psi) - 3$ obtaining:
\[
    \prob_{\omega \in \Omega} \left( \textstyle \sum_{i \in [l]} b^{-1} X(h(\omega,i)) - 3 \geq \const{c:dev_bound} l \right) \leq \exp(-l) \leq \frac{\delta}{2}
\] 
This lemma now follows using $\const{c:space_bound} \geq \const{c:dev_bound} + 3$ and that $\sum_{i \in [l]} X(h(\omega,i)) \leq \const{c:space_bound} b l$ implies $q(\omega) \leq q_\mathrm{max}$ as discussed at the beginning of the proof (Eq.~\ref{eq:cut_level_main}).
\end{proof}

\subsection{Accuracy}
Let us introduce the random variables:
\begin{align*}
    t(f) & := \max \{ f(a) \mid a \in A \} - \ld b + 9 &
    s(f) & := \max(0, t(f)) \\
    p(f,g,h) & := \size{\{ j \in [b] \mid \tau_1((f,g,h), 0)[j] \geq s(f) \}} &
    Y(f,g,h) & := 2^{s(f)} \rho^{-1}(p(f,g,h))
\end{align*}
where $\rho(x) := b (1-(1-b^{-1})^x)$ --- the expected number of hit bins when $x$ balls are thrown into $b$ bins. (See also Figure~\ref{fig:rho}). Note that the definitions $t$, $p$ and $Y$ correspond to the terms within the loop in the $\mathrm{estimate}$ function under the condition that the approximation threshold $q$ is $0$.
In particular: $\mathrm{estimate}(\tau_3(\omega,0)) = \mathrm{median}_{i \in [l]} Y(\omega_i)$ for $\omega \in \Omega$.
\begin{figure}[t]
\centering
\includegraphics{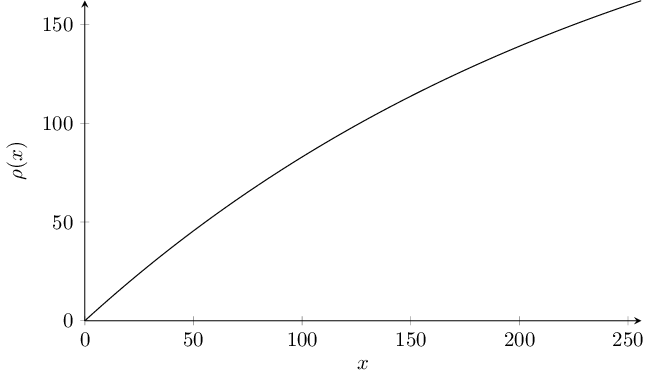}
\caption{Plot for $\rho(x)$ in the case $b=256$.}\label{fig:rho}
\end{figure}%
Moreover, we denote by $R(f)$ the set of elements in $A$ whose level is above the sub-sampling threshold, i.e.:
$R(f) := \{ a \in A \mid f(a) \geq s(f) \}$.
The objective is to show that the individual estimates obtained in the loop in the $\mathrm{estimate}$ function (assuming $q=0$) have the right accuracy and that the threshold $s \geq q_\mathrm{max}$ with high probability, i.e.:
\begin{equation}
\label{eq:median_single}
    \prob_{\psi \sim \Psi}\left( \size{Y(\psi) - \size{A}} > \eps \size{A} \vee s(f) < q_\mathrm{max} \right) \leq \frac{1}{16}
\end{equation}
In Lemma~\ref{le:median_single} this will be generalized to $0 \leq q \leq q_\mathrm{max}$.
To be able to establish a bound on the above event, we need to check the likelihood of the following $4$ events:
\begin{itemize}
\item The computed sub-sampling threshold $s(f)$ is approximately $\ld (\size{A})$.
\item The size of the sub-sampled elements $R(f)$ is a good approximation of $2^{-s(f)} \size{A}$.
\item There is no collision during the application of $g$ on the sub-sampled elements $R(f)$.
\item The count of elements above the sub-sampling threshold in the table is close to the expected number $\rho(R(f))$ (taking collisions due to the application of $h$ into account).
\end{itemize}
Then it will be possible to conclude that one of the above must fail if the approximation is incorrect. More formally:
\begin{align*}
    E_1(\psi) & :\leftrightarrow 2^{-16} b \leq 2^{-t(f)} \size{A} \leq 2^{-1} b &
    E_2(\psi) & :\leftrightarrow \size{\size{R(f)} - 2^{-s(f)} \size{A}} \leq \textstyle \frac{\eps}{3} 2^{-s(f)} \size{A} \\
    E_3(\psi) & :\leftrightarrow \forall a \neq b \in R(f) . g(a) \neq g(b) &
    E_4(\psi) & :\leftrightarrow \size{p(\psi) - \rho(\size{R(f)})} \leq \textstyle \frac{\eps}{12} \size{R(f)}
\end{align*}
for $\psi = (f,g,h) \in \Psi$. The goal is to show all four events happen simultaneously w.h.p.: 
\begin{equation}
\label{eq:median_single_pre}
    \prob_{\psi \sim \Psi} ( \neg E_1(\psi) \vee \neg E_2(\psi) \vee \neg E_3(\psi) \vee \neg E_4(\psi) ) \leq \frac{1}{16}
\end{equation}
A first idea might be to establish the above by showing separately that:
$\prob_{\psi \sim \Psi} ( \neg E_i(\psi) ) \leq 2^{-6}$ for each $i \in \{1,\ldots,4\}$. However this does not work and the actual strategy is to establish bounds on
$\prob_{\psi \sim \Psi} \left( \bigwedge_{j < i} E_j(\psi) \wedge \neg E_i(\psi) \right) \leq 2^{-6}$ for each $i \in \{1,\ldots,4\}$.
Note that the latter still implies Equation~\ref{eq:median_single_pre}. Let us start with the $i=1$ case:
\begin{lemma} \label{le:e_1}
$\prob_{\psi \in \Psi} ( \neg E_1(\psi) ) \leq 2^{-6}$
\end{lemma}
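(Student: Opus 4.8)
The plan is to unwind the definition of $E_1$ and show that each of the two inequalities bounding $2^{-t(f)}\size{A}$ holds except with small probability, then apply a union bound. Recall $t(f) = M(f) - \ld b + 9$ where $M(f) := \max\{f(a) \mid a \in A\}$, so that $2^{-t(f)}\size{A} = 2^{-M(f)} \size{A} b\, 2^{-9}$. Hence $\neg E_1(\psi)$ says either $2^{-M(f)}\size{A} < 2^{-16}\cdot 2^{-9} = 2^{-25}$ (equivalently $2^{M(f)} > 2^{25}\size{A}$, i.e. $M(f)$ is too large) or $2^{-M(f)}\size{A} > 2^{-10}$ (equivalently $2^{M(f)} < 2^{10}\size{A}$, i.e. $M(f)$ is too small). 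So the task reduces to a two-sided tail bound on the maximum level $M(f)$ of the elements of $A$ under the geometric hash family $\mathcal G_2([n])$.

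For the upper tail ($M(f)$ too large): by a union bound over $a \in A$ together with the marginal property $\prob_{f \sim \mathcal G_2([n])}(f(a) \geq k) = 2^{-k}$ (valid for $k \leq \ceil{\ld n}$, which covers the relevant range since $\size{A} \leq n$), we get $\prob(M(f) \geq k) \leq \size{A} 2^{-k}$; choosing $k$ just above $\ld(\size{A}) + \text{const}$ makes this at most, say, $2^{-7}$. For the lower tail ($M(f)$ too small, i.e. every element has a low level): here I would use the pairwise independence of $\mathcal G_2([n])$ via a second-moment (Chebyshev / Paley--Zygmund) argument on $N_k(f) := \size{\{a \in A \mid f(a) \geq k\}}$. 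We have $\expect N_k = \size{A} 2^{-k}$ and, by pairwise independence, $\var N_k \leq \expect N_k$, so for a suitable $k$ with $\expect N_k$ a large enough constant, $\prob(N_k = 0) \leq \var N_k / (\expect N_k)^2 \leq 1/\expect N_k$ is a small constant. Then $N_k \geq 1$ forces $M(f) \geq k$, giving the lower bound on $M(f)$ with the complementary small probability. Summing the two tail probabilities and checking that the chosen constants $2^{-16}$ and $2^{-1}$ in the definition of $E_1$ leave enough slack (this is where the somewhat generous window $[2^{-16}b, 2^{-1}b]$ is used — roughly $25$ doublings of room between the two thresholds, minus the $\pm 9$ offset), we obtain $\prob(\neg E_1) \leq 2^{-7} + 2^{-7} = 2^{-6}$.

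The routine part is the bookkeeping with the explicit numeric constants; the only genuinely delicate point is the lower-tail argument, since with full independence the event $\{M(f) < k\}$ would have probability $(1 - 2^{-k})^{\size{A}} \approx \exp(-\size{A}2^{-k})$, but with only pairwise independence we must settle for the weaker Chebyshev-type bound $O(2^{k}/\size{A})$ — this is why the window in $E_1$ must be a constant number of bits wide rather than tight, and one has to verify that $k$ can be chosen inside the allowed range $k \leq \ceil{\ld n}$ so that the marginal identity for $\mathcal G_2([n])$ applies. No dependence on $g$ or $h$ enters, so $E_1$ depends on $f$ alone, which is why this case can be handled in isolation before conditioning on $E_1$ in the later lemmas.
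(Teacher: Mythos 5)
Your proposal is correct and follows essentially the same route as the paper: the paper's proof simply invokes the argument of Alon et al.'s Proposition 2.3, which is exactly your two-sided tail bound on $X(f)=\max\{f(a)\mid a\in A\}$ — union bound with $\prob(f(a)\geq k)=2^{-k}$ for the upper tail, Chebyshev via pairwise independence for the lower tail — combined as $2^{-7}+2^{-7}=2^{-6}$. One bookkeeping slip: when you move the factor $2^{-9}$ across the inequality the thresholds become $2^{-M(f)}\size{A}<2^{-7}$ and $2^{-M(f)}\size{A}>2^{8}$ (not $2^{-25}$ and $2^{-10}$), so the admissible window for $X(f)$ is $\ld\size{A}-8\leq X(f)\leq \ld\size{A}+7$, i.e., a $15$-bit window giving exactly $k=7$ on each side.
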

\begin{proof}
For $X(f) = \max \{ f(a) \mid a \in A \}$ it is possible to show:
\begin{align*}
    \prob_{(f,g,h) \sim \Psi} & \left( X(f) < \ld(\size{A}) - k - 1 \right) \leq 2^{-k} &
    \prob_{(f,g,h) \sim \Psi} & \left( X(f) > \ld(\size{A}) + k \right) \leq 2^{-k} 
\end{align*}
using the proof for the $F_0$ algorithm by Alon et al.~\cite{alon1999}[Proposition~2.3].
The desired result follows taking $k=7$ and that $t(f) = X(f) - \ld b + 9$.
\end{proof}

The following lemma is the interesting part of the proof in this subsection. In previous work, the sub-sampling threshold is obtained using a separate parallel algorithm, which has the benefit that it is straightforward to verify that $\size{R(f)}$ approximates $2^{-s} \size{A}$. The drawback is, of course, additional algorithmic complexity and an additional independent hash function. However, in the solution presented here, the threshold is determined from the data to be sub-sampled itself, which means it is not possible to assume independence. The solution to the problem is to show that $\size{R(f)}$ approximates $2^{-s} \size{A}$ with high probability for \emph{all} possible values $s(f)$ assuming $E_1$.

\begin{lemma} \label{le:e_2}
$L := \prob_{\psi \sim \Psi} ( E_1(\psi) \wedge \neg E_2(\psi) ) \leq 2^{-6}$
\end{lemma}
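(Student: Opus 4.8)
The plan is to bound $L$ by conditioning on the value of $s(f)$ and, for each fixed candidate value $s$, to estimate the probability that $\size{R(f)}$ deviates from $2^{-s}\size{A}$ by more than the allowed amount, then union-bound over the relevant range of $s$. The event $E_1(\psi)$ restricts $t(f)$ (hence $s(f)$) to a range of width roughly $16$, so that $s(f)$ can take at most $\bigo(1)$ distinct values once $E_1$ holds; this keeps the union bound cheap. Concretely, I would first observe that on $E_1$ one has $2^{-16} b \le 2^{-t(f)}\size{A} \le 2^{-1} b$, so $s(f) = \max(0,t(f))$ lies in an interval $[s_{\min}, s_{\max}]$ with $s_{\max} - s_{\min} \le 17$ or so, and for each such value $s$ we have $2^{-s}\size{A} \ge c\, b$ for an explicit constant $c$ (coming from the $2^{-16}b$ lower bound together with the $+9$ in the definition of $t$). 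That lower bound on the ``true'' sub-sampled count $2^{-s}\size{A}$ is what makes the concentration work.

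Next, for a fixed level $s$, note that $\size{R_s}:=\size{\{a\in A \mid f(a)\ge s\}}$ is a sum of indicator variables $[f(a)\ge s]$ over $a\in A$, which under $f\sim\mathcal G_2([n])$ are pairwise independent (since $\mathcal G_2$ is $2$-wise independent) each with mean $2^{-s}$ (for $s$ in the valid range). Hence $\expect \size{R_s} = 2^{-s}\size{A}$ and $\var \size{R_s} \le 2^{-s}\size{A}$. By Chebyshev,
\[
\prob\!\left( \size{ \size{R_s} - 2^{-s}\size{A} } > \tfrac{\eps}{3} 2^{-s}\size{A} \right) \le \frac{9}{\eps^2\, 2^{-s}\size{A}} \le \frac{9}{\eps^2 c\, b}.
\]
Since $b = 2^{\ceil{\ld(\const{c:delta}\eps^{-2})}} \ge \const{c:delta}\eps^{-2}$, this is at most $9/(c\,\const{c:delta})$, which can be driven below $2^{-6}/18$ (or any target) by choosing the constant $\const{c:delta}$ large enough. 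Then I would sum over the at most $18$ values of $s$ compatible with $E_1$: on the event $E_1(\psi)\wedge\neg E_2(\psi)$, the actual threshold $s(f)$ equals one of these values and the corresponding deviation occurs, so
\[
L \le \sum_{s = s_{\min}}^{s_{\max}} \prob\!\left( s(f)=s \wedge \size{\size{R_s} - 2^{-s}\size{A}} > \tfrac{\eps}{3} 2^{-s}\size{A}\right) \le 18 \cdot \frac{9}{\eps^2 c\, b} \le 2^{-6}.
\]
Note only the marginal distribution of $f$ is used; $g$ and $h$ play no role, so integrating them out is harmless.

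The main obstacle is handling the dependence between the threshold $s(f)$ and the set $R(f)$ being counted: one cannot simply apply Chebyshev to $\size{R(f)}$ directly because $s(f)$ is itself a function of $f$. The resolution — and the crux of the argument — is the ``for all $s$'' trick flagged in the paper: decouple by proving the concentration statement simultaneously for every fixed candidate level $s$ (a deterministic, $f$-independent family of events), and then use $E_1$ to confine $s(f)$ to an $\bigo(1)$-size set so the union bound over candidates costs only a constant factor, absorbed into $\const{c:delta}$. A secondary technical point to get right is that $\prob_{f}(f(a)\ge s) = 2^{-s}$ requires $s \le \ceil{\ld n}$, which must be checked to hold throughout the range of $s$ allowed by $E_1$ (it does, since $2^{-s}\size{A}\ge c\,b \ge 1$ forces $s \le \ld\size{A} \le \ld n$); and that pairwise independence genuinely suffices for the variance bound, which it does because $\var\big(\sum_a [f(a)\ge s]\big) = \sum_a \var([f(a)\ge s]) \le \sum_a \expect[f(a)\ge s] = 2^{-s}\size{A}$ under $2$-wise independence.
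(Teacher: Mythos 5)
Your proposal follows the same core strategy as the paper: decouple the data-dependent threshold by proving the Chebyshev concentration statement for every \emph{fixed} candidate level simultaneously, and then use $E_1$ to localize $s(f)$. The only real difference is in how the union bound is organized. The paper union-bounds over \emph{all} levels $1 \leq t \leq t_\mathrm{max}$ (where $t_\mathrm{max}$ is the largest $t$ with $2^{-t}\size{A} \geq 2^{-16}b$), but with a geometrically weighted deviation threshold $\sqrt{2^{7+t_\mathrm{max}-t}\var r(f,t)}$, so the per-level failure probabilities $2^{-7-t_\mathrm{max}+t}$ sum to at most $2^{-6}$; it only uses the \emph{upper} bound on $t(f)$ from $E_1$. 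You instead use both sides of $E_1$ to confine $s(f)$ to a window of $\bigo(1)$ values and apply a flat union bound. Both are valid, but your flat bound loses a factor of roughly $18$: with the paper's actual constant $\const{c:delta} = 3^2 2^{23}$, each Chebyshev term is about $2^{-7}$ and $18 \cdot 2^{-7} > 2^{-6}$, so your chain of inequalities does not close for the constants as declared. You correctly note this can be fixed by enlarging $\const{c:delta}$ (harmless asymptotically), but to prove the lemma as stated you would either need that larger constant or the paper's geometric weighting, which is exactly what lets the smaller constant suffice. Your side remarks (pairwise independence sufficing for the variance bound, and the check that $s \leq \ceil{\ld n}$ in the relevant range) are correct and match what the formalization has to verify.
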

\begin{proof}
Let $r(f,t) := \size{ \{ a \in A \mid f(a) \geq t \} }$ and $t_\mathrm{max}$ be maximal, s.t.\ $2^{-16} b \leq 2^{-t_\mathrm{max}} \size{A}$.
Then $2^7 \leq \frac{\eps^2}{9} 2^{-16} b \leq \frac{\eps^2}{9} 2^{-t_\mathrm{max}} \size{A}$.
Hence: $2^{7+t_\mathrm{max}-t} \leq \frac{\eps^2}{9} 2^{-t} \size{A} = \frac{\eps^2}{9} \expect r(f,t)$.
Thus:
\[
    2^{7+t_\mathrm{max}-t} \var r(f,t) \leq 2^{7+t_\mathrm{max}-t} \expect r(f,t) \leq \frac{\eps^2}{9} (\expect r(f,t))^2
\]
for all $0 < t \leq t_\mathrm{max}$. (This may be a void statement if $t_\mathrm{max} \leq 0$.)
Hence:
\begin{align*}
    & \prob_{(f,g,h) \in \Psi}\left( \exists t. 0 < t \leq t_\mathrm{max} \wedge \size{r(f,t) - \expect r(\cdot,t)} > \frac{\eps}{3} \expect r(\cdot,t) \right) \leq \\
    & \sum_{t=1}^{t_\mathrm{max}} \prob_{(f,g,h) \in \Psi}\left( \size{r(f,t) - \expect r(\cdot,t)} > \sqrt{ 2^{7+t_\mathrm{max}-t} \var r(f,t)} \right) \leq 
     \sum_{t=1}^{t_\mathrm{max}} 2^{-7-t_\mathrm{max}+t} \leq 2^{-6}
\end{align*}
Note that the predicate $E_2(\psi)$ is always true if $s(f) = 0$ because, in that case, there is no sub-sampling, i.e., $\size{R(f)} = \size{A}$. On the other hand if $s(f) > 0$, then $s(f) = t(f) \leq t_\mathrm{max}$ assuming $E_1(\psi)$.
Hence:
\begin{eqnarray*}
    L & \leq & \prob_{(f,g,h)} \left( s(f) > 0 \wedge E_1(f,g,h) \wedge \neg E_2(f,g,h) \right) \\
    & \leq & \prob_{(f,g,h)} \left( 0 < t(f) \leq t_\mathrm{max} \wedge \size{\size{R(f)} - 2^{-t(f)} \size{A}} > \textstyle \frac{\eps}{3} 2^{-t(f)} \size{A} \right) \\
    & \leq & \prob_{(f,g,h)} \left( 0 < t(f) \leq t_\mathrm{max} \wedge \size{r(f,t(f)) - 2^{-t(f)} \size{A}} > \textstyle \frac{\eps}{3} 2^{-t(f)} \size{A} \right) \leq 2^{-6}
\end{eqnarray*}
where the last step follows from the previous equation.
\end{proof}%
\begin{equation}\label{eq:B_bound}
\textrm{Note that: }  E_1(f,g,h) \wedge E_2(f,g,h) \rightarrow \size{R(f)} \leq \frac{2}{3} b \textrm{ for } (f,g,h) \in \Psi
\end{equation}%
\begin{lemma} \label{le:e_3}
$L := \prob_{\psi \sim \Psi} ( E_1(\psi) \wedge E_2(\psi) \wedge \neg E_3(\psi) ) \leq 2^{-6}$
\end{lemma}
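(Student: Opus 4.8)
We want to bound the probability that $E_1$ and $E_2$ hold but $E_3$ fails, i.e., there is a collision $g(a)=g(b)$ for two distinct elements $a,b\in R(f)$. The key observation is that the events $E_1$ and $E_2$ together control the size of $R(f)$: by Eq.~\ref{eq:B_bound} they imply $\size{R(f)}\leq \frac23 b$. Moreover, crucially, the set $R(f)$ depends only on the hash function $f$ (through $s(f)$ and the level assignment), and is entirely independent of the pairwise-independent hash function $g$, which is drawn from $\mathcal H_2([n],[\const{c:pre_bins}b^2])$. This independence is exactly what lets us condition on $f$ and then apply a union bound over pairs using only the randomness of $g$.

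The plan is as follows. First I would condition on a fixed value of $f$ (equivalently, on a fixed set $R=R(f)$) for which $E_1(f,g,h)\wedge E_2(f,g,h)$ can hold; by Eq.~\ref{eq:B_bound} any such $R$ satisfies $\size{R}\leq \frac23 b$. Since $g$ is pairwise independent with range $[\const{c:pre_bins}b^2]$, for any fixed pair $a\neq b$ we have $\prob_g(g(a)=g(b))\leq (\const{c:pre_bins}b^2)^{-1}$. Then a union bound over the at most $\binom{\size{R}}{2}\leq \frac12 (\frac23 b)^2 = \frac29 b^2$ pairs gives
\[
    \prob_{g}\left( \exists a\neq b\in R.\ g(a)=g(b) \right) \leq \frac{2}{9}b^2 \cdot \frac{1}{\const{c:pre_bins}b^2} = \frac{2}{9\,\const{c:pre_bins}} \textrm{.}
\]
Choosing the constant $\const{c:pre_bins}$ large enough — concretely $\const{c:pre_bins}\geq \frac{2\cdot 64}{9}$, i.e., any $\const{c:pre_bins}\geq 15$ suffices, and the table of constants will have it at least this large — makes this at most $2^{-6}$. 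Finally, since this bound holds uniformly over all admissible $f$, averaging over $f$ (and over $h$, which is irrelevant here) yields $L\leq 2^{-6}$, using that $E_1\wedge E_2\wedge\neg E_3$ is contained in the event that, for the realized $f$, a $g$-collision occurs within $R(f)$.

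The only subtlety — and the step deserving the most care — is the independence claim: one must check that $R(f)$ genuinely depends only on the first coordinate $f$ of $\psi=(f,g,h)$, so that conditioning on $f$ leaves $g$ uniformly pairwise independent. This is immediate from the definitions of $t(f)$, $s(f)$, and $R(f)=\{a\in A\mid f(a)\geq s(f)\}$, all of which reference only $f$. With that in hand, the rest is the routine union-bound calculation above, and the choice of $\const{c:pre_bins}$ is what absorbs the constant factor $\frac29$ and the target $2^{-6}$.
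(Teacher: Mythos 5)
Your proposal is correct and follows essentially the same route as the paper: condition on $f$ (noting $R(f)$ depends only on $f$), apply pairwise independence of $g$ and a union bound over pairs, and absorb the constant via $\const{c:pre_bins}$. The only cosmetic difference is that the paper weakens Eq.~\ref{eq:B_bound} to $\size{R(f)}\leq b$ and computes $\frac{b(b-1)}{2\const{c:pre_bins}b^2}\leq\frac{1}{2\const{c:pre_bins}}=2^{-6}$ with $\const{c:pre_bins}=2^5$, whereas you keep the tighter $\frac23 b$; both yield the bound.
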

\begin{proof}
Using Eq.~\ref{eq:B_bound} we can conclude:
\begin{eqnarray*}
    L & \leq & \prob_{(f,g,h) \sim \Psi} \left( \size{R(f)} \leq b \wedge (\exists a < b \in R(f). g(a) = g(b)) \right) \\
     & \leq & \int_{\mathcal G_2([n])} [\size{R(f)} \leq b] \prob_{g \sim \mathcal H_2([n], [\const{c:pre_bins} b^2])}(\exists a < b \in R(f). g(a) = g(b)) \, d f \\
     & \leq & \int_{\mathcal G_2([n])} [\size{R(f)} \leq b] \sum_{a < b \in R(f)} \prob_{g \sim \mathcal H_2([n], [\const{c:pre_bins} b^2])}(g(a) = g(b)) \, d f \\
     & \leq & \int_{\mathcal G_2([n])} \frac{b (b-1)} {2 \const{c:pre_bins} b^2} \, d f \leq \frac{1}{2 \const{c:pre_bins}} = 2^{-6} \textrm{.\qedhere}
\end{eqnarray*}%
\end{proof}%
\begin{lemma} \label{le:e_4}
$L := \prob_{\psi \sim \Psi} ( E_1(\psi) \wedge E_2(\psi) \wedge E_3(\psi) \wedge \neg E_4(\psi) ) \leq 2^{-6}$
\end{lemma}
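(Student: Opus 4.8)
The plan is to peel off the randomness of $f$ first and then of $g$, reducing the claim to a purely combinatorial balls-into-bins estimate for the $k$-wise independent family $\mathcal H_k([\const{c:pre_bins}b^2],[b])$. Note that $E_1$ and $E_2$ depend only on $f$, that $E_3$ depends only on $(f,g)$, and that $p(f,g,h) = \size{h(g(R(f)))}$: indeed $\tau_1((f,g,h),0)[j] = \tau_0((f,g,h))[j] \geq s(f)$ holds precisely when some $a \in R(f)$ has $h(g(a)) = j$, since $s(f) \geq 0 > -1$. Conditioning on $f$ and then on $g$, and using that on $E_3(f,g)$ the map $g$ is injective on $R(f)$ — so that $g(R(f))$ is a set of exactly $r := \size{R(f)}$ points of $[\const{c:pre_bins}b^2]$ — the conditional probability of $\neg E_4$ depends on $(f,g)$ only through $r$, and equals $\prob(\size{N_r - \rho(r)} > \tfrac{\eps}{12} r)$, where $N_r$ is the number of occupied bins when $r$ balls are thrown $k$-wise independently and uniformly into $b$ bins. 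Bounding $[E_3(f,g)]$ and the outer integrand $[E_1(f)\wedge E_2(f)]$ by $1$, and noting that $E_1(f)\wedge E_2(f)$ forces $1 \leq r \leq \tfrac23 b$ by Equation~\ref{eq:B_bound} (the lower bound because $A \neq \emptyset$ makes $R(f)$ non-empty under $E_2$), one obtains
\[
    L \;\leq\; \sup_{1 \leq r \leq \frac23 b} \; \prob\!\left( \size{N_r - \rho(r)} > \tfrac{\eps}{12} r \right),
\]
so it suffices to bound this supremum by $2^{-6}$.

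This follows by Chebyshev's inequality once the first two moments of $N_r$ are pinned down. Writing $N_r = \sum_{j \in [b]} Z_j$ with $Z_j$ the indicator that bin $j$ is hit, expand $\expect Z_j$ and $\expect[Z_j Z_{j'}]$ by inclusion–exclusion over the balls; under $k$-wise independence every term involving at most $k$ balls equals the fully independent value, and Bonferroni's inequalities sandwich $\expect Z_j$ (resp.\ $\expect[Z_j Z_{j'}]$) between two consecutive partial sums of that alternating series, so the deviation from the fully independent value is at most the first omitted term, of order $\binom{r}{k} b^{-k} \leq \left(\tfrac{2e}{3k}\right)^{k}$ since $r \leq \tfrac23 b$. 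This is the classical observation of Bar-Yossef et al.~\cite{baryossef2002} that occupancy statistics converge exponentially fast in $k$ to their idealized values. With $k = \ceil{\const{c:approx_bin_balls_1}\ln b + \const{c:approx_bin_balls_2}}$ and $\const{c:approx_bin_balls_1}$ large enough this error is $\leq b^{-2}$, hence $\size{\expect N_r - \rho(r)} \leq b^{-1}$ (recall $\rho(r) = b(1-(1-b^{-1})^r)$ is exactly the fully independent expectation) and $\var N_r \leq c\, r^2 b^{-1}$ for an absolute constant $c$. Since $b \geq \const{c:delta}\eps^{-2}$ and $\eps < 1$ give $\eps b \geq \const{c:delta} \geq 24$, we have $\tfrac{\eps}{12} r - b^{-1} \geq \tfrac{\eps}{24} r$, so
\[
    \prob\!\left( \size{N_r - \rho(r)} > \tfrac{\eps}{12} r \right) \leq \prob\!\left( \size{N_r - \expect N_r} > \tfrac{\eps}{24} r \right) \leq \frac{\var N_r}{(\eps r/24)^2} \leq \frac{576\,c}{\eps^2 b} \leq \frac{576\,c}{\const{c:delta}} \leq 2^{-6},
\]
the last inequality holding once $\const{c:delta}$ is a sufficiently large absolute constant (this, with the analogous requirements from Lemmas~\ref{le:e_1}--\ref{le:e_3} and the cut-level bound, fixes $\const{c:delta}$).

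I expect the main obstacle to be the expectation gap: one must show that truncating the occupancy inclusion–exclusion at $k$ terms introduces an error negligible compared with $\tfrac{\eps}{12} r$, uniformly in $1 \leq r \leq \tfrac23 b$, and that $k \in \bigo(\ln b)$ already suffices. This is precisely the content of the $k$-wise-independent balls-into-bins lemma in the style of Bar-Yossef et al., and it is what pins down the constants $\const{c:approx_bin_balls_1}, \const{c:approx_bin_balls_2}$ in the definition of $k$; the variance bound and the Chebyshev step are then routine. A secondary point worth flagging is that, unlike the earlier events, $E_4$ genuinely mixes all three hash functions, so the proof does need the reduction above — first freezing $f$ (which determines $R(f)$ and $s(f)$), then $g$ (whose only role, via $E_3$, is to turn the $\size{R(f)}$ sub-sampled elements into $\size{R(f)}$ distinct balls) — before the $k$-wise independence of $h$ can be invoked cleanly.
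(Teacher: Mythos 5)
Your proof is correct and follows essentially the same route as the paper: use $E_3$ to identify $p(\psi)$ with the occupancy count of $\size{R(f)}$ distinct balls thrown $k$-wise independently into $b$ bins, use Eq.~\ref{eq:B_bound} to restrict to $r \leq \frac{2}{3}b$, and finish with a moment comparison plus Chebyshev. The paper packages the second half as Lemma~\ref{le:balls_and_bins_sum}, which rests on Lemma~\ref{le:approx_bin_balls} (the Kane et al.\ moment-approximation lemma, proved by approximating $\max(1,\cdot)$ with a degree-$k$ polynomial); you instead re-derive the moment approximation via Bonferroni truncation of the inclusion--exclusion expansions of $\expect Z_j$ and $\expect[Z_j Z_{j'}]$, in the style of Bar-Yossef et al. Both give the needed exponential-in-$k$ error, so this is a legitimate alternative to the imported sub-lemma. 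One detail to tighten: for the Chebyshev step to be uniform down to $r=1,2$ you need the additive error in $\var N_r$ to be $\bigo(b^{-1})$ (so that $\var N_r \leq c\, r^2 b^{-1}$ genuinely holds for all $1 \leq r \leq \frac{2}{3}b$), not merely $\bigo(1)$; this is exactly what the paper's Lemma~\ref{le:approx_bin_balls} guarantees ($\size{\var X' - \var X} \leq \eps^2 \leq b^{-1}$), and your Bonferroni bound delivers it too with the stated $k$, but the requirement should be stated explicitly.
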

\begin{proof}
Let $\tilde{R}(f,g,h) = \{ i \in [\const{c:pre_bins} b^2] \mid f(a) \geq t(f) \wedge g(a) = i \wedge a \in A\}$ denote the indices hit in the domain $[\const{c:pre_bins} b^2]$ by the application of $g$ on the elements above the sub-sampling threshold. If $E_3(f,g,h)$, then $\size{\tilde{R}(f,g,h)} = \size{R(f)}$ and if $E_1(f,g,h) \wedge E_2(f,g,h)$ ,then $\size{R(f)} \leq b$ (see Eq.~\ref{eq:B_bound}). Recalling that $p(\psi)$ is the number of bins hit by the application of $k$-independent family from $\tilde{R}(\psi) \subseteq [\const{c:pre_bins} b^2]$ to $[b]$ we can apply Lemma~\ref{le:balls_and_bins_sum}. This implies:
\begin{align*}
    & \prob_{(f,g,h) \sim \Psi} \left( \textstyle \bigwedge_{i \in \{1,2,3\}} E_i(f,g,h) \wedge \size{p(f,g,h) - \rho(\size{R(f)})} \geq \textstyle \frac{\eps}{12} \size{R(f)} \right) \leq \\ 
    & \prob_{\psi \sim \Psi} \left( \size{\tilde{R}(\psi)} \leq b \wedge \size{p(\psi) - \rho\left(\size{\tilde{R}(\psi)}\right)} \geq \frac{\eps}{12} \size{\tilde{R}(\psi)} \right) \leq \\
    & \prob_{\psi \sim \Psi} \left( \size{\tilde{R}(\psi)} \leq b \wedge \size{p(\psi) - \rho\left(\size{\tilde{R}(\psi)}\right)} \geq 9 b^{-1/2} \size{\tilde{R}(\psi)} \right) \leq 2^{-6}
\end{align*}
where we used, that $b \geq 9^2 12^2 \eps^{-2}$ (i.e. $\const{c:delta} >= 9^2 12^2$.
\end{proof}
\begin{lemma} \label{le:median_single_1} Equation~\ref{eq:median_single} is true.
\end{lemma}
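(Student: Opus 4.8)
The plan is to assemble the four event-bounds from Lemmas~\ref{le:e_1}--\ref{le:e_4} into the single statement \eqref{eq:median_single}. First I would observe that the failure event in \eqref{eq:median_single}, namely $\size{Y(\psi)-\size{A}}>\eps\size{A}\vee s(f)<q_\mathrm{max}$, is implied by the disjunction $\neg E_1(\psi)\vee\neg E_2(\psi)\vee\neg E_3(\psi)\vee\neg E_4(\psi)$; granting that implication, \eqref{eq:median_single_pre} yields \eqref{eq:median_single} directly. So the real content is to (a) prove \eqref{eq:median_single_pre} and (b) justify the implication.

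For (a), I would use the telescoping/chaining trick already flagged in the text: the disjunction $\bigvee_{i}\neg E_i$ is the disjoint-ish union of the events $\bigl(\bigwedge_{j<i}E_j\bigr)\wedge\neg E_i$ for $i\in\{1,2,3,4\}$, so by sub-additivity
\[
\prob_{\psi\sim\Psi}\Bigl(\textstyle\bigvee_{i}\neg E_i(\psi)\Bigr)\;\le\;\sum_{i=1}^{4}\prob_{\psi\sim\Psi}\Bigl(\textstyle\bigwedge_{j<i}E_j(\psi)\wedge\neg E_i(\psi)\Bigr)\;\le\;4\cdot 2^{-6}=2^{-4}.
\]
Hmm — that gives $2^{-4}$, not $2^{-6}$; so actually \eqref{eq:median_single_pre} as written needs the bound $\tfrac14\cdot 2^{-4}$... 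I would instead just state the sum is $\le 4\cdot 2^{-6}=2^{-4}\le \tfrac{1}{16}$, which is exactly the right-hand side of \eqref{eq:median_single_pre} and \eqref{eq:median_single}; the four inputs are Lemma~\ref{le:e_1} (with $\prob(\neg E_1)\ge\prob(E_1\wedge\neg E_1$-style prefix), Lemma~\ref{le:e_2}, Lemma~\ref{le:e_3}, and Lemma~\ref{le:e_4}, each contributing at most $2^{-6}$.

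For (b), the implication $\neg(\bigvee_i\neg E_i)\Rightarrow\neg(\text{failure})$, i.e.\ $E_1\wedge E_2\wedge E_3\wedge E_4\Rightarrow \bigl(\size{Y(\psi)-\size{A}}\le\eps\size{A}\wedge s(f)\ge q_\mathrm{max}\bigr)$, is the deterministic heart of the argument. The $s(f)\ge q_\mathrm{max}$ half follows from $E_1$: since $2^{-t(f)}\size{A}\le 2^{-1}b$ forces $t(f)\ge\ld\size{A}-\ld b+1\ge\ceil{\ld\size{A}}-\ld b=q_\mathrm{max}$ (using $t(f)\in\mathbb Z$ and the definition of $q_\mathrm{max}$), and $s(f)=\max(0,t(f))\ge q_\mathrm{max}$. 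The accuracy half is the computation $Y(\psi)=2^{s(f)}\rho^{-1}(p(\psi))$: under $E_3$ there are no $g$-collisions so $p(\psi)$ counts bins hit by exactly $\size{R(f)}$ balls thrown $k$-wise independently; $E_4$ gives $p(\psi)\approx\rho(\size{R(f)})$ to within $\tfrac{\eps}{12}\size{R(f)}$, so inverting the (monotone, well-conditioned — see Figure~\ref{fig:rho}) function $\rho$ gives $\rho^{-1}(p(\psi))\approx\size{R(f)}$ to within a comparable relative error; $E_2$ gives $\size{R(f)}\approx 2^{-s(f)}\size{A}$ to within $\tfrac{\eps}{3}$; and $E_1$ keeps $2^{-s(f)}\size{A}$ in the range $[2^{-16}b,\,2^{-1}b]$ so the Lipschitz constant of $\rho^{-1}$ near $\rho(\size{R(f)})$ is bounded. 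Multiplying by $2^{s(f)}$ and composing the three relative errors (with room to spare, $\tfrac{\eps}{3}+\tfrac{\eps}{12}\cdot(\text{const})\le\eps$) yields $\size{Y(\psi)-\size{A}}\le\eps\size{A}$.

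The main obstacle I anticipate is the $\rho^{-1}$ error-propagation step in (b): one must quantify how a $\tfrac{\eps}{12}\size{R(f)}$ additive perturbation of $p$ translates through $\rho^{-1}$ into a relative error on $\size{R(f)}$, and this requires a lower bound on $\rho'$ (equivalently an upper bound on $(\rho^{-1})'$) valid throughout the range pinned down by $E_1$, namely $\size{R(f)}\le\tfrac23 b$ (Eq.~\eqref{eq:B_bound}) and $\size{R(f)}\gtrsim 2^{-16}b$. On $[0,\tfrac23 b]$ one has $\rho'(x)=(1-b^{-1})^x\ln\tfrac{1}{1-b^{-1}}\cdot b\ge$ a constant bounded away from $0$, and $\rho(x)\ge cx$ there, so the inversion costs only a constant factor — but making the constants line up with the $\tfrac{\eps}{12}$ slack and the choice $\const{c:delta}\ge 9^2 12^2$ is the fiddly part. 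Everything else is bookkeeping: sub-additivity for (a), and a short chain of triangle inequalities for (b).
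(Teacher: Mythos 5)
Your proposal is correct and follows essentially the same route as the paper: the union bound $4\cdot 2^{-6}=2^{-4}=\frac{1}{16}$ over the chained events from Lemmas~\ref{le:e_1}--\ref{le:e_4} establishes Equation~\ref{eq:median_single_pre}, and the deterministic implication ($E_1$ forcing $s(f)\geq q_\mathrm{max}$, plus error propagation through $\rho^{-1}$ controlled by a bound on $(\rho^{-1})'$ --- the paper uses the mean value theorem with $(\rho^{-1})'(\xi)<4$ on the range pinned down by Eq.~\ref{eq:B_bound}) yields Equation~\ref{eq:median_single}. Your momentary worry about $2^{-4}$ versus $2^{-6}$ resolves exactly as you note, since $2^{-4}=\frac{1}{16}$ is precisely the claimed bound.
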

\begin{proof}
Let us start by observing that $E_1(\psi) \wedge E_2(\psi) \wedge E_4(\psi) \rightarrow \size{A^*(\psi) - \size{A}} \leq \eps \size{A}$. This is basically an error propagation argument. 
First note that by using Eq.~\ref{eq:B_bound}:
$p(f,g,h) \leq \rho(R(f)) + \frac{\eps}{12} \size{R(f)} \leq \rho(\frac{2}{3}b) + \frac{1}{12} \size{R(f)} \leq \frac{41}{60} b$.
Moreover, using the mean value theorem:
\[
    \size{ \rho^{-1}(p(f,g,h)) - \size{R(f)}} = (\rho^{-1})'(\xi) \size{ p(f,g,h) - \rho(\size{R(f)})} \leq \textstyle \frac{\eps}{3} \size{R(f)}
\]
for some $\xi$ between $\rho(\size{B(f)})$ and $p(f,g,h)$ where we can approximate $(\rho^{-1})'(\xi) < 4$. Hence:
\begin{eqnarray*}
    \size{ \rho^{-1}(p(f,g,h)) - 2^{-s(f)} \size{A} } & \leq & 
    \size{ \rho^{-1}(p(f,g,h)) - \size{R(f)} } + \size{ \size{R(f)} - 2^{-s(f)} \size{A}} \\ 
    & \leq & \frac{\eps}{3} \size{R(f)} + \size{ \size{R(f)}  - 2^{-s(f)} \size{A}} \\
    & \leq & \frac{\eps}{3} \size{R(f) - 2^{-s(f)} \size{A}} + \frac{\eps}{3} 2^{-s(f)} \size{A} + \size{ \size{R(f)}  - 2^{-s(f)} \size{A}} \\
    & \leq & \left( \frac{2\eps}{3} + \frac{\eps^2}{9} \right) 2^{-s(f)} \size{A} \leq \eps 2^{-s(f)} \size{A}
\end{eqnarray*}
It is also possible to deduce that $E_1(f,g,h) \rightarrow t(f) \geq \ceil{\ld (\size{A})} - \ld b \rightarrow s(f) \geq q_\mathrm{max}$.
Using Lemma~\ref{le:e_1} to \ref{le:e_4} we can conclude that Equation~\ref{eq:median_single_pre} is true. And the implications derived here show that then Equation~\ref{eq:median_single} must be true as well.
\end{proof}%
To extend the previous result to the case: $q \leq q_\mathrm{max}$, let us introduce the random variables:
\begin{align*}
    t_c(\psi, q) & := \max \{ \tau_1(\psi, q)[j] + q \mid j \in [b] \} - \ld b + 9 &
    s_c(\psi, q) & := \max(0, t_c(\psi, q)) \\
    p_c(\psi, q) & := \size{\{ j \in [b] \mid \tau_1(\psi, q)[j] + q \geq s_c(\psi,q) \}} &
    Y_c(\psi, q) & := 2^{s_c(\psi,q)} \rho^{-1}(p_c(\psi,q))
\end{align*}
These definitions $t_c$, $p_c$ and $Y_c$ correspond to the terms within the loop in the $\mathrm{estimate}$ function for arbitrary $q$.%
\begin{lemma} \label{le:median_single}
$\prob_{\psi \sim \Psi}\left( \exists q \leq q_\mathrm{max}. \size{Y_c(\psi,q) - \size{A}} > \eps \size{A} \right) \leq \frac{1}{16}$
\end{lemma}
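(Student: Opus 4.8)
The plan is to bootstrap from the single‑coordinate accuracy bound already proven in Lemma~\ref{le:median_single_1} (Equation~\ref{eq:median_single}). The crux is the claim that raising the cut‑off level from $0$ to any $q \le q_\mathrm{max}$ leaves the per‑coordinate estimate unchanged as long as the sub‑sampling threshold is at least $q$; concretely, for $0 \le q \le q_\mathrm{max}$, if $s(f) \ge q$ then $Y_c(\psi,q) = Y(\psi)$.

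First I would record the translation‑invariance identity for the compressed table. From $\tau_1(\psi,q)[j] = \max(\tau_0(\psi)[j]-q,-1)$ we get $\tau_1(\psi,q)[j]+q = \max(\tau_0(\psi)[j], q-1)$ for every $j \in [b]$. Taking the maximum over $j$ and using $A \ne \emptyset$, the quantity $\max_{j \in [b]} \tau_0(\psi)[j]$ equals $\max\{f(a) \mid a \in A\}$, which by definition of $t$ is $t(f) + \ld b - 9$; hence $t_c(\psi,q) = \max\bigl(\max\{f(a)\mid a\in A\},\, q-1\bigr) - \ld b + 9 = \max\bigl(t(f),\, q-1-\ld b+9\bigr)$.

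Then I would prove the claim. The case $q = 0$ is immediate from the definitions of $t_c, s_c, p_c, Y_c$. For $q \ge 1$, the hypothesis $s(f) \ge q \ge 1$ forces $s(f) = t(f)$, so $\max\{f(a)\mid a\in A\} = s(f) + \ld b - 9 \ge q - 1$, where the last step uses $\ld b \ge 8$ (valid since $b \ge \const{c:delta}\eps^{-2}$ with $\const{c:delta} \ge 9^2 12^2 > 2^8$). Therefore $\max\bigl(\max\{f(a)\mid a\in A\}, q-1\bigr) = \max\{f(a)\mid a\in A\}$, giving $t_c(\psi,q) = t(f)$ and $s_c(\psi,q) = s(f)$. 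Moreover, since $q-1 < q \le s(f) = s_c(\psi,q)$, the condition $\tau_1(\psi,q)[j]+q = \max(\tau_0(\psi)[j], q-1) \ge s_c(\psi,q)$ is equivalent to $\tau_0(\psi)[j] \ge s(f)$, so $p_c(\psi,q) = \size{\{j \in [b] \mid \tau_1(\psi,0)[j] \ge s(f)\}} = p(\psi)$. Hence $Y_c(\psi,q) = 2^{s_c(\psi,q)}\rho^{-1}(p_c(\psi,q)) = 2^{s(f)}\rho^{-1}(p(\psi)) = Y(\psi)$.

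Finally I would conclude: if $\size{Y_c(\psi,q) - \size{A}} > \eps \size{A}$ for some $q \le q_\mathrm{max}$, fix such a $q$; either $s(f) \ge q$, and then the claim gives $\size{Y(\psi) - \size{A}} > \eps \size{A}$, or $s(f) < q \le q_\mathrm{max}$, and then $s(f) < q_\mathrm{max}$. In both cases $\psi$ lies in the event whose probability Equation~\ref{eq:median_single} bounds by $\tfrac{1}{16}$, so the lemma follows. I expect the only delicate point to be the inequality $\ld b \ge 8$ used in the reduction: it is exactly what guarantees that the shifted floor‑value $q-1$ never overtakes the genuine table maximum, so that compressing below $s(f)$ is information‑preserving for the estimate; everything else is routine bookkeeping built on Lemma~\ref{le:median_single_1}.
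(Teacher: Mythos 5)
Your proposal is correct and follows essentially the same route as the paper: establish that $Y_c(\psi,q)=Y(\psi)$ whenever $q\leq s(f)$ (via the identity $\tau_1(\psi,q)[j]+q=\max(\tau_0(\psi)[j],q-1)$), and absorb the complementary case $s(f)<q\leq q_\mathrm{max}$ into the $s(f)<q_\mathrm{max}$ disjunct of Equation~\ref{eq:median_single}. Your write-up is somewhat more explicit than the paper's (in particular in isolating the inequality $\ld b\geq 8$, which the paper's terser argument uses only implicitly), but the underlying argument is the same.
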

\begin{proof}
It is possible to see that $t_c(\psi,q) = t(\psi)$ if $q \leq t(\psi)$. This is because $\tau_1(\psi,q) + q$ and $\tau_1(\psi,0)$ are equal except for values strictly smaller than $q$. With a case distinction on $t(\psi) \geq 0$ it is also possible to deduce that $s(\psi,q) = s(\psi)$ if $q \leq s(\psi)$.
Hence: $p_c(\psi, q) = p(\psi)$ and $Y_c(\psi,q) = Y(\psi)$ (for $q \leq s(\psi)$).
Thus this lemma is a consequence of Lemma~\ref{le:median_single_1}.
\end{proof}

The previous result established that each of the individual estimates is within the desired accuracy with a constant probability. The following establishes that the same is true for the median with a probability of $1- \frac{\delta}{2}$:

\begin{lemma}\label{le:median} $L := \prob_{\omega \in \Omega}\left( \exists q \leq q_\mathrm{max}. \size{\mathrm{estimate}(\tau_2(\omega,q)) - \size{A}} \geq \eps \size{A}\right) \leq \frac{\delta}{2}$
\end{lemma}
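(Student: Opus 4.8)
The plan is to lift Lemma~\ref{le:median_single}, which bounds the failure probability for one fixed seed $\psi\in\Psi$, to the random walk $\omega\in\Omega=\mathcal E(\Psi,\lambda,l)$ by means of the expander Chernoff bound (Theorem~\ref{th:expander_chernoff_improved}). First I would record the identity $\mathrm{estimate}(\tau_2(\omega,q)) = \mathrm{median}_{i\in[l]} Y_c(\omega_i,q)$, valid for every $\omega$ and every cut-off level $q$ by construction of the $\mathrm{estimate}$ operation and the definition of $Y_c$. Next, since $\Omega$ is the walk space of a $d$-regular expander on the vertex set $\Psi$ — whose stationary distribution is $U(\Psi)$ — I would introduce the single boolean function on $\Psi$
\[
 f(\psi) \;:=\; \bigl[\, \exists q \leq q_\mathrm{max}.\; \size{Y_c(\psi,q) - \size{A}} \geq \eps \size{A} \,\bigr] \;\in\; \{0,1\},
\]
so that, by Lemma~\ref{le:median_single}, its mean is $\mu := \expect_{\psi \sim U(\Psi)} f(\psi) \leq \tfrac{1}{16}$.

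The crux is the median argument made uniform in the inner quantifier. If $\size{\mathrm{estimate}(\tau_2(\omega,q)) - \size{A}} \geq \eps\size{A}$ for some $q \leq q_\mathrm{max}$, then the median of the values $Y_c(\omega_0,q),\dots,Y_c(\omega_{l-1},q)$ lies outside the open interval $(\size{A}-\eps\size{A},\,\size{A}+\eps\size{A})$, which forces at least $\tfrac{l}{2}$ of the indices $i\in[l]$ to satisfy $\size{Y_c(\omega_i,q)-\size{A}} \geq \eps\size{A}$; for each such $i$ the same value of $q$ certifies $f(\omega_i)=1$. Hence the event whose probability we must bound implies $\sum_{i\in[l]} f(\omega_i) \geq \tfrac{l}{2}$, and it suffices to bound $\prob_{\omega\sim\Omega}\bigl(\sum_{i\in[l]} f(\omega_i) \geq \tfrac{l}{2}\bigr)$.

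Finally I would apply Theorem~\ref{th:expander_chernoff_improved} with $\gamma = \tfrac12$: from $\lambda \leq \tfrac{1}{16}$ and $\mu \leq \tfrac{1}{16}$ we get $\mu + \lambda \leq \tfrac18 \leq \gamma \leq 1$, so the hypotheses hold, and — using that $p \mapsto D(\gamma\,||\,p)$ is decreasing on $(0,\gamma)$ —
\[
 L \;\leq\; \prob_{\omega\sim\Omega}\Bigl(\textstyle\sum_{i\in[l]} f(\omega_i) \geq \tfrac{l}{2}\Bigr) \;\leq\; \exp\!\bigl(-l\,D(\tfrac12\,||\,\mu+\lambda)\bigr) \;\leq\; \exp\!\bigl(-l\,D(\tfrac12\,||\,\tfrac18)\bigr).
\]
Since $D(\tfrac12\,||\,\tfrac18) = \tfrac12\ln\tfrac{16}{7}$ is an absolute positive constant (about $0.41$), and $l = \ceil{\const{c:eps}\ln(2\delta^{-1})}$, for any $\const{c:eps} \geq D(\tfrac12\,||\,\tfrac18)^{-1}$ (e.g.\ $\const{c:eps} \geq 3$) we have $l\,D(\tfrac12\,||\,\tfrac18) \geq \ln(2\delta^{-1})$ and therefore $L \leq \exp(-\ln(2\delta^{-1})) = \tfrac{\delta}{2}$.

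The main obstacle I anticipate is the second paragraph: one must set up $f$ so that a \emph{single} $q$ simultaneously drives the median failure and witnesses $f$ on each of the at least $l/2$ offending vertices, and one must check that the $\geq$ in the definition of $f$ is compatible with the strict inequality of Lemma~\ref{le:median_single} — this is fine because the error-propagation estimate inside the proof of Lemma~\ref{le:median_single_1} is actually strict (it yields $\tfrac{7\eps}{9}\size{A} < \eps\size{A}$). Everything else is a routine invocation of the expander deviation bound; if desired, Lemma~\ref{le:expander_chernoff} or Gillman's classical bound could replace Theorem~\ref{th:expander_chernoff_improved}, at the cost of a marginally larger value of $\const{c:eps}$.
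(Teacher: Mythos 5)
Your proposal is correct and follows essentially the same route as the paper: reduce the median failure to ``at least $l/2$ of the $Y_c(\omega_i,q)$ fail for a common $q$'', absorb the existential over $q$ into a single boolean function on $\Psi$ whose mean is at most $\tfrac{1}{16}$ by Lemma~\ref{le:median_single}, and finish with an expander Chernoff bound using $\mu+\lambda\leq\tfrac18$. The only (harmless) differences are that the paper invokes Lemma~\ref{le:expander_chernoff} rather than Theorem~\ref{th:expander_chernoff_improved} for the final numeric estimate, and that you are slightly more careful than the paper about the strict versus non-strict inequality when transferring the $\tfrac{1}{16}$ bound.
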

\begin{proof}
Because the median of a sequence will certainly be in an interval, if more than half of the elements are in it, we can approximate the left-hand side as:%
\begin{eqnarray*}
    L & \leq & \prob_{\omega \in \Omega}\left( \exists q \leq q_\mathrm{max}. \sum_{i \in [l]} [\size{Y(\omega_i,q) - \size{A}} \geq \eps \size{A}] \geq \frac{l}{2} \right) \\
    & \leq & \prob_{\omega \in \Omega}\left( \sum_{i \in [l]} [\exists q \leq q_\mathrm{max}. \size{Y(\omega_i,q) - \size{A}} \geq \eps \size{A}] \geq \frac{l}{2} \right) \\
    & \leq & \exp\left( - l \left(\frac{1}{2} \ln \left(\left(\frac{1}{16}+\frac{1}{16}\right)^{-1}\right) - 2 e^{-1}\right)\right) \leq \exp\left( - \frac{l}{4} \right) \leq \frac{\delta}{2}
\end{eqnarray*}%
The third inequality follows from Lemma~\ref{le:expander_chernoff} and \ref{le:median_single} as well as $\lambda \leq \frac{1}{16}$.
\end{proof}
We can now complete the proof of Theorem~\ref{th:overall}.
\begin{proof}[Proof of Theorem~\ref{th:overall}]
Follows from Lemma~\ref{le:cut_level} and the previous lemma, as well as the reasoning established in Equation~\ref{eq:overall_chain}.
\end{proof}

\subsection{Space Usage\label{sec:space_usage}}
It should be noted that the data structure requires an efficient storage mechanism for the levels in the bins. If we insist on reserving a constant number of bits per bins, the space requirement will be sub-optimal.
Instead we need to store the table values in a manner in which the number of bits required for a value $x$ is proportional to $\ln x$.
A simple strategy would be to store each value using a prefix-free universal code and concatenating the encoded variable-length bit strings.\footnote{Note that a vector of prefix-free values can be decoded even if they are just concatenated.} A well-known universal code for positive integers is the Elias-gamma code, which requires $2 \floor{ \ld x} + 1$ bits for $x \geq 1$~\cite{elias1975}. Since, in our case, the values are integers larger or equal to $(-1)$, they can be encoded using $2 \floor{ \ld (x+2) } + 1$ bits.\footnote{There are more sophisticated strategies for representing a sequence of variable-length strings that allow random access.~\cite{blandford2008}} (We are adding $2$ before encoding and subtracting after decoding.)
In combination with the condition established in the $\mathrm{compress}$ function of Algorithm~\ref{alg:main} the space usage for the table is thus $(2 \const{c:space_bound}+ 1) b l \in \bigo( b l ) \subseteq \bigo( \ln(\delta^{-1}) \eps^{2})$.
Additionally, the approximation threshold needs to be stored. This threshold is a non-negative integer between $0$ and $\ld n$ requiring $\bigo( \ln \ln n)$ bits to store.
In summary, the space required for the sketch is $\bigo( \ln(\delta^{-1}) \eps^{2} + \ln \ln n)$.
For the coin flips, we need to store a random choice from $\Omega$, i.e., we need to store $\ln (\size{\Omega})$ bits.
The latter is in %

\begin{eqnarray*}
    \bigo (\ln (\size{\Omega}) ) & \subseteq &  \bigo( \ln (\size{\Psi}) + l \ln (\lambda^{-1})) \\ 
    & \subseteq &  \bigo( \ln( \size{\mathcal G_2([n])}) + \ln( \size{ \mathcal H_2([n], [\const{c:pre_bins} b^2])}) + \ln( \size{ \mathcal H_k([\const{c:pre_bins} b^2],[b])}) + l^2 (\ln l)^3) \\ 
    & \subseteq & \bigo( \ln n + \ln n + k \ln (\eps^{-1}) + \ln (\delta^{-1})^{3} ) \\
    & \subseteq & \bigo( \ln n + \ln (\eps^{-1})^2 + \ln (\delta^{-1})^3)  \textrm{.}
\end{eqnarray*}
Overall the total space for the coin flips and the sketch is $\bigo( \ln(\delta^{-1}) \eps^{-2} + \ln n + \ln (\delta^{-1})^3)$.

\section{Extension to small failure probabilities\label{sec:ext_arb}}
The data structure described in the previous section has a space complexity that is close but exceeds the optimal $\bigo( \ln(\delta^{-1}) \eps^{-2} + \ln n)$. The main reason this happens is that, with increasing length of the random walk, the spectral gap of the expander is increasing as well --- motivated by the application of Lemma~\ref{le:deviation_bound} in Subsection~\ref{sec:cut_level}, with which we could establish that the cut-level could be shared between all tables. A natural idea is to restrict that.

If $\delta^{-1}$ is smaller than $\ln n$ the term $(\ln (\delta^{-1}))^3$ in the complexity of the algorithm is not a problem because it is dominated by the $\ln n$ term. If it is larger, we can split the table into sub-groups and introduce multiple cut-levels. Hence a single cut-level would be responsible for a smaller count of tables, and thus the requirements on the spectral gap would be lower. (See also Figure~\ref{fig:outer_inner_state}).

A succinct way to precisely prove the correctness of the proposal is to repeat the previous algorithm, which has only a single shared cut-level, in a black-box manner for the same universe size and accuracy but for a higher failure probability. The seeds of each repetition are selected again using an expander walk.  Here the advantage of Lemma~\ref{le:expander_chernoff} is welcome, as the inner algorithm needs to have a failure probability depending on $n$ --- the natural choice is $(\ln n)^{-1}$. This means the length of the walk of the inner algorithm matches the number of bits of the cut-level $\bigo(\ln \ln n)$. The repetition count of the outer algorithm is then $\bigo\left(\frac{\ln (\delta^{-1})}{\ln \ln n}\right)$. Note that the total repetition count is again $\bigo( \ln (\delta^{-1}))$.
\begin{theorem}\label{th:final_result}
Let $n > 0$, $0 < \eps < 1$ and $0 < \delta < 1$. Then there exists a cardinality estimation data structure for the universe $[n]$ with relative accuracy $\eps$ and failure probability $\delta$ with space usage $\bigo(\ln(\delta^{-1}) \eps^{-2} + \ln n)$.
\end{theorem}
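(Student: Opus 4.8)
The plan is a dichotomy on $\delta$ versus $(\ln n)^{-1}$. When $\delta \geq (\ln n)^{-1}$ I would simply invoke the data structure of Section~\ref{sec:alg} with failure probability $\delta$: it has relative accuracy $\eps$, sketch size $\bigo(\ln(\delta^{-1})\eps^{-2} + \ln\ln n)$ and seed size $\bigo(\ln n + \ln(\eps^{-1})^2 + \ln(\delta^{-1})^3)$, and since $\ln(\delta^{-1}) \leq \ln\ln n$ in this regime the cubic term is $\bigo((\ln\ln n)^3) \subseteq \bigo(\ln n)$ and $\ln(\eps^{-1})^2 \subseteq \bigo(\eps^{-2})$, so the total is already $\bigo(\ln(\delta^{-1})\eps^{-2} + \ln n)$. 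The degenerate cases where $n$ or $\delta^{-1}$ lies below an absolute constant I would dispatch first, e.g.\ by storing the distinct elements explicitly.

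So the work is in the regime $\delta < (\ln n)^{-1}$, where $\ln(\delta^{-1}) \geq \ln\ln n \geq 1$; here I would use a black-box vectorization of Algorithm~\ref{alg:main}. Fix the inner failure probability $\delta^{\ast} := (\ln n)^{-1}$, let $\mathcal A$ be Algorithm~\ref{alg:main} instantiated for universe $[n]$, accuracy $\eps$ and failure probability $\delta^{\ast}$, and write $\Omega^{\ast}$ for its seed space and $\tau^{\ast}$ for its state function. Its inner walk length is $\bigo(\ln\ln n)$ --- matching the bit length of one cut-off level --- and, by the Section~\ref{sec:alg} seed analysis, $\ld\size{\Omega^{\ast}} \in \bigo(\ln n + \ln(\eps^{-1})^2 + (\ln\ln n)^3) \subseteq \bigo(\ln n + \eps^{-2})$. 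Now set $\lambda^{\ast} := (\ln n)^{-1}$ and $m := \ceil{c\,\ln(\delta^{-1})/\ln\ln n}$ for a suitable absolute constant $c$, and take the seed space of the new structure to be $\mathcal E(\Omega^{\ast}, \lambda^{\ast}, m)$: the $\mathrm{init}$ operation draws a length-$m$ walk $w$; $\mathrm{single}$ and $\mathrm{merge}$ apply $\mathcal A$'s corresponding operations componentwise over $i \in [m]$ (so a merge operation is retained, and history-independence carries over by Lemma~\ref{le:histind}); and $\mathrm{estimate}$ returns the median of the $m$ inner estimates.

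For correctness I would argue as follows. By history-independence the state after any execution tree over $\emptyset \neq A \subseteq [n]$ is $(\tau^{\ast}(w_i, A))_{i \in [m]}$, a function of $w$ and $A$ alone. Define $f_A : \Omega^{\ast} \to \{0,1\}$ by $f_A(v) := [\size{\mathrm{estimate}(\tau^{\ast}(v, A)) - \size{A}} \geq \eps \size{A}]$; Theorem~\ref{th:overall} applied to $\mathcal A$ gives $\mu := \expect_{v \sim U(\Omega^{\ast})} f_A(v) < \delta^{\ast}$. A median errs only if at least half of its inputs err, and $\mu + \lambda^{\ast} < 2(\ln n)^{-1} < \tfrac12$, so Lemma~\ref{le:expander_chernoff} with $\gamma = \tfrac12$ bounds the overall failure probability by $\exp(-m(\tfrac12 \ln((\mu + \lambda^{\ast})^{-1}) - 2e^{-1})) \leq \exp(-m(\tfrac12 \ln(\tfrac12 \ln n) - 2e^{-1}))$, which is $\leq \delta$ once $n$ exceeds an absolute constant and $c$ is large enough; this holds for every $A$. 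For space: the $m$ inner sketches use $m \cdot \bigo(\eps^{-2} \ln\ln n) = \bigo(\ln(\delta^{-1}) \eps^{-2})$ bits, and the outer walk uses $\bigo(\ld\size{\Omega^{\ast}} + m\ln((\lambda^{\ast})^{-1})) = \bigo(\ln n + \eps^{-2} + \ln(\delta^{-1}))$ bits, and adding these gives $\bigo(\ln(\delta^{-1})\eps^{-2} + \ln n)$ because $\ln(\delta^{-1}) \geq 1$ here.

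The main obstacle is this last budget: the outer walk costs $\bigo(\ln((\lambda^{\ast})^{-1}))$ bits per step, so to keep the total at $\bigo(\ln(\delta^{-1}))$ one can afford $(\lambda^{\ast})^{-1}$ and $m$ only of the orders above, which forces the inner failure probability to be no smaller than roughly $(\ln n)^{-1}$ --- yet the median argument must still drive $m$ such runs below $\delta$. It does, but only because Lemma~\ref{le:expander_chernoff} contributes an exponent proportional to $\ln((\mu + \lambda^{\ast})^{-1}) \approx \ln\ln n$ per step, rather than the $\bigo((\gamma - \mu - \lambda^{\ast})^2) = \bigo(1)$ afforded by a Gillman-type bound; it is precisely this logarithmic --- not constant --- gain per step that lets $m \in \bigo(\ln(\delta^{-1})/\ln\ln n)$ repetitions suffice while the walk description stays within $\bigo(\ln(\delta^{-1}))$ bits. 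Getting the two-level spectral parameters consistent with both requirements (and checking $\mu + \lambda^{\ast} < \tfrac12$ and the $\bigo(\ln\ln n)$ inner walk length along the way) is the crux; the rest is bookkeeping.
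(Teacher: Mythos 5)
Your proposal is correct and follows essentially the same route as the paper: the same dichotomy on $\delta$ versus $(\ln n)^{-1}$, the same black-box vectorization with inner failure probability $\delta^{\ast} = (\ln n)^{-1}$, an outer expander walk of length $m \in \bigo(\ln(\delta^{-1})/\ln\ln n)$ with spectral bound $(\ln n)^{-1}$, and the median argument via Lemma~\ref{le:expander_chernoff} with $\gamma = \tfrac12$ (the paper fixes $c=4$ and handles the small-$n$ threshold as $\ln\ln n \geq 5$, matching your ``absolute constant'' caveats). The space accounting also matches the paper's.
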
%
\begin{proof}
If $\delta^{-1} < \ln n$, then the result follows from Theorem~\ref{th:overall} and the calculation in Subsection~\ref{sec:space_usage}. Moreover, if $n < \exp(e^5)$, then the theorem is trivially true, because there is an exact algorithm with space usage $\exp(e^5) \in \bigo(1)$. Hence we can assume $e^5 \leq \ln n \leq \delta^{-1}$. Let $\Omega^*$, $\mathrm{single}^*$, $\mathrm{merge}^*$ and $\mathrm{estimate}^*$ denote the seed space and the API of Algorithm~\ref{alg:main} for the universe $[n]$, relative accuracy $\eps$ and failure probability $\delta^* := (\ln n)^{-1}$. 
Moreover, let $m := \left\lceil 4 \frac{\ln (\delta^{-1})}{\ln \ln n} \right\rceil$ --- the plan is to show that with these definitions Algorithm~\ref{alg:main_ext} fulfills the conditions of this theorem.
Let $\nu(\theta,A)[i] := \tau^*(\theta_i, A)$ for $i \in [m]$ and $\theta \in \Theta := U(\mathcal E(\Omega^*,\delta^*,m))$. Then it is straightforward to check that:
\begin{align*}
    \mathrm{single}(\theta,x) & = \nu(\theta,\{x\}) &
    \mathrm{merge}(\nu(\theta,A),\nu(\theta,B)) & = \nu(\theta,A \cup B)
\end{align*}
for $x \in [n]$ and $\emptyset \neq A, B \subseteq [n]$ taking into account Lemma~\ref{le:histind}.
Hence the correctness follows if:
$\prob_{\theta \in \Theta}( \size{\mathrm{estimate}(\nu(\theta,A)) - \size{A}} > \eps \size{A} ) \leq \delta$.
Because the estimate is the median of the individual estimates, this is true if at least half of the individual estimates are in the desired range.
Similar to the proof of Lemma~\ref{le:median} we can apply Lemma~\ref{le:expander_chernoff}. 
This works if
\[
    \exp\left( -m \left(\frac{1}{2} \ln \left((\delta^* + \delta^*)^{-1}\right) - 2e^{-1}\right)\right) \leq \delta
\]
which follows from  $m \geq 4 \ln (\delta^{-1}) (\ln \ln n)^{-1}$ and $\ln \ln n \geq 5$.
The space usage for the seed is: $\ln \size{\Theta} \in \bigo(\ln n + \ln (\eps^{-1})^2 + (\ln ((\delta^*)^{-1}))^3 + m \ln ((\delta^*)^{-1})) \subseteq
\bigo (\ln n + \ln (\eps^{-1})^2 + \ln (\delta^{-1}))$.
And the space usage for the sketch is: $\bigo( m \ln ((\delta^*)^{-1}) \eps^{-2} + m \ln \ln n ) \subseteq \bigo( \ln (\delta^{-1}) \eps^{-2} + \ln \ln n)$.
\end{proof}
\begin{algorithm}[H]%
\begin{pseudo*}[indent-mark]
\kw{function} $\mathrm{init}()$ : $\Theta$ \\+
    \kw{return} $\mathrm{random} \, U(\Theta)$ \\-
\\
\kw{function} $\mathrm{single}(x : U, \theta : \Theta)$ : $\mathcal S$ \\+
    $D[i] = \mathrm{single}^*(x, \theta_i)$ \kw{for} $i \in [m]$\\
    \kw{return} $D$ \\-
\\
\kw{function} $\mathrm{merge}(D_a : \mathcal S, D_b : \mathcal S)$ : $\mathcal S$ \\+
    $D[i] \gets \mathrm{merge}^*(D_a[i],D_b[i])$ \kw{for} $i \in [m]$ \\
    \kw{return} $D$ \\-
\\
\kw{function} $\mathrm{estimate}(D : \mathcal S)$ : $\mathbb R$ \\+
    $Y_i \gets \mathrm{estimate}^*(D[i])$ \kw{for} $i \in [m]$ \\
    \kw{return} $\mathrm{median} (Y_0,\ldots, Y_{m-1})$ %
\end{pseudo*}%
\caption{Algorithm for $0 < \delta < (\ln n)^{-1}$} %
\label{alg:main_ext}%
\end{algorithm}
\begin{figure}[h!]
\centering

\includegraphics{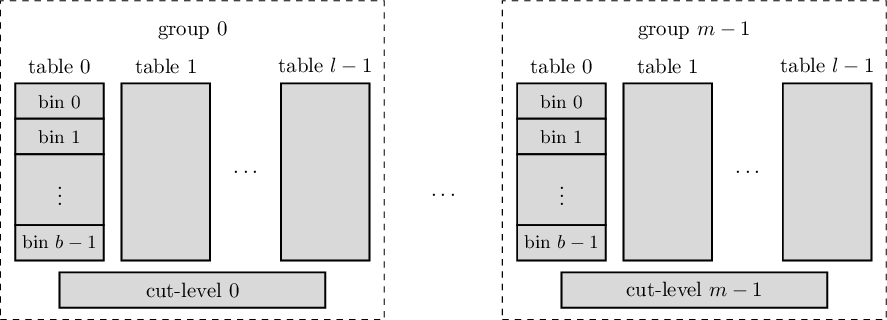}

\caption{Schematic representation of the states of Algorithm~\ref{alg:main_ext} with $m \in \bigo\left(\frac{\ln (\delta^{-1})}{\ln \ln n}\right)$ repetitions of the inner algorithm. The inner algorithm uses $b \in \bigo(\eps^{-2})$ bins and $l \in \bigo(\ln \ln n)$ tables.}
\label{fig:outer_inner_state}
\end{figure}

\section{Optimality\label{sec:counter_example}}
The optimality of the algorithm introduced by B\l{}asiok~\cite{blasiok2020} follows from the lower bound established by Jayram and Woodruff~\cite[Theorem 4.4]{jayram2013}. The result (as well as its predecessors~\cite{alon1999, woodruff2004}) follows from a reduction to a communication problem. This also means that their theorem is a lower bound on the information the algorithm needs to retain between processing successive stream elements.

It should be noted that, if additional information is available about the distribution of the input, the problem becomes much easier.
Indeed with such assumptions it is even possible to introduce algorithms that can approximate the cardinality based on observing only a fraction of the input, so the upper bound established in the previous section and the lower bounds discussed here are with respect to algorithms, that work for all inputs.\footnote{The probabilistic nature of the correctness condition is only with respect to the internal random bits used.}

An immediate follow-up question to Theorem~\ref{th:final_result} is whether the space usage is also optimal in the distributed setting. Unfortunately, this question is not as well posed as it sounds. One interpretation would be to ask whether there is a randomized data structure that fulfills the API described at the beginning of Section~\ref{sec:alg}, i.e., with the four operations: init, single, merge and estimate, fulfilling the same correctness conditions, requiring $o(\ln (\delta^{-1} )\eps^{-2} + \ln n)$. For that question the answer is no, because such a data structure can be converted into a sequential streaming algorithm: Every time a new stream element is processed, the new state would be computed by obtaining the sketch of the new element using the single operation and merging it with the pre-existing state using the merge operation. (See also the first mode of operation presented in Figure~\ref{fig:dist_stream_example}.)

A more interesting question is, if there is a less general algorithm that works in the distributed streams model. Let us assume there are $p$ processes, each retaining $m$ stream elements, and they are allowed to communicate at the beginning, before observing the stream elements, and after observing all stream elements. Here, let us assume that the processes know how many processes there are and also how many stream elements each process owns. Even with these relaxed constraints, the number of bits that each process will need to maintain will be the same as the minimum number of bits of a sequential streaming solution.   
This follows by considering a specific subset of the input set where except for process $0$, the stream elements on all the other processes are equal to the last stream element of process $0$. In particular, the information the processes $1,2, \dots, p-1$ have is $0$ bits from the perspective of process $0$. If our distributed hypothetical algorithm is correct, it can only be so if the worst-case space usage per process is $\Omega(\ln (\delta^{-1} )\eps^{-2} + \ln n)$.

It should be noted that more relaxed constraints, for example, if the processes are allowed to communicate multiple times, after having observed some of the stream elements, prevent the previous reduction argument. And there will be more efficient solutions. Similar things happen, if assumptions about the distribution of the input are made.

\section{Runtime\label{sec:runtime}}
The function compress in Algorithm~\ref{alg:main}, which is being used as an internal operation within the single and merge operations is described in a way that allows verifying its correctness properties easily, but as an algorithm it has sub-optimal runtime. In the following, I want to introduce an alternative faster implementation with the same behavior.

Let us recall that the function repeatedly decrements every (non-negative) table entry and increments the cut-off level until the condition
\begin{equation}\label{eq:compress}
    L := \sum_{i \in [l], j \in [b]} \floor{\ld (B[i,j]+2)} > \const{c:space_bound} b l
\end{equation}
is fulfilled. If the number of iterations in the loop --- the minimum value that the table entries need to be decreased by --- is known, the while loop can be removed. This results in an algorithm of the following form:
\begin{pseudo*}[indent-mark]
\kw{function} $\mathrm{compress}((B,q) : \mathcal S)$ : $\mathcal S$ \\+
  $\Delta \gets \mathrm{find{\isacharunderscore}required{\isacharunderscore}cutoff}(B)$ \\
  $B[i,j] \gets \max(B[i,j]-\Delta,-1)$ \kw{for} $i \in [l], j \in [b]$ \\
  $q \gets q + \Delta$ \\
  \kw{return} $(B,q)$ \\-
\end{pseudo*}
The function $\mathrm{find{\isacharunderscore}required{\isacharunderscore}cutoff}$ is a dynamic programming algorithm. It starts by computing the minimum amount by which the left-hand side of Eq.~\ref{eq:compress} has to be reduced. Then it computes a temporary table, with which it is possible to determine the effect of every possible $\Delta$ on $L$. To understand how that works, let us first note that the contribution of a single table entry $B[i,j]$ to $L$ will change only if $\floor{\ld (B[i,j]+2)}$ is affected, which splits the possible $\Delta$ values into distinct consecutive intervals. For example: If $B[i,j] = 23$, then any $\Delta$ below $9$ will not affect $\floor{\ld (B[i,j]+2)}$. If $\Delta$ is between $10$ and $17$, then the contribution of $\floor{\ld (B[i,j]+2)}$ will decrease by one. If $\Delta$ is between $18$ to $21$, it will decrease by two, etc. All of that can be kept track off more efficiently using a sequence $\chi$ which describes the relative effect of a $\Delta$ compared to $\Delta-1$, i.e., the discrete derivative of the function we are looking for. For our example this means that $\chi$ will be $1$ for the values: $10, 18, 22, 24$ and $0$ otherwise. It is, of course straightforward to cumulatively determine $\chi$ for the entire table.
\begin{pseudo*}[indent-mark]
\kw{function} $\mathrm{find{\isacharunderscore}required{\isacharunderscore}cutoff}(B)$ : $\mathbb N$ \\+
  $R \gets \sum_{i \in [l], j \in [b]} \floor{\ld (B[i,j]+2)} - \const{c:space_bound} b l$ \\
  $\chi[i] \gets 0$ \kw{for} $i \in [\ceil{\ld n}]$ \\
  \kw{for} $(i,j) \in [l] \times [b]$ \\+
    $x \gets B[i,j]+2$ \\
    $\mathrm{inc}(\chi[x - (2^k-1)])$ \kw{for each} $1 \leq k \leq \floor{\ld x}, k \in \mathbb N$\\-
  $\Delta \gets 0$ \\
  \kw{while} $R > 0$: \\+
    $\mathrm{inc}(\Delta)$ \\
    $R \gets R - \chi[\Delta]$ \\-
  \kw{return} $\Delta$ \\
\end{pseudo*}
In the last step, the algorithm determines the smallest $\Delta$ fulfilling Eq.~\ref{eq:compress} using the function $\chi$, i.e., the length of the smallest prefix of $\chi$ whose sum surpasses $R$. To estimate the runtime of the above compression algorithm and the resulting merge and estimate operations, it makes sense to first obtain a bound on the left-hand side of Eq~\ref{eq:compress}, for any possible \emph{input} of the compress operation.
\begin{itemize}
    \item For the single operation: $L \in \bigo( \ln(\delta^{-1}) \ln \ln n ) \subseteq \bigo(\ln n)$. 
    \item For the merge operation: $L \in \bigo( \ln(\delta^{-1}) \eps^{-2})$.
\end{itemize}
The first observation follows from the definition of $\mathrm{single}_1$ in Algorithm~\ref{alg:main}. Note that this is within the context of the inner algorithm (Section~\ref{sec:alg}), where it is correct to assume $\delta^{-1} \leq \ln n$. For the merge operation, this follows from the fact that the initial $\mathrm{merge}_1$ can at most double the space usage of its inputs, where for each input Eq.~\ref{eq:compress} can be assumed.
On the other hand it is easy to check that the runtime of the new compress function is in $\bigo(L + \ln(\delta^{-1}) \eps^{-2} + \ln n)$ in the word RAM model for a word size $w \in O(\max (\ln n, \ln (\eps^{-1}), \ln \ln(\delta^{-1})))$. In summary, the operations merge and single require $O(\ln(\delta^{-1}) \eps^{-2} + \ln n)$ operations.

A practical implementation of the estimate function introduced in Algorithm~\ref{alg:main} requires an approximation of $\rho^{-1}(x)$. This can be done by increasing the parameter $b$ by a factor of $4$ (and the parameter $k$ accordingly, since it is defined in terms of $b$) and computing an approximation of 
$\rho^{-1}(x)$ with an error of $\eps/2$ (in the range $0 \leq x \leq \frac{41}{60}b$)\footnote{Because of Lemma~\ref{le:median_single_1}, it is enough to approximate $\rho^{-1}$ only within this range.}. In combination the resulting algorithm has again a total relative error of $\eps$. For such an implementation the number of operations is asymptotically $O(\ln(\delta^{-1}) \eps^{-2} + \ln n)$. 

It is straightforward to extend the same result to the extended solution derived in Section~\ref{sec:ext_arb}.

\section{Conclusion\label{sec:conclusion}}
A summary of this work would be that for the space complexity of cardinality estimation algorithms, there is no gap between the distributed and sequential streaming models. Moreover, it is possible to solve the problem optimally (in either model) with expander graphs and hash families without using code-based extractors (as they were used in previous work). The main algorithmic idea is to avoid using a separate rough estimation data structure for quantization (cut-off); instead, the cut-off is guided by the space usage. During the estimation step at the end, an independent rough estimate is still derived, but it may be distinct from the cut-off reached at that point. This is the main difference between this solution and the approach by Kane et al.~\cite{kane2010}. The main mathematical idea is to take the tail estimate based on the Kullback-Leibler divergence for random walks on expander graphs, first noted by Impagliazzo and Kabanets~\cite[Th.~10]{impagliazzo2010} seriously. With which, it is possible to achieve a failure probability of $\delta$ using $\bigo\left(\frac{\ln (\delta^{-1})}{\ln ((\delta^*)^{-1})}\right)$ repetitions of an inner algorithm with a failure probability $\delta^* > \delta$. Note that the same cannot be done with the standard Gillman-type Chernoff~\cite{gillman1998} bounds. This allows the two-stage expander construction that we needed. As far as I can tell, this strategy is new and has not been used before.

B\l{}asiok~\cite{blasiok2020} and Kane et al.~\cite{kane2010} also discuss strong tracking properties for the sequential streaming algorithm. Their methods do not scale into the distributed stream model, because the possible number of reached states is exponentially larger than the number of possible states in the sequential case. An interesting question is whether there are different approaches for the distributed streams model or complexity bounds with respect to the number of participating processes or total number of stream elements, with which strong-tracking properties can be derived.

Another interesting question is whether the two-stage expander construction can somehow be collapsed into a single stage. For that, it is best to consider the following non-symmetric aggregate:
\[
    \prob_{\omega \in \mathcal E (\mathcal E (S, \exp(-l (\ln l)^3), l), \exp(-l/m), m)} \left(\sum_{i \in [m]} \left[\sum_{j \in [l]} X(\omega_{ij}) \geq \const{c:dev_bound} \right] \geq \frac{m}{2} \right) \leq \exp(-\bigo(lm))
\]
where $X$ may be an unbounded random variable with, e.g., sub-gaussian distribution. Indeed, the bound on the count of too-large cut-off values from Algorithm~\ref{alg:main_ext} turns out to be a tail estimate of the above form. I tried to obtain such a bound using only a single-stage expander walk but did not succeed without requiring too large spectral gaps, i.e., with $\lambda^{-1} \in \bigo(1)$ for $m \ll l$. There is a long list of results on more advanced Chernoff bounds for expander walks~\cite{agrawal2019,lezaud1998,naor2019,rao2018,rao2017,wagner2006} and investigations into more general aggregation (instead of summation) functions~\cite{cohen2021, garg2017, golowich2022, golowich2022_2, paulin2015, reingold2013}, but I could not use any of these results/approaches to avoid the two-stage construction.
This suggests that either there are more advanced results to be found or multi-stage expander walks are inherently more powerful than single-stage walks.
\bibliography{main}
\appendix
\newpage
\section{Proof of Lemma~\ref{le:deviation_bound}\label{apx:proof_dev_bound}}
\deviationboundstatement*
\begin{proof}
Let $\mu_k := \expect_{v \sim V} [e^k \leq f(v)] \leq \exp(-e^k k^3)$ for $k \geq 3$. We will show
\begin{equation}
\label{eq:deviation_bound}
    L_k := \prob_{w \sim \mathrm{Walk}(G,l)}\left( \sum_{i \in [l]} [e^k \leq f(w_i)] \geq l e^{-k} k^{-2} \right) \leq \exp(-l-k+2) \textrm{ for all } k \geq 3
\end{equation}
by case distinction on the range of $k$:

\emph{Case $k \geq \max(\ln l,3)$}:
In this case the result follows using Markov's inequality. Note that the random walk starts from and remains in the stationary distribution, and thus for any index $i \in [l]$ the distribution of the $i$-th walks step $w_i$ will be uniformly distributed over $V$, hence: 
\begin{eqnarray*}
    L_k & \leq & e^k k^2 l^{-1} \expect_{w \sim \mathrm{Walk}(G,l)} \textstyle \sum_{i \in [l]} [e^k \leq f(w_i)] = e^k k^2 \expect_{v \sim V} [e^k \leq f(v)] \\
    & \leq & e^k k^2 \exp(-e^k k^3 ) = \exp( k + 2 \ln k - e^k k^3 ) \leq \exp( 2 k - e^k (k^2 + 2) ) \\ 
    & \leq & \exp( 2 k - e^k k^2 - e^k - e^k ) \leq \exp ( -l -k + 2 )
\end{eqnarray*}
Here we use that $k^3 \geq k^2 + 2$ and $e^k \geq k$ for $k \geq 3$ and $e^k \geq l$.

\emph{Case $3 \leq k < \ln l$}:
Then we have
\begin{eqnarray*}
    L_k & \leq & \exp \left( -l (e^{-k} k^{-2} \ln ((\mu_k + \lambda)^{-1}) - 2e^{-1}) \right) \textrm{ using Lemma~\ref{le:expander_chernoff}} \\
        & \leq & \exp \left( -l (e^{-k} k^{-2} (e^k k^3 - \ln 2) - 2e^{-1}) \right) \leq \exp \left( -l ( k - e^{-k} k^{-2} \ln 2 - 2e^{-1}) \right) \\
        & \leq & \exp \left( -l ( k - 1) \right) \leq \exp \left( -l - k + 2 \right)
\end{eqnarray*}
Concluding the proof of Eq.~\ref{eq:deviation_bound}.

Note that:
\begin{eqnarray*}
    \sum_{i \in [l]} f(w_i) & \leq & e^2 l + \sum_{i \in [l]} \sum_{k \geq 2} e^{k+1} [e^k \leq f(w_i) < e^{k+1}] \\
    & \leq & e^2 l + \sum_{i \in [l]} \left( \sum_{k \geq 2} e^{k+1} [e^k \leq f(w_i)] - \sum_{k \geq 2} e^{k+1} [e^{k+1} \leq f(w_i)] \right) \\
    & \leq & (e^2 + e^3) l + (e-1) \sum_{i \in [l]} \left( \sum_{k \geq 3} e^k [e^k \leq f(w_i)]  \right)
\end{eqnarray*}
Hence:
\begin{eqnarray*}
    \prob_{w \sim \mathrm{Walk}(G,l)} \left( \sum_{i \in [l]} f(w_i) \geq \const{c:dev_bound} l \right) & \leq &
    \prob_{w \sim \mathrm{Walk}(G,l)} \left( \sum_{k \geq 3, i \in [l]} e^k [e^k \leq f(w_i)] \geq l \right) \\
    & \leq & \prob_{w \sim \mathrm{Walk}(G,l)} \left( \bigvee_{k \geq 3} \sum_{i \in [l]} [e^k \leq f(w_i)] \geq l e^{-k} k^{-2} \right) \\
    & \leq & \sum_{k \geq 3} L_k \leq \sum_{k \geq 3} \exp \left( -l - k + 2 \right) \leq \exp(-l) \textrm{.\qedhere}
\end{eqnarray*}
\end{proof}

\section{Balls and Bins}
Let $\Omega = U( [r] \rightarrow [b])$ be the uniform probability space over the functions from $[r]$ to $[b]$ for $b \geq 1$ and $0 \leq r \leq b$ and let 
$X(\omega) = \size{\omega([r])}$ be the size of the image of such a function. This models throwing $r$ balls into $b$ bins independently, where $X$ is the random variable counting the number of hit bins. Moreover, let $E_i(\omega) = \{ \omega \mid i \in \omega([r]) \}$ be the event that the bin $i$ was hit.
Note that $X(\omega) = \sum_{i \in [b]} E_i(\omega)$.
And we want to show that
\begin{align*}
    \expect_{\omega \sim \Omega} X(\omega) & = b \left(1-\left(1-\frac{1}{b}\right)^r\right) & 
    \var_{\omega \sim \Omega} X(\omega) & \leq \frac{r(r-1)}{b} 
\end{align*}

\begin{lemma}\label{le:ind_bin_balls_exp}
    $\expect_{\omega \sim \Omega} X(\omega) = b \left(1-\left(1-\frac{1}{b}\right)^r\right)$
\end{lemma}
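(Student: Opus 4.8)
The plan is to compute $\expect_{\omega \sim \Omega} X(\omega)$ by linearity of expectation over the indicator events $E_i$. First I would write $X(\omega) = \sum_{i \in [b]} [\,i \in \omega([r])\,]$, so that $\expect X = \sum_{i \in [b]} \prob_{\omega \sim \Omega}(E_i)$. By symmetry (the bins are interchangeable under the uniform distribution on functions $[r] \to [b]$), each $\prob(E_i)$ has the same value, hence $\expect X = b \cdot \prob(E_0)$.

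Next I would evaluate $\prob(E_0) = \prob_{\omega \sim \Omega}(0 \in \omega([r]))$ via its complement: the bin $0$ is missed precisely when every one of the $r$ balls lands in one of the other $b-1$ bins. Since $\omega$ is a uniformly random function from $[r]$ to $[b]$, the values $\omega(0), \ldots, \omega(r-1)$ are independent and each uniform on $[b]$, so $\prob(\,0 \notin \omega([r])\,) = \prod_{j \in [r]} \prob(\omega(j) \neq 0) = (1 - b^{-1})^r$. Therefore $\prob(E_0) = 1 - (1-b^{-1})^r$.

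Combining the two steps gives $\expect_{\omega \sim \Omega} X(\omega) = b\left(1 - (1-b^{-1})^r\right)$, as claimed. The only mild subtlety — hardly an obstacle — is making the independence of the coordinates $\omega(j)$ explicit from the fact that $\Omega$ is the uniform space on the full function set $[r] \to [b]$ (equivalently, on $[b]^r$); once that is noted, the computation of the complementary probability as an $r$-fold product is routine. No deviation or concentration machinery is needed here; this is a pure first-moment computation.
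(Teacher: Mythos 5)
Your proposal is correct and follows essentially the same route as the paper: linearity of expectation over the bin-indicator events $E_i$, followed by computing the complement probability $\prob(\neg E_i) = (1-b^{-1})^r$ (the paper counts functions from $[r]$ into $[b]\setminus\{i\}$, which is the same computation as your independence-based product). No issues.
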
%
\begin{proof}
First note that:
\[
    \prob (\neg E_i) = \prob_{\omega \sim \Omega} \left\{ \omega \mid \omega([r]) \subseteq [b] \setminus \{ i \} \right\} = \left(1-\frac{1}{b}\right)^r  
\]
which can be seen by counting the number of functions from $[r]$ to $[b] \setminus \{ i\}$.
Hence:
\[
    \expect X = \sum_{i \in [b]} 1 - \prob (\neg E_i) = b \left(1 - \left( 1-\frac{1}{b}\right)^r \right) \textrm{\qedhere}
\]
\end{proof}
\Needspace{20\baselineskip}

\begin{lemma}\label{le:ind_bin_balls_var}
    $\var_{\omega \sim \Omega} X(\omega) \leq \frac{r(r-1)}{b}$
\end{lemma}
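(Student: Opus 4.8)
The plan is to compute the variance directly from the decomposition $X = \sum_{i \in [b]} E_i$, where I abuse notation and write $E_i$ also for the indicator of the event that bin $i$ is hit. Expanding, $\var X = \sum_{i,j \in [b]} (\expect(E_i E_j) - \expect(E_i)\expect(E_j))$. The diagonal terms $i = j$ contribute $\sum_i (\expect(E_i) - \expect(E_i)^2) = \sum_i \prob(E_i)(1-\prob(E_i))$, and the off-diagonal terms contribute $\sum_{i \neq j}(\prob(E_i \cap E_j) - \prob(E_i)\prob(E_j))$. So the first step is to write down closed forms for $\prob(E_i)$ and $\prob(E_i \cap E_j)$ by the same counting argument used in Lemma~\ref{le:ind_bin_balls_exp}: $\prob(\neg E_i) = (1 - b^{-1})^r$, and $\prob(\neg E_i \cap \neg E_j) = (1 - 2b^{-1})^r$ for $i \neq j$ (counting functions $[r] \to [b]\setminus\{i,j\}$; this uses $b \geq 2$ when there are two distinct bins, which is fine since otherwise the sum over $i \neq j$ is empty). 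By inclusion–exclusion, $\prob(E_i \cap E_j) = 1 - 2(1-b^{-1})^r + (1-2b^{-1})^r$.

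Let me abbreviate $u := (1 - b^{-1})^r$ and $v := (1 - 2b^{-1})^r$. Then the diagonal contribution is $b\,u(1-u) \cdot$ wait, more precisely $b(1-u) - b(1-u)^2 = b(1-u)u$, and the off-diagonal covariance sum is $b(b-1)\big[(1 - 2u + v) - (1-u)^2\big] = b(b-1)(v - u^2)$. Hence
\begin{equation*}
\var X = b u (1-u) + b(b-1)(v - u^2) = bu - b u^2 + b(b-1) v - b(b-1) u^2 = bu + b(b-1)v - b^2 u^2.
\end{equation*}
So the second step reduces the whole lemma to the elementary inequality
\begin{equation*}
bu + b(b-1)v - b^2 u^2 \leq \frac{r(r-1)}{b},
\end{equation*}
equivalently $b^2 u^2 - bu - b(b-1)v + r(r-1)b^{-1} \geq 0$, in terms of $u = (1-b^{-1})^r$ and $v = (1-2b^{-1})^r$.

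The main obstacle is this last inequality, which is purely analytic in $r$ and $b$ (with $0 \leq r \leq b$, $b \geq 1$). I would handle it by bounding the three quantities against each other via convexity/Bernoulli-type estimates: since $v = (1-2b^{-1})^r$ and $u^2 = (1-b^{-1})^{2r} = ((1-b^{-1})^2)^r = (1 - 2b^{-1} + b^{-2})^r$, we have $v \leq u^2$ pointwise because $1 - 2b^{-1} \leq 1 - 2b^{-1} + b^{-2}$; therefore $b(b-1)v \leq b(b-1)u^2$ and
\begin{equation*}
\var X \leq bu + b(b-1)u^2 - b^2 u^2 = bu - b u^2 = bu(1-u) = b(1-b^{-1})^r\big(1 - (1-b^{-1})^r\big).
\end{equation*}
Now it remains to show $b(1-b^{-1})^r(1-(1-b^{-1})^r) \leq r(r-1)/b$. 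One route: use $1 - (1-b^{-1})^r \leq r/b$ (Bernoulli / union bound) and $(1-b^{-1})^r \leq 1$, which gives the bound $b \cdot 1 \cdot (r/b) = r$, not quite tight enough — so instead I would keep the factor $(1-b^{-1})^r$ and argue $b(1-b^{-1})^r \cdot (r/b) = r(1-b^{-1})^r$ and then show $r(1-b^{-1})^r \leq r(r-1)/b$, i.e. $(1-b^{-1})^r \leq (r-1)/b$ — false for small $r$. So the genuinely careful approach is needed: write $1 - (1-b^{-1})^r = \sum_{j=1}^{r}\binom{r}{j}(-1)^{j+1}b^{-j}$, keep the first two terms, and note the alternating tail, giving $1 - (1-b^{-1})^r \leq rb^{-1}$ and also $(1-b^{-1})^r \cdot (1-(1-b^{-1})^r)$ admits the sharper bound via $x(1-x)$ being maximized, or better, directly expand $bu(1-u)$ using $u \geq 1 - rb^{-1}$ so that $1 - u \leq rb^{-1}$ and $u(1-u) \leq (1-rb^{-1} + \binom{r}{2}b^{-2})\cdot$ hmm. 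The cleanest finish is: from $\var X \le bu(1-u)$ and the two-term inclusion–exclusion bounds $rb^{-1} - \binom{r}{2}b^{-2} \le 1-u \le rb^{-1}$ one gets $bu(1-u) = b(1-(1-u))(1-u) \le b\cdot 1\cdot(1-u) - b(1-u)^2$, and plugging $1-u \ge rb^{-1} - \binom r2 b^{-2}$ into the subtracted square while $1-u\le rb^{-1}$ into the linear term yields, after simplification, exactly $r(r-1)/b$. I expect this final elementary estimate — choosing which two-term bracketing of $(1-b^{-1})^r$ to feed into which slot — to be the only nontrivial part; everything else is bookkeeping.
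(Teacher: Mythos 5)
Your variance decomposition and the closed forms $\prob(\neg E_i) = (1-b^{-1})^r$ and $\prob(\neg E_i \wedge \neg E_j) = (1-2b^{-1})^r$ match the paper's, and the identity $\var X = bu + b(b-1)v - b^2u^2$ is correct. The fatal step is discarding the off-diagonal term via $v \leq u^2$: this replaces the large \emph{negative} covariance sum $b(b-1)(v-u^2)$ by zero, and what remains, $bu(1-u)$, is not bounded by $r(r-1)/b$. Concretely, for $r=2$ and $b=100$ one has $u=(0.99)^2$, so $bu(1-u)\approx 1.95$, while $r(r-1)/b = 0.02$ (the true variance is $0.0099$). The point is that for $r \ll b$ the diagonal contribution alone is of order $r$ (each indicator has variance roughly $r/b$), and only the near-exact cancellation with the off-diagonal sum, which is roughly $-r$, produces the $O(r^2/b)$ bound; any argument that drops one of the two terms cannot succeed. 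Your proposed finish --- feeding $1-u \le rb^{-1}$ into the linear term and $1-u \ge rb^{-1} - \binom{r}{2}b^{-2}$ into the subtracted square --- evaluates to $r - r^2b^{-1}\bigl(1-\tfrac{r-1}{2b}\bigr)^2$, which is still $\Theta(r)$, not $r(r-1)/b$.

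The paper keeps both terms and extracts the cancellation with the mean value theorem. It rewrites the variance as
\begin{equation*}
b^2\left[\left(1-2b^{-1}\right)^r - \left(1-b^{-1}\right)^{2r}\right] + b\left[\left(1-b^{-1}\right)^r - \left(1-2b^{-1}\right)^r\right],
\end{equation*}
observes that the base points $1-2b^{-1}$ and $(1-b^{-1})^2$ differ by exactly $b^{-2}$ (and $1-2b^{-1}$, $1-b^{-1}$ by $b^{-1}$), so the two brackets equal $-r\xi_1^{r-1}b^{-2}$ and $r\xi_2^{r-1}b^{-1}$ for intermediate points $\xi_1, \xi_2$; after the prefactors this is $r(\xi_2^{r-1}-\xi_1^{r-1}) \le r\left[\left(1-b^{-1}\right)^{r-1} - \left(1-2b^{-1}\right)^{r-1}\right]$, and one further application of the mean value theorem gives $r(r-1)b^{-1}\xi_3^{r-2} \le r(r-1)/b$. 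To salvage your route you would have to retain $v-u^2$ and exploit that its two bases differ by only $b^{-2}$, which is essentially the paper's argument.
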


\begin{proof}
Note that for $r \leq 1$: $\var X = 0$ because $X$ is constant. For $r \geq 2$:
\begin{eqnarray*}
    \var X & = & \expect X^2 - (\expect X)^2 = \sum_{i,j \in [b]} \prob (E_i \wedge E_j) - (\prob E_i) (\prob E_j) \\
    & = & \sum_{i,j \in [b]} (1- \prob(\neg E_i) - \prob(\neg E_j) + \prob(\neg E_i \wedge \neg E_j)) - (1-\prob(\neg E_i)) (1-\prob(\neg E_j)) \\
    & = & \sum_{i,j \in [b]} \prob(\neg E_i \wedge \neg E_j) - \prob(\neg E_i) \prob(\neg E_j) \\
    & = & \sum_{i \neq j \in [b]} \left(1-2 b^{-1}\right)^r - \left(1-b^{-1}\right)^{2r} + \sum_{i \in [b]} \left( 1-b^{-1} \right)^r -  \left( 1-b^{-1} \right)^{2r} \\
    & = & b (b-1) \left[ \left(1-2 b^{-1}\right)^r - \left(1-b^{-1}\right)^{2r} \right] +  b \left[ \left( 1-b^{-1} \right)^r -  \left( 1-b^{-1} \right)^{2r} \right] \\
    & = & b^2 \left[ \left(1-2 b^{-1}\right)^r - \left(1-b^{-1}\right)^{2r} \right] +  b \left[ \left( 1- b^{-1} \right)^r -  \left( 1-2 b^{-1} \right)^{r} \right] \\
    & = & -r \xi_1^{r-1} + r \xi_2^{r-1} \textrm{ for some } \xi_1 \in \left(1-2b^{-1},\left(1-b^{-1}\right)^2\right), \xi_2 \in \left(1-2b^{-1},1-b^{-1}\right) \\
    & \leq & -r \left(1-2b^{-1}\right)^{r-1} + r \left(1-b^{-1}\right)^{r-1} \\
    & = & r (r-1) b^{-1} \xi_3^{r-2} \textrm{ for some } \xi_3 \in \left(1-2b^{-1},1-b^{-1}\right) \\
    & \leq & \frac{r(r-1)}{b} \textrm{\qedhere}
\end{eqnarray*}%
\end{proof}
The lines where the variables $\xi_i$ were introduced follow from the application of the mean value theorem.
The above is a stronger version of the result by Kane et al.~\cite{kane2010}[Lem.~1]. Their result has the restriction that $r \geq 100$ and a superfluous factor of $4$.

Interestingly, it is possible to obtain a similar result for $k$-independent balls into bins. For that let $\Omega'$ be a probability space of functions from $[r]$ to $[b]$ where 
\[
    \prob_{\omega \sim \Omega'} \left( \bigwedge_{i \in I} \omega(i) = x(i) \right) = r^{-\size{I}}
\]
for all $I \subset [r]$, $\size{I} \leq k$ and all $x : I \rightarrow [b]$.
As before let us denote $X'(\omega) := \size{\omega([r])}$ the number of bins hit by the $r$ balls. Then the expectation (resp. variance) of $X'$ approximates that of $X$ with increasing independence $k$, more precisely: 
\begin{lemma}
\label{le:approx_bin_balls}
If $\varepsilon \leq e^{-2}$ and $k \geq 1 + 5 \ln (b \eps^{-1}) (\ln (\ln (b \eps^{-1})))^{-1}$ then: 
\begin{align*}
    \size{\expect_{\omega' \in \Omega'} X'(\omega') - \expect_{\omega \in \Omega} X(\omega)} & \leq \varepsilon r &
    \size{\var_{\omega' \in \Omega'} X'(\omega') - \var_{\omega \in \Omega} X(\omega)} & \leq \varepsilon^2 \, \textrm{.}
\end{align*}
\end{lemma}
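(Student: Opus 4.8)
The plan is to write $X(\omega)=\sum_{i\in[b]}[\,i\in\omega([r])\,]$ as in Lemma~\ref{le:ind_bin_balls_exp} and to reduce everything to a comparison, between the two spaces, of the one-bin miss probability $q:=\prob(i\notin\omega([r]))$ and the two-bin miss probability $q_2:=\prob(i,j\notin\omega([r]))$ for $i\neq j$; by symmetry these do not depend on the chosen indices. Repeating the inclusion--exclusion bookkeeping of Lemmas~\ref{le:ind_bin_balls_exp} and~\ref{le:ind_bin_balls_var} gives, in \emph{either} space, $\expect X=b(1-q)$ and $\var X=b(b-1)(q_2-q^2)+b(q-q^2)$. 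Hence, writing primes for quantities computed in $\Omega'$, it suffices to bound $\size{q-q'}$ and $\size{q_2-q_2'}$: the expectation difference is exactly $b\size{q-q'}$, and the variance difference is a combination of $\size{q_2-q_2'}$, $\size{q-q'}$ and $\size{q^2-q'^2}\le 2\size{q-q'}$ with coefficients $b(b-1)$ and $b$.

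The core estimate is a truncated inclusion--exclusion (Bonferroni) bound. The event $i\in\omega([r])$ is the union $\bigcup_{s\in[r]}\{\omega(s)=i\}$, and for any $S\subseteq[r]$ with $\size{S}\le k$ the joint probability $\prob_{\Omega'}(\bigwedge_{s\in S}\omega(s)=i)=b^{-\size{S}}$ agrees with its fully independent value. Consequently (assuming $k<r$, since otherwise $\Omega'$ restricted to any $r$ coordinates is uniform and induces the same law on $X$ as $\Omega$), for a fixed even $m\le k-1$ the Bonferroni partial sums of orders $m$ and $m+1$ — which involve only intersections of at most $m+1\le k$ of the events — are identical in the two spaces and sandwich both $1-q$ and $1-q'$ inside an interval of width at most $\binom{r}{m+1}b^{-(m+1)}$; the same argument applied to $\bigcup_{s\in[r]}\{\omega(s)\in\{i,j\}\}$, where a single ball lands in $\{i,j\}$ with probability $2/b$, yields $\size{q_2-q_2'}\le\binom{r}{m+1}(2/b)^{m+1}$. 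Using $\binom{r}{m+1}\le r^{m+1}/(m+1)!$ and $r\le b$, this gives $\size{\expect_{\Omega'}X'-\expect_\Omega X}\le r/(m+1)!$ and $\size{\var_{\Omega'}X'-\var_\Omega X}\le 2^{m+3}b^2/(m+1)!$.

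It remains to take $m$ the largest even integer with $m\le k-1$ (so $m\ge k-2$) and to verify that the hypothesis $k\ge 1+5\ln(b\eps^{-1})(\ln\ln(b\eps^{-1}))^{-1}$ forces $(m+1)!\ge 2^{m+3}(b\eps^{-1})^2$. This single inequality implies both targets at once — it gives $r/(m+1)!\le\eps r$ (using $b\ge1$) and $2^{m+3}b^2/(m+1)!\le\eps^2$ — and it is here that the combination $b\eps^{-1}$ enters. Via $(m+1)!\ge((m+1)/e)^{m+1}$ it reduces to an inequality of the form $(m+1)\ln\frac{m+1}{2e}\ge 2\ln(b\eps^{-1})+\bigo(1)$, and since $m+1\gtrsim 5\ln(b\eps^{-1})/\ln\ln(b\eps^{-1})$ makes $\ln(m+1)\approx\ln\ln(b\eps^{-1})$, the left-hand side is $\approx 5\ln(b\eps^{-1})$, comfortably larger. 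I expect this final arithmetic to be the main obstacle: one must deal with the parity/floor in the choice of $m$, with the lower-order terms in the Stirling estimate, and with confirming that the constant $5$ leaves enough room across the whole admissible range — the assumption $\eps\le e^{-2}$ is what guarantees $\ln(b\eps^{-1})\ge 2$ and hence keeps $\ln\ln(b\eps^{-1})$ bounded away from zero, so that the hypothesis and this estimate are meaningful.
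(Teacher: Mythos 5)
Your argument is correct in substance, but it takes a genuinely different route from the paper, which simply defers to Kane et al.\ [Lem.~2]: there one writes the indicator that bin $i$ is hit as a function of the bin count $Y_i$, checks that $\expect Y_i^j = \expect (Y_i')^j$ for all $j \leq k$, and approximates that function by a degree-$k$ polynomial, bounding the expected approximation error. Your truncated inclusion--exclusion (Bonferroni) argument is a concrete, self-contained instance of the same philosophy --- the truncated inclusion--exclusion sum is itself a polynomial of degree $m+1\leq k$ in the individual ball indicators --- but it avoids importing an external approximation lemma and delivers explicit constants directly; the price is that it is tailored to the two specific statistics $q$ and $q_2$, whereas the polynomial-approximation method generalizes to other functionals of the bin counts. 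Your closing arithmetic does go through: with $N := m+1 \geq 5L/\ln L$ for $L := \ln(b\eps^{-1}) \geq 2$, the bound $\ln N! \geq N\ln N - N$ reduces the target $(m+1)! \geq 2^{m+3}(b\eps^{-1})^2$ to $5\bigl(1 - \tfrac{\ln\ln L}{\ln L} - \tfrac{O(1)}{\ln L}\bigr) \geq 2 + \tfrac{\ln 4}{L}$, whose left side is minimized around $L = e^e$ at roughly $3.0$ while the right side never exceeds $2.7$. One genuine but repairable imprecision: a $k$-wise independent space need not be exchangeable in the balls or symmetric in the bins, so the miss probabilities $q'_i$ and $q'_{2,ij}$ in $\Omega'$ may depend on the indices, and the clean formulas $\expect X' = b(1-q')$ and $\var X' = b(b-1)(q_2'-q'^2)+b(q'-q'^2)$ are not literally available. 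You should instead keep the sums $\sum_i (1-q'_i)$ and $\sum_{i\neq j}(q'_{2,ij}-q'_iq'_j)+\sum_i(q'_i-q'^2_i)$, observe that your Bonferroni sandwich bounds \emph{every} $q'_i$ (resp.\ $q'_{2,ij}$) within $\binom{r}{m+1}b^{-(m+1)}$ (resp.\ $\binom{r}{m+1}(2/b)^{m+1}$) of the common independent-case values --- the partial sums are determined by the $\leq k$-wise marginals alone and hence are index-independent --- and then apply the triangle inequality term by term; this yields exactly the bounds you state.
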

This has been shown\footnote{Without the explicit constants mentioned in here.} by Kane et al.~\cite{kane2010}[Lem.~2]. The proof relies on the fact that
$X = \sum_{i \in [b]} \max(1,Y_i)$ where $Y_i$ denotes the random variable that counts the number of balls in bin $i$.
It is possible to show that $\expect (Y_i)^j = \expect (Y'_i)^j$ for all $j \leq k$ (where $Y'_i$ denotes the same notion over $\Omega'$).
Their approach is to approximate $\max(1,\cdot)$ with a polynomial $g$ of degree $k$.
Since $\expect g(Y_i) = \expect g(Y'_i)$ they can estimate the distance between $\expect X$ and $\expect X'$ by bounding the expectation of each approximation error: $g(Y_i) - \max(1,Y_i)$. Obviously, larger degree polynomials (and hence increased independence) allow better approximations. The reasoning for the variance is analogous.

\begin{lemma}\label{le:balls_and_bins_sum}
If $k \geq \const{c:approx_bin_balls_1} \ln b + \const{c:approx_bin_balls_2}$ then:
\[
    L := \prob_{\omega' \in \Omega'} \left( \size{ X'(\omega') - \rho(r)} > 9 b^{-1/2} r \right) \leq 2^{-6}
\]
\end{lemma}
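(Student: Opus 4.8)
The plan is to reduce to the fully independent balls-and-bins process and apply Chebyshev's inequality, using Lemma~\ref{le:approx_bin_balls} to transfer the first two moments from $\Omega$ to $\Omega'$ at a cost that $k \in \Theta(\ln b)$ makes negligible. Concretely, I would run a single Chebyshev bound on $X'$ whose mean and variance are controlled via Lemmas~\ref{le:ind_bin_balls_exp}, \ref{le:ind_bin_balls_var} and~\ref{le:approx_bin_balls}, after first peeling off the degenerate cases.

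First I would dispose of the trivial cases. If $r = 0$, then $X'(\omega') = 0 = \rho(0)$ deterministically, so the event is empty. If $b \le 81$, then $9b^{-1/2} \ge 1$, and since for $r \ge 1$ both $X'(\omega')$ and $\rho(r) = b(1-(1-b^{-1})^r)$ lie in the interval $[1,r]$ (the lower bound on $\rho$ from $(1-b^{-1})^r \le 1-b^{-1}$, the upper from Bernoulli's inequality $(1-b^{-1})^r \ge 1 - rb^{-1}$), one has $\size{X'(\omega') - \rho(r)} \le r-1 < r \le 9 b^{-1/2} r$ pointwise, so again $L = 0$. Hence from now on I may assume $r \ge 1$ and $b \ge 82$.

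Next I would apply Lemma~\ref{le:approx_bin_balls} with $\eps := \tfrac14 b^{-1/2}$. Its hypotheses are met: $\eps \le e^{-2}$ because $b \ge 82$; and since $b\eps^{-1} = 4 b^{3/2}$ gives $\ln(b\eps^{-1}) \le 2\ln b$ and $\ln\ln(b\eps^{-1}) \ge 1$ for $b \ge 82$, the requirement $k \ge 1 + 5\ln(b\eps^{-1})(\ln\ln(b\eps^{-1}))^{-1}$ is implied for an appropriate choice of the constants $\const{c:approx_bin_balls_1}$ and $\const{c:approx_bin_balls_2}$ (e.g.\ already by $k \ge 10\ln b + 1$). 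Together with $\expect_\Omega X = \rho(r)$ and $\var_\Omega X \le r(r-1)b^{-1}$ this gives
\[
    \size{\expect_{\omega' \sim \Omega'} X'(\omega') - \rho(r)} \le \eps r
    \qquad\text{and}\qquad
    \var_{\omega' \sim \Omega'} X'(\omega') \le \frac{r(r-1)}{b} + \eps^2 \le \frac{r^2}{b} + \frac{r^2}{16b} = \frac{17 r^2}{16 b},
\]
where the last inequality uses $\eps^2 = (16b)^{-1} \le r^2(16b)^{-1}$. Since $\size{X'(\omega') - \rho(r)} > 9 b^{-1/2} r$ forces $\size{X'(\omega') - \expect_{\omega'} X'} > 9 b^{-1/2} r - \eps r = \tfrac{35}{4} b^{-1/2} r$, Chebyshev's inequality yields
\[
    L \le \prob_{\omega' \sim \Omega'}\left( \size{X'(\omega') - \expect_{\omega'} X'} > \tfrac{35}{4} b^{-1/2} r \right) \le \frac{\var_{\omega'} X'(\omega')}{(35/4)^2 b^{-1} r^2} \le \frac{17 r^2 (16b)^{-1}}{(1225/16) b^{-1} r^2} = \frac{17}{1225} < 2^{-6},
\]
which is the claim.

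I do not expect a genuine obstacle here: the argument is a textbook Chebyshev estimate once the moment-transfer lemma is in hand. The only delicate part is the bookkeeping of constants — one must pick the approximation accuracy $\eps$ as a small enough multiple of $b^{-1/2}$ so that the induced bias $\eps r$ is negligible next to the target deviation $9 b^{-1/2} r$ \emph{and} the resulting Chebyshev quotient still comes out below $2^{-6}$, while simultaneously checking that this $\eps$ only forces $k = \Theta(\ln b)$ rather than something larger. In fact Lemma~\ref{le:approx_bin_balls} needs $k$ only of order $\ln b/\ln\ln b$ for this $\eps$, so there is comfortable slack.
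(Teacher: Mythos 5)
Your proof is correct and takes essentially the same route as the paper's: apply Lemma~\ref{le:approx_bin_balls} with $\eps$ of order $b^{-1/2}$, absorb the resulting bias $\eps r$ in the mean via the triangle inequality, bound $\var X' \leq \var X + \eps^2 \lesssim r^2/b$, and finish with Chebyshev. The paper simply chooses $\eps = \min(e^{-2}, b^{-1/2})$ so that the deviation threshold becomes $8\sqrt{\var X'}$ and the Chebyshev quotient is exactly $2^{-6}$, which also removes the need for your separate treatment of small $b$.
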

\begin{proof}
This follows from Lemma~\ref{le:ind_bin_balls_exp}, \ref{le:ind_bin_balls_var} and the previous lemma for $\eps = \min (e^{-2}, b^{-1/2})$ in particular:
$\var X' \leq \var X + \frac{1}{b} \leq \frac{r^2}{b}$ and hence:
\begin{eqnarray*}
    L & \leq & \prob_{\omega' \in \Omega'} \left( \size{ X'(\omega') - \expect X'} + \size{\expect X' - \rho(r)} \geq 9 b^{-1/2} r \right) \\
    & \leq & \prob_{\omega' \in \Omega'} \left( \size{ X'(\omega') - \expect X'} + b^{-1/2} r \geq 9 b^{-1/2} r \right) \\
    & \leq & \prob_{\omega' \in \Omega'} \left( \size{ X'(\omega') - \expect X'} \geq 8 b^{-1/2} r \right) \\
    & \leq & \prob_{\omega' \in \Omega'} \left( \size{ X'(\omega') - \expect X'} \geq 8 \sqrt{ \var X'} \right) \leq 2^{-6}
\end{eqnarray*}
where the last line follows from Chebychev's inequality.
\end{proof}
\section{Table of Constants}
\begin{table}[H]
\centering
\begin{tabular}{l l | l l}
\toprule
Constant & References & Constant & References \\ 
\midrule
$\makeconst{c:dev_bound} := e^2 + e^3 + (e-1)$ & Lemma~\ref{le:deviation_bound} &
$\makeconst{c:approx_bin_balls_1} := \frac{15}{2}$ & Lemma~\ref{le:approx_bin_balls} \\
$\makeconst{c:approx_bin_balls_2} := 16$ & Lemma~\ref{le:approx_bin_balls} &
$\makeconst{c:delta} := 3^2 2^{23}$ & Lemma~\ref{le:e_1} and \ref{le:e_4} \\
$\makeconst{c:space_bound} := \ceil{\const{c:dev_bound} + 3} = 33$ & Lemma~\ref{le:cut_level} &
$\makeconst{c:eps} := 4$ & Lemma~\ref{le:median} \\
$\makeconst{c:pre_bins} := 2^5$ & Lemma~\ref{le:e_3} \\
\bottomrule
\end{tabular}
\caption{Table of Constants}\label{tab:consts}
\end{table}
\section{Formalization\label{sec:formalization}}
As mentioned in the introduction the proofs in this work have been machine-checked using Isabelle. They are available~\cite{Distributed_Distinct_Elements-AFP, Expander_Graphs-AFP} in the AFP (Archive of Formal Proofs)~\cite{afp} --- a site hosting formal proofs verified by Isabelle. Table~\ref{tab:formalization} references the corresponding facts in the AFP entries. The first column refers to the lemma in this work. The second is the corresponding name of the fact in the formalization. The formalization can be accessed in two distinct forms: As a source repository with distinct theory files, as well as two ``literate-programming-style'' PDF documents with descriptive text alongside the Isabelle facts (optionally with the proofs). The latter is much more informative. The third column of the table refers to the file name
\footnote{\texttt{Distributed\_Distinct\_Elements} is abbreviated by \texttt{DDE} and \texttt{Without} with \texttt{WO}.}
of the corresponding source file, while the last column contains the reference of the AFP entry, including the section in the PDF versions.
\begin{longtable}{l p{4.5cm} l l l}
\toprule
Lemma & Formalized Entity & Theory & Src. \\
\midrule
Thm.~\ref{th:expander_chernoff} & \multicolumn{3}{p{11.2cm}}{This theorem from Impagliazzo and Kabanets was stated for motivational reasons and is never used in any of the following results, hence it is not formalized.}\\
Thm.~\ref{th:expander_walk_hitting} & \textbf{theorem} \emph{hitting{\isacharunderscore}property} & \isatheory{Expander_Graphs_Walks} & \cite[\S 9]{Expander_Graphs-AFP} \\
Thm.~\ref{th:expander_chernoff_improved} & \textbf{theorem} \emph{kl{\isacharunderscore}chernoff{\isacharunderscore}property} & \isatheory{Expander_Graphs_Walks}& \cite[\S 9]{Expander_Graphs-AFP} \\
Lem.~\ref{le:expander_chernoff} & \textbf{lemma} \emph{walk{\isacharunderscore}tail{\isacharunderscore}bound} & \isatheory{DDE_Tail_Bounds} & \cite[\S 5]{Distributed_Distinct_Elements-AFP} \\
Lem.~\ref{le:deviation_bound} & \textbf{lemma} \emph{deviation{\isacharunderscore}bound} & \isatheory{DDE_Tail_Bounds} & \cite[\S 5]{Distributed_Distinct_Elements-AFP} \\
Lem.~\ref{le:histind} (1) & \textbf{lemma} \emph{single{\isacharunderscore}result} & \isatheory{DDE_Inner_Algorithm} & \cite[\S 6]{Distributed_Distinct_Elements-AFP} \\
Lem.~\ref{le:histind} (2) & \textbf{lemma} \emph{merge{\isacharunderscore}result} & \isatheory{DDE_Inner_Algorithm} & \cite[\S 6]{Distributed_Distinct_Elements-AFP} \\
Lem.~\ref{le:cut_level} & \textbf{lemma} \emph{cutoff{\isacharunderscore}level} & \isatheory{DDE_Cutoff_Level} & \cite[\S 8]{Distributed_Distinct_Elements-AFP} \\
Lem.~\ref{le:e_1} & \textbf{lemma} \emph{e{\isacharunderscore}1} & \isatheory{DDE_Accuracy_WO_Cutoff} & \cite[\S 7]{Distributed_Distinct_Elements-AFP} \\
Lem.~\ref{le:e_2} & \textbf{lemma} \emph{e{\isacharunderscore}2} & \isatheory{DDE_Accuracy_WO_Cutoff} & \cite[\S 7]{Distributed_Distinct_Elements-AFP} \\
Lem.~\ref{le:e_3} & \textbf{lemma} \emph{e{\isacharunderscore}3} & \isatheory{DDE_Accuracy_WO_Cutoff} & \cite[\S 7]{Distributed_Distinct_Elements-AFP} \\
Lem.~\ref{le:e_4} & \textbf{lemma} \emph{e{\isacharunderscore}4} & \isatheory{DDE_Accuracy_WO_Cutoff} & \cite[\S 7]{Distributed_Distinct_Elements-AFP} \\
Lem.~\ref{le:median_single_1} & \textbf{lemma} \newline \hphantom{i} \emph{accuracy{\isacharunderscore}without{\isacharunderscore}cutoff} & \isatheory{DDE_Accuracy_WO_Cutoff} & \cite[\S 7]{Distributed_Distinct_Elements-AFP} \\
Lem.~\ref{le:median_single} & \textbf{lemma} \emph{accuracy{\isacharunderscore}single} & \isatheory{DDE_Accuracy} & \cite[\S 9]{Distributed_Distinct_Elements-AFP} \\
Lem.~\ref{le:median} & \textbf{lemma} \emph{estimate{\isacharunderscore}result{\isacharunderscore}1} & \isatheory{DDE_Accuracy} & \cite[\S 9]{Distributed_Distinct_Elements-AFP} \\
Thm.~\ref{th:overall} & \textbf{lemma} \emph{estimate{\isacharunderscore}result} & \isatheory{DDE_Accuracy} & \cite[\S 9]{Distributed_Distinct_Elements-AFP} \\ 
Thm.~\ref{th:final_result} (1) & \textbf{theorem} \emph{correctness} & \isatheory{DDE_Outer_Algorithm} & \cite[\S 10]{Distributed_Distinct_Elements-AFP} \\
Thm.~\ref{th:final_result} (2) & \textbf{theorem} \emph{space{\isacharunderscore}usage} & \isatheory{DDE_Outer_Algorithm} & \cite[\S 10]{Distributed_Distinct_Elements-AFP} \\
Thm.~\ref{th:final_result} (3) & \textbf{theorem} \newline \hphantom{i} \emph{asymptotic{\isacharunderscore}space{\isacharunderscore}complexity} & \isatheory{DDE_Outer_Algorithm} & \cite[\S 10]{Distributed_Distinct_Elements-AFP} \\
Lem.~\ref{le:ind_bin_balls_exp} & \textbf{lemma} \emph{exp{\isacharunderscore}balls{\isacharunderscore}and{\isacharunderscore}bins} & \isatheory{DDE_Balls_And_Bins} & \cite[\S 4]{Distributed_Distinct_Elements-AFP} \\
Lem.~\ref{le:ind_bin_balls_var} & \textbf{lemma} \emph{var{\isacharunderscore}balls{\isacharunderscore}and{\isacharunderscore}bins} & \isatheory{DDE_Balls_And_Bins} & \cite[\S 4]{Distributed_Distinct_Elements-AFP} \\
Lem.~\ref{le:approx_bin_balls} (1) & \textbf{lemma} \emph{exp{\isacharunderscore}approx} & \isatheory{DDE_Balls_And_Bins} & \cite[\S 4]{Distributed_Distinct_Elements-AFP} \\ 
Lem.~\ref{le:approx_bin_balls} (2) & \textbf{lemma} \emph{var{\isacharunderscore}approx} & \isatheory{DDE_Balls_And_Bins} & \cite[\S 4]{Distributed_Distinct_Elements-AFP} \\ 
Lem.~\ref{le:balls_and_bins_sum} & \textbf{lemma} \emph{deviation{\isacharunderscore}bound} & \isatheory{DDE_Balls_And_Bins} & \cite[\S 4]{Distributed_Distinct_Elements-AFP} \\
\bottomrule
\\
\caption{Reference to the formal entities}\label{tab:formalization}
\end{longtable}

\end{document}